\theoremstyle{plain}
\newtheorem{Thm}{Theorem}
\newtheorem{Prop}{Proposition}[section]
\newtheorem{Lem}{Lemma}[section]
\newtheorem{Def}{Definition}
\theoremstyle{remark}
\newtheorem{Rem}{Remark}[section]
\numberwithin{equation}{section}
\begin{document}
\author[P. Busch]{P. ~Busch\\
$\text{\rm\tiny Former  Affiliation: Department of Mathematics, University of Hull, UK}$\\
$\text{\rm\tiny Current: Department of Mathematics, University of York, UK}$\\
$\text{\rm\tiny E-mail address: {\tt paul.busch@york.ac.uk}}$\\
\hspace{1cm}\hfill\\
$\text{\rm\tiny Published in: {\em Mathematical Physics, Analysis and Geometry} {\bf 2} (1999) 83-106.}$\\
DOI: \href{http://dx.doi.org/10.1023/A:1009822315406}{10.1023/A:1009822315406}
}
\title{Stochastic Isometries in Quantum Mechanics}
\thanks{
The author wishes to thank Pekka Lahti for helpful comments on an earlier
version of the manuscript.}
\date{}
\keywords{Hilbert space, trace class, state operator stochastic map, isometry, quantum
mechanics, reversibility.}
\subjclass{Primary: 47D45; Secondary: 47D20; 81Q99; 81R99.}
\maketitle

\begin{abstract}
The class of stochastic maps, that is, linear, trace-preserving, positive
maps between the self-adjoint trace class operators of complex separable
Hilbert spaces plays an important role in the representation of reversible
dynamics and symmetry transformations. Here a characterization of the
isometric stochastic maps is given and possible physical applications are
indicated.
\end{abstract}

\section{Introduction}

The mathematical modeling of a quantum dynamical system is based on the dual
concepts of states and observables. Of particular importance for the
representation of dynamics and symmetries are structure-preserving bijective
maps of the sets of states and observables, respectively. Such maps,
referred to as affine automorphisms of states and Jordan automorphisms of
observables, respectively, have been quite thoroughly studied in the Hilbert
space model (\cite{Bar}, for a recent systematic account, see \cite{Cas})
and in the $C^{*}$-algebraic formulation of quantum theory \cite{Kad}. These
automorphisms are special instances of the larger classes of linear
isometries acting on the state space and the algebra of observables,
respectively, which are of interest in their own right. While isometries of
operator algebras were analyzed by Kadison in 1951 \cite{Kad2}, applications
of isometric transformations of states have been considered only in recent
years. The characterization of the class of isometric state transformations
in the Hilbert space model of quantum mechanics is the subject of the
present paper.

The paper is organized as follows. In Section 2 the notion of a linear state
transformation -- also called \textsl{stochastic map} -- in the Hilbert
space model of quantum mechanics is presented and some of its basic
properties are reviewed. An isometric map among the stochastic maps is
distinguished by a feature that turns out to be fundamental for the
subsequent considerations: it sends orthogonal pairs of states to orthogonal
pairs (Proposition \ref{pro2-orth}). In Section 3 the main result (Theorem 
\ref{th3-is}) concerning isometric stochastic maps is stated: any stochastic
isometry decomposes into a convex combination of pure stochastic isometries
onto mutually orthogonal ranges. In the case of a completely positive
stochastic isometry, all these pure isometries are induced by unitary maps
(Theorem \ref{th3-cp}). The decomposition of stochastic isometries involves
two steps (Propositions \ref{pro3-dec} and \ref{pro3-mix}), proofs of which
are carried out in Sections 4 and 5. In the first step one is led to
introduce the concept of \textsl{mixing} isometries (Definition 
\ref{def3-mix}), while the second step consists of an analysis of the structure
of mixing isometries.

Section 6 concludes with some general observations and an outline of
physical applications. First, a stochastic isometry is shown to allow an
interpretation as a model of a reversible state transformation even when
surjectivity is not given (Theorem \ref{th6-inv}). An isometric state
transformation leads to a reduction of the state space so that some
observables and symmetries can no longer be distinguished. This may be taken
as a reversible model of structure formation. Finally, a stochastic isometry
can be used in conjunction with a certain transformation of observables to
produce equivalent descriptions of the same physical system; these occur
naturally in the construction of extended models accounting for new
experiences with the given system.

\section{Stochastic Maps}

Let $\mathcal{H}$ and $\tilde{\mathcal{H}}$ be complex separable Hilbert
spaces with inner products $\left\langle \,\cdot \,|\,\cdot \,\right\rangle $%
, taken to be linear in the second argument and conjugate linear in the
first. Let $\mathcal{B}_1(\mathcal{H})$ be the complex Banach space of
linear operators of trace class on $\mathcal{H}$, with trace functional $%
\text{tr}[\cdot ]$ and trace norm $||\cdot ||_1$. Its dual space can be
identified with the Banach space $\mathcal{B}(\mathcal{H})$ of bounded
linear operators on $\mathcal{H}$. The self-adjoint\ parts of $\mathcal{B}_1(%
\mathcal{H})$ and $\mathcal{B}(\mathcal{H})$, $V:={\mathcal{B}}_1
({\mathcal{H}}%
)_s$ and $W:={\mathcal{B}}({\mathcal{H}})_s$, are real Banach spaces. By $V^{+}$
we denote the set of positive linear operators in $V$, the convex positive
cone of $V$. With reference to the quantum physical application, $V$ is
called \textsl{state space}, the elements of the subset ${\mathcal{S}}%
:=V^{+}\cap \{\rho \in V:\text{tr}[\rho ]=1\}$ are called \textsl{states}.
The set $\mathcal{S}$ is convex, its extreme points are given by the
orthogonal projections of rank 1. These are called \textsl{pure states} and
denoted $P_{\left. \varphi \right. }$, $\varphi \in \mathcal{H}$, $\varphi
\ne 0$. For $\tilde{\mathcal{H}}$, we denote the corresponding entities as $%
\tilde{V},$ $\tilde{W},\tilde{\mathcal{S}}$, etc. A linear map $T:V\to 
\tilde{V}$ is called a \textsl{stochastic map} (on $V$) if it is \textsl{%
positive} [$T(V^{+})\subset \tilde{V}^{+}$] and \textsl{trace preserving} [${%
\text{tr}}\circ T=\text{tr}$], or equivalently, if it sends states to states
[$T(\mathcal{S})\subseteq \tilde{\mathcal{S}}$]. A stochastic map that is
isometric [$||T(\rho )||_1=||\rho ||_1$, $\rho \in V$] will be called 
\textsl{stochastic isometry}.

The term operator will be used throughout to denote a linear operator. By $0$
and $I$ (or $I_{\mathcal{H}}$) we will denote the zero and identity
operators, respectively. According to the spectral theorem, any
self-adjoint\ operator\ $a$ admits a decomposition into positive and
negative parts, denoted $a=a_{+}-a_{-}$, where the ranges of $a_{+},a_{-}$
are mutually orthogonal subspaces. For a bounded operator $a$ we define $%
|a|:=(a^{*}a)^{1/2}$. If $a$ is self-adjoint, one has $|a|=a_{+}+a_{-}$. For 
$\rho \in V$ we have $\left\| \rho \right\| _1=\text{tr}\bigl[\rho _{+}\bigr]%
+\text{tr}\bigl[\rho _{-}\bigr]=\left\| \rho _{+}+\rho _{-}\right\| _1$, and
so 
\begin{equation}
\rho \in V^{+}\ \Longleftrightarrow \ \left\| \rho \right\| _1=\text{tr}[%
\rho ].  \label{eq2-pos}
\end{equation}

We recall some basic facts about stochastic maps and stochastic isometries.
A linear map $T:V\to \tilde{V}$ is called \textsl{contractive} [or a \textsl{%
contraction}], if for all $\rho \in V$, $\left\| T(\rho )\right\| _1\le
\left\| \rho \right\| _1$.

\begin{Lem}
\label{le2-stoch}Let $T:V\to \tilde{V}$ be a linear map. The following are
equivalent:\newline
{(i)}\ \ $T$ is trace preserving and positive (i.e. a stochastic map);%
\newline
{(ii)}\ $T$ is trace preserving and contractive;\newline
{(iii)} $T(\mathcal{S})\subseteq \tilde{\mathcal{S}}$.
\end{Lem}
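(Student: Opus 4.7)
The plan is to cycle through the three conditions using only two ingredients that are already available: the Jordan decomposition $\rho = \rho_+ - \rho_-$ of any $\rho \in V$ into mutually orthogonal positive parts, and the positivity criterion (\ref{eq2-pos}) which says that $\rho \in V^+$ iff $\|\rho\|_1 = \text{tr}[\rho]$.

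I would dispatch the equivalence (i) $\Leftrightarrow$ (iii) first. The direction (i) $\Rightarrow$ (iii) is immediate from the definitions. For the converse, rescaling reduces positivity to the statement for states: for nonzero $\rho \in V^+$ the normalized operator $\rho/\text{tr}[\rho]$ lies in $\mathcal{S}$ and is therefore mapped into $\tilde{\mathcal{S}}$, yielding both $T(\rho) \in \tilde V^+$ and $\text{tr}[T(\rho)] = \text{tr}[\rho]$. Applying this to $\rho_+$ and $\rho_-$ separately and using linearity extends trace preservation to all of $V$.

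For (i) $\Rightarrow$ (ii), positivity of $T$ makes $T(\rho_\pm) \in \tilde V^+$, so (\ref{eq2-pos}) gives $\|T(\rho_\pm)\|_1 = \text{tr}[T(\rho_\pm)] = \text{tr}[\rho_\pm]$, and the triangle inequality applied to $T(\rho) = T(\rho_+) - T(\rho_-)$ yields $\|T(\rho)\|_1 \leq \|\rho_+\|_1 + \|\rho_-\|_1 = \|\rho\|_1$.

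The step I expect to be the real content is (ii) $\Rightarrow$ (i), where positivity must be extracted from a purely metric hypothesis. Given $\rho \in V^+$, contractivity and trace preservation combine to give $\|T(\rho)\|_1 \leq \|\rho\|_1 = \text{tr}[\rho] = \text{tr}[T(\rho)]$, while the general estimate $\text{tr}[a] \leq \|a\|_1$ for self-adjoint trace-class $a$ (an immediate consequence of the Jordan decomposition of $a$) supplies the reverse bound. Hence $\|T(\rho)\|_1 = \text{tr}[T(\rho)]$, and a final appeal to (\ref{eq2-pos}) forces $T(\rho) \in \tilde V^+$, closing the cycle.
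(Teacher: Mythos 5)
Your proof is correct and follows exactly the route the paper intends: the paper dismisses this lemma with the single remark that it is ``a straightforward application of Eq.~(\ref{eq2-pos})'', and your argument is precisely that application, using the Jordan decomposition together with the criterion $\rho\in V^{+}\Leftrightarrow\|\rho\|_1=\mathrm{tr}[\rho]$ in each implication. The cycle (i)$\Leftrightarrow$(iii), (i)$\Rightarrow$(ii), (ii)$\Rightarrow$(i) closes all the equivalences, and each step is sound.
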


\noindent 
The proof is a straightforward application of Eq. (\ref{eq2-pos}).

\begin{Def}
\label{def2-orth}Positive elements $\rho ,\sigma \in V^{+}\backslash \{0\}$
are called \textsl{orthogonal}, $\rho \perp \sigma $, if $\rho \cdot \sigma
=0$. A stochastic map $T:V\rightarrow \tilde{V}$ is said to be \textsl{%
orthogonality preserving} if $\rho \perp \sigma $ implies $T(\rho )\perp
T(\sigma )$ for all $\rho ,\sigma \in \mathcal{S}$ [and thus for all $\rho
,\sigma \in V^{+}\backslash \{0\}$].
\end{Def}

By the spectral theorem, $\rho \perp \sigma $ exactly when $\rho $ and $%
\sigma $ are the positive and negative parts of $\rho -\sigma $,
respectively. Hence we have: 
\begin{equation}
\rho \perp \sigma \ \Longleftrightarrow \ \left\| \rho -\sigma \right\|
_1=\left\| \rho +\sigma \right\| _1,\ \ \rho ,\sigma \in V^{+}\backslash
\{0\}.  \label{eq2-orth}
\end{equation}

\begin{Prop}
\label{pro2-orth}Let $T:V\rightarrow \tilde{V}$ be a stochastic map. The
following are equivalent: \newline
{(i)}\ \ $T$ is an isometry;\newline
{(ii)}\ $T$ is orthogonality preserving;\newline
{(iii)} $T$ is orthogonality preserving for pairs of pure states.
\end{Prop}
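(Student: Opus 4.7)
The plan is to prove the equivalences via the cycle (i) $\Rightarrow$ (ii) $\Rightarrow$ (iii) $\Rightarrow$ (i), with equation (\ref{eq2-orth}) serving as the bridge between the metric property (isometry) and the algebraic property (orthogonality).

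For (i) $\Rightarrow$ (ii), I would take $\rho,\sigma\in V^{+}\setminus\{0\}$ with $\rho\perp\sigma$, rewrite this as $\left\|\rho-\sigma\right\|_1=\left\|\rho+\sigma\right\|_1$ using (\ref{eq2-orth}), push the equality through $T$ via linearity and the isometric property, and read off $T(\rho)\perp T(\sigma)$ by applying (\ref{eq2-orth}) in reverse. The reverse direction requires $T(\rho),T(\sigma)\in\tilde V^{+}\setminus\{0\}$, which holds because $T$ is positive and trace-preserving, so $\mathrm{tr}[T(\rho)]=\mathrm{tr}[\rho]>0$ (and similarly for $\sigma$). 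The implication (ii) $\Rightarrow$ (iii) is immediate, since pure states lie in $V^{+}\setminus\{0\}$.

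The substantive step is (iii) $\Rightarrow$ (i). Given $\rho\in V$, I would Jordan-decompose $\rho=\rho_{+}-\rho_{-}$ with $\rho_{+}\perp\rho_{-}$, and spectrally expand $\rho_{+}=\sum_{i}\lambda_{i}P_{\varphi_{i}}$ and $\rho_{-}=\sum_{j}\mu_{j}P_{\psi_{j}}$ with positive weights $\lambda_{i},\mu_{j}$. The orthogonality of the ranges of $\rho_{\pm}$ forces $\langle\varphi_{i}\,|\,\psi_{j}\rangle=0$, hence $P_{\varphi_{i}}\perp P_{\psi_{j}}$ for every $i,j$, so hypothesis (iii) yields $T(P_{\varphi_{i}})\cdot T(P_{\psi_{j}})=0$ termwise. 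Both spectral series converge to $T(\rho_{\pm})$ in trace norm because $T$ is a contraction (Lemma \ref{le2-stoch}); continuity of operator multiplication in trace norm then lets me interchange product and sum to obtain $T(\rho_{+})\cdot T(\rho_{-})=0$, i.e., $T(\rho_{+})\perp T(\rho_{-})$. Applying (\ref{eq2-orth}) to this pair together with trace preservation finishes the argument:
\[
\left\|T(\rho)\right\|_1=\left\|T(\rho_{+})-T(\rho_{-})\right\|_1=\left\|T(\rho_{+})+T(\rho_{-})\right\|_1=\mathrm{tr}[\rho_{+}]+\mathrm{tr}[\rho_{-}]=\left\|\rho\right\|_1.
\]
The boundary cases $\rho_{+}=0$ or $\rho_{-}=0$ are immediate from trace preservation.

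The only place where I go beyond formal manipulation of (\ref{eq2-orth}) is the upgrade from orthogonality of pure state pairs to orthogonality of the positive parts $T(\rho_{\pm})$, and the main obstacle is the convergence argument needed to exchange the operator product with the infinite spectral sums. This is resolved by the trace-norm boundedness of $T$ and the standard estimate $\|AB\|_1\le\|A\|_1\|B\|$, which make both $A\mapsto AT(\rho_{-})$ and $B\mapsto T(\rho_{+})B$ continuous for the relevant limits.
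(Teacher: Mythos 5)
Your proof is correct and follows exactly the route the paper intends: the paper dismisses this proposition with "a straightforward application of Eq.\ (\ref{eq2-orth})", and your argument is precisely that application spelled out, with the only nontrivial detail (passing from orthogonality of pure-state pairs to orthogonality of $T(\rho_{+})$ and $T(\rho_{-})$ via trace-norm convergence of the spectral sums and the estimate $\|AB\|_1\le\|A\|_1\|B\|$) handled correctly.
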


\noindent 
The proof is a straightforward application of Eq. (\ref{eq2-orth}).

\section{Stochastic Isometries\label{sec-stochis}}

In this section the main results concerning stochastic isometries will be
stated. The proofs will be developed in Sections 4 and 5.

A stochastic map is called \textsl{pure} if it sends pure states to pure
states. An isometric linear or antilinear map $U:\mathcal{H}\rightarrow 
\tilde{\mathcal{H}}$ will be called \textsl{unitary} or \textsl{antiunitary}%
, respectively. We first present the known case of surjective stochastic
isometries.

\begin{Prop}
\label{pro3-pure}Let $T:V\rightarrow \tilde{V}$ be a surjective stochastic
isometry. Then $T$ is of the form 
\begin{equation}
T(\rho )\ =\ U\rho \,U^{*},\ \ \rho \in V,  \label{eq3-pure}
\end{equation}
where $U:\mathcal{H}\rightarrow \tilde{\mathcal{H}}$ is unitary or
antiunitary.
\end{Prop}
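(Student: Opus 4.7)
The plan is to reduce the proposition to the classical Wigner theorem: I will show that $T$ restricts to an orthogonality-preserving bijection of pure states which preserves transition probabilities, then invoke Wigner to get the unitary or antiunitary $U$, and finally extend the identity $T(\rho)=U\rho U^{*}$ from pure states to all of $V$ by linearity and continuity.

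First I would check that $T^{-1}$ is itself a stochastic isometry. Linearity and isometry are immediate from bijectivity and $T$ being isometric. For the stochastic property, take any $\tilde\rho \in \tilde{\mathcal S}$ and write $\tilde\rho = T(\rho)$; since $\|\tilde\rho\|_{1}=1=\text{tr}[\tilde\rho]=\text{tr}[\rho]$ and isometry gives $\|\rho\|_{1}=1$, the equivalence (\ref{eq2-pos}) forces $\rho\in\mathcal S$. Hence $T^{-1}(\tilde{\mathcal S})\subseteq\mathcal S$, so by Lemma \ref{le2-stoch} the inverse is stochastic. Consequently $T$ is an affine bijection of $\mathcal S$ onto $\tilde{\mathcal S}$; since affine bijections of convex sets send extreme points to extreme points, $T$ carries pure states bijectively to pure states. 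Thus for every unit $\varphi\in\mathcal H$ there is a unit $\tilde\varphi\in\tilde{\mathcal H}$, determined up to phase, with $T(P_{\varphi})=P_{\tilde\varphi}$.

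Next I would recover the transition probabilities. A direct spectral computation in the two-dimensional subspace spanned by unit vectors $\varphi,\psi$ gives
\begin{equation*}
\|P_{\varphi}-P_{\psi}\|_{1}\ =\ 2\sqrt{1-|\langle\varphi|\psi\rangle|^{2}}.
\end{equation*}
Applying this to the pair $T(P_{\varphi}),T(P_{\psi})=P_{\tilde\varphi},P_{\tilde\psi}$ and using $\|T(P_{\varphi})-T(P_{\psi})\|_{1}=\|P_{\varphi}-P_{\psi}\|_{1}$ yields $|\langle\tilde\varphi|\tilde\psi\rangle|=|\langle\varphi|\psi\rangle|$ for all unit $\varphi,\psi$. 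By Wigner's theorem, the induced ray map $\varphi\mapsto\tilde\varphi$ lifts to a unitary or antiunitary $U:\mathcal H\to\tilde{\mathcal H}$, and bijectivity of $U$ is ensured by surjectivity of $T$ on pure states. Then $T(P_{\varphi})=P_{U\varphi}=UP_{\varphi}U^{*}$ for every unit vector.

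Finally I would extend to arbitrary $\rho\in V$ via the spectral theorem: write $\rho=\sum_{n}\lambda_{n}P_{\varphi_{n}}$ with $\lambda_{n}\in\mathbb R$, $\{\varphi_{n}\}$ orthonormal, and $\sum|\lambda_{n}|<\infty$, a series convergent in trace norm. Since $T$ is linear and continuous (isometric), and since $\sigma\mapsto U\sigma U^{*}$ is likewise linear and trace-norm continuous, termwise application gives $T(\rho)=\sum_{n}\lambda_{n}UP_{\varphi_{n}}U^{*}=U\rho U^{*}$, which is (\ref{eq3-pure}). The only genuinely nontrivial step is the appeal to Wigner's theorem; everything else is verification that its hypotheses are met, the one subtle point being the argument above that an isometric surjection of $V$ automatically takes $\mathcal S$ onto $\tilde{\mathcal S}$ so that pure states go to pure states.
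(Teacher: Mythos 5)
Your argument is correct, and its skeleton coincides with the paper's: both first observe that $T^{-1}$ is a trace-preserving isometry, hence by Lemma \ref{le2-stoch} a stochastic map, so that $T$ is an affine bijection of $\mathcal{S}$ onto $\tilde{\mathcal{S}}$ and therefore pure. Where you diverge is the middle step. The paper at that point simply cites Theorem 2.3.1 of \cite{Dav}, which says that \emph{any} pure stochastic map (surjective or not, isometric or not) is of the form $\rho\mapsto U\rho U^{*}$ with $U$ linear or antilinear; the isometry and surjectivity are then only used afterwards to upgrade $U^{*}U=I$ and $UU^{*}=I$. You instead prove the needed input by hand: the identity $\|P_{\varphi}-P_{\psi}\|_{1}=2\sqrt{1-|\langle\varphi|\psi\rangle|^{2}}$ (which is correct --- the difference is traceless of rank $\le 2$ with eigenvalues $\pm\sqrt{1-|\langle\varphi|\psi\rangle|^{2}}$) converts the isometry into preservation of transition probabilities, and then the classical Wigner theorem \cite{Bar} supplies $U$. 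Your route is more self-contained and makes explicit \emph{why} the ray map is angle-preserving, but it genuinely uses both the isometry and the surjectivity of the induced ray map to invoke Wigner, whereas Davies' result needs neither; this is why the paper's citation also does the heavy lifting later, in Proposition \ref{pro5-mix4}, for the non-surjective components $T_k$. The final extension from pure states to all of $V$ by spectral decomposition and trace-norm continuity is routine and matches what the paper leaves implicit.
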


\begin{proof}
Noting that the inverse of a surjective stochastic isometry $T$ is a trace
preserving isometric map on $V$ and hence, by Lemma \ref{le2-stoch}, a
stochastic isometry, it is easy to verify that $T$ is pure. By Theorem 2.3.1
of \cite{Dav} it follows that $T$ is of the form $T(\rho )=U\rho \,U^{*}$, $%
\rho \in V$, where $U$ is linear or antilinear. Furthermore, $U^{*}U=I_{%
\mathcal{H}}$ since $T$ is trace-preserving, and $UU^{*}=I_{\tilde{\mathcal{H%
}}}$ by surjectivity of $T$. Hence $U$ is unitary or antiunitary.%
\end{proof}

If $\mathcal{H}=\tilde{\mathcal{H}}$ is finite dimensional then any
stochastic isometry $T:V\rightarrow V$ is surjective, hence pure and of the
form (\ref{eq3-pure}).

An affine automorphism of the convex set of states $\mathcal{S}$ is a
bijective affine map of $\mathcal{S}$ onto itself. Such maps are taken to
represent symmetries on the set of states. Any affine map on $\mathcal{S}$
has a unique extension to a linear map on $V$, which is a stochastic map. An
affine automorphism of $\mathcal{S}$ extends to a surjective stochastic
isometry. Thus it is seen that in the case $\mathcal{H}=\tilde{\mathcal{H}},$
Proposition \ref{pro3-pure} reproduces the Wigner-Kadison characterization
of symmetries \cite{Bar},\cite{Kad},\cite{Cas}.

\begin{Prop}
\label{pro3-is}Let $\mathcal{H},\tilde{\mathcal{H}}$ be complex separable
Hilbert spaces such that $\tilde{\mathcal{H}}$ can be presented as a direct
sum of mutually orthogonal closed subspaces, $\tilde{\mathcal{H}}=\left(
\oplus _{k=1}^N\tilde{\mathcal{H}}_k\right) \oplus \tilde{\mathcal{H}}_0$, $%
N\in \Bbb{N}\cup \{\infty \}$, with $\mathrm{\dim }\tilde{\mathcal{H}}_k=%
\mathrm{\dim }\mathcal{H}$, $k=1,2,\cdots ,N$. Let $U_k:{\mathcal{H}}\to 
\tilde{\mathcal{H}}_k$ be unitary or antiunitary maps, $w_1,w_2,\cdots \in
(0,1)$, $\sum_kw_k=1$. Then 
\begin{equation}
T(\rho )=\sum_{k=1}^Nw_kU_k\rho \,U_k^{*},\ \ \rho \in V,  \label{eq3-is}
\end{equation}
defines a stochastic isometry $T:V\rightarrow \tilde{V}$. In the case $%
N=\infty $, the sum converges in trace norm.
\end{Prop}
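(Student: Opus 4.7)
The plan is to verify each defining property of a stochastic isometry in turn: well-definedness (and trace-norm convergence when $N=\infty$), positivity, trace preservation, and finally the isometric property. The linearity in $\rho$ is immediate once convergence is in hand.

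First I would handle convergence and linearity. Each term $U_k\rho\,U_k^*$ lies in $\tilde V$ and satisfies $\|U_k\rho\,U_k^*\|_1\le\|\rho\|_1$ (in fact with equality, since $U_k$ is an isometric/antiisometric bijection onto $\tilde{\mathcal H}_k$ and the trace norm is invariant under such transformations). Therefore the partial sums $S_M(\rho):=\sum_{k=1}^M w_k U_k\rho\,U_k^*$ form a Cauchy sequence in $\tilde V$, because $\|S_M(\rho)-S_{M'}(\rho)\|_1\le \sum_{k=M'+1}^M w_k\|\rho\|_1\to 0$. Hence $T(\rho)$ is well-defined in $\tilde V$ and the convergence is in trace norm, uniformly on trace-norm bounded sets; linearity passes to the limit.

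Next I would establish that $T$ is stochastic. Positivity: if $\rho\in V^+$, then each $U_k\rho\,U_k^*\in\tilde V^+$ (the image is supported on $\tilde{\mathcal H}_k\subset\tilde{\mathcal H}$), and since $\tilde V^+$ is closed in trace norm, the limit $T(\rho)$ is positive. Trace preservation: $\mathrm{tr}[U_k\rho\,U_k^*]=\mathrm{tr}[\rho]$ for each $k$ (by the cyclicity argument for unitary $U_k$, and the analogous identity $\mathrm{tr}[U_k\rho\,U_k^*]=\overline{\mathrm{tr}[\rho]}=\mathrm{tr}[\rho]$ when $U_k$ is antiunitary and $\rho$ is self-adjoint), so $\mathrm{tr}[T(\rho)]=\sum_k w_k\,\mathrm{tr}[\rho]=\mathrm{tr}[\rho]$.

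The crux is the isometric property, and the whole point of the orthogonal-sum hypothesis is to make this work. Given $\rho\in V$, write the Jordan decomposition $\rho=\rho_+-\rho_-$ with $\rho_+\perp\rho_-$. Then
\[
T(\rho)=A-B,\qquad A:=\sum_{k=1}^N w_k U_k\rho_+U_k^*,\quad B:=\sum_{k=1}^N w_k U_k\rho_- U_k^*,
\]
with $A,B\in\tilde V^+$. Because the subspaces $\tilde{\mathcal H}_k$ are pairwise orthogonal, $U_k\rho_\pm U_k^*$ annihilates $\tilde{\mathcal H}_j$ for $j\ne k$, and within $\tilde{\mathcal H}_k$ the operators $U_k\rho_+U_k^*$ and $U_k\rho_-U_k^*$ have orthogonal ranges because $\rho_+\perp\rho_-$ in $\mathcal H$ and $U_k$ preserves orthogonality of supports. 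Hence $A\perp B$, so $A-B$ is the Jordan decomposition of $T(\rho)$, and by \eqref{eq2-orth} together with trace preservation,
\[
\|T(\rho)\|_1=\mathrm{tr}[A]+\mathrm{tr}[B]=\sum_k w_k\bigl(\mathrm{tr}[\rho_+]+\mathrm{tr}[\rho_-]\bigr)=\|\rho_+\|_1+\|\rho_-\|_1=\|\rho\|_1.
\]
The only genuinely delicate point is the orthogonality of the ranges of $A$ and $B$, which rests entirely on the assumed orthogonal decomposition of $\tilde{\mathcal H}$; everything else is routine and survives the trace-norm limit $N=\infty$ by the Cauchy estimate above.
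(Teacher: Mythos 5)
Your proof is correct and follows essentially the same route as the paper: trace preservation and positivity termwise, trace-norm convergence by the Cauchy estimate, and the isometry from the orthogonality of the ranges $\tilde{\mathcal{H}}_k$. The paper phrases the key step as $|T(\rho)|=\sum_k w_k|T_k(\rho)|$ while you push the Jordan decomposition $\rho=\rho_+-\rho_-$ through $T$ and observe $A\perp B$; these are the same observation in different clothing.
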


\begin{proof}
The trace is a normal map and so $\text{tr}\bigl[T(\rho )\bigr]%
=\sum_{k=1}^Nw_k\text{tr}\bigl[U_k^{*}U_k\rho \bigr]=\text{tr}\bigl[\rho %
\bigr]$ for $\rho \in V$, the last equality being due to $U_k^{*}U_k=I$.
Hence $T$ is trace preserving. $T$ is positive since $U_k\rho \,U_k^{*}\in
V^{+}$ for $\rho \in V^{+}$. All $T_k:\rho \mapsto U_k\,\rho \,U_k^{*}$ are
isometric, and for $\rho \in V$, $|T_k(\rho )|\perp |T_\ell (\rho )|$ if $%
k\ne \ell $. Thus, $|T(\rho )|=\sum_{k=1}^Nw_k|T_k(\rho )|$, and 
\begin{equation*}
\left\| T(\rho )\right\| _1=\sum_{k=1}^Nw_k\left\| T_k(\rho )\right\|
_1=\sum_{k=1}^Nw_k\left\| \rho \right\| _1=\left\| \rho \right\| _1.
\end{equation*}
Convergence in trace norm follows from $\left\| \sum_{k=n}^{n+m}w_kU_k\rho
\,U_k^{*}\right\| _1\le \sum_{k=n}^{n+m}w_k\,\left\| \rho \right\| _1$ and
the fact that $\tilde{V}$ is complete.%
\end{proof}

The construction of Proposition \ref{pro3-is} turns out to be generic.

\begin{Thm}
\label{th3-is}Let $T:V\to \tilde{V}$ be a stochastic isometry. Then $\tilde{%
\mathcal{H}}$ can be decomposed as $\tilde{\mathcal{H}}=\left( \oplus
_{k=1}^N\tilde{\mathcal{H}}_k\right) \oplus \tilde{\mathcal{H}}_0$, with dim(%
$\tilde{\mathcal{H}}_k$)=dim($\mathcal{H}$), $k=1,2,\cdots ,N\le \infty $.
There exist weights $w_k>0$, $\sum_{k=1}^Nw_k=1$, and unitary or antiunitary
maps $U_k:{\mathcal{H}}\to \tilde{\mathcal{H}}_k$ such that $T$ is of the form
of Eq. (\ref{eq3-is}).
\end{Thm}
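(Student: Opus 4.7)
The plan is to prove the theorem in the two steps anticipated by Propositions \ref{pro3-dec} and \ref{pro3-mix}. By Proposition \ref{pro2-orth}, $T$ preserves orthogonality, so for any orthonormal basis $\{\varphi_n\}$ of $\mathcal{H}$ the states $T(P_{\varphi_n})$ have pairwise orthogonal supports in $\tilde{\mathcal{H}}$. I would first fix a single unit vector $\varphi$ and write the spectral decomposition $T(P_\varphi)=\sum_k \lambda_k(\varphi)\,P_{\psi_k(\varphi)}$ with weights summing to $1$, then aim to show that, after a suitable relabeling, the weights $\lambda_k(\varphi)=w_k$ are independent of $\varphi$ and the rank-one spectral components can be assembled into mutually orthogonal closed subspaces $\tilde{\mathcal{H}}_k$ of $\tilde{\mathcal{H}}$, each of the same dimension as $\mathcal{H}$, plus a residual $\tilde{\mathcal{H}}_0$. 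This is the content I would expect from Proposition \ref{pro3-dec}.

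Once the subspaces $\tilde{\mathcal{H}}_k$ and the corresponding projections $\tilde{P}_k$ are in hand, the natural definition $T_k(\rho):=w_k^{-1}\tilde{P}_k\,T(\rho)\,\tilde{P}_k$ should provide a family of stochastic isometries $T_k:V\to\mathcal{B}_1(\tilde{\mathcal{H}}_k)_s$, each surjective onto its image, with $T=\sum_k w_k T_k$. Applying Proposition \ref{pro3-pure} to each restriction then yields $T_k(\rho)=U_k\rho\,U_k^*$ for some unitary or antiunitary $U_k:\mathcal{H}\to\tilde{\mathcal{H}}_k$, and substituting into the convex combination produces (\ref{eq3-is}); trace-norm convergence when $N=\infty$ is obtained as in Proposition \ref{pro3-is}.

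The main obstacle is the first step: establishing that the spectral weights of $T(P_\varphi)$ are universal in $\varphi$ and that the associated rank-one components can be coherently threaded into subspaces of constant dimension. Orthogonality preservation alone only controls what happens along a single basis; to link spectra across non-orthogonal pure states I would exploit linearity on convex combinations, applying $T$ to mixtures such as $\tfrac{1}{2}(P_\varphi+P_\chi)$ with $\varphi\perp\chi$ and comparing with the images of interference vectors like $(\varphi+\chi)/\sqrt{2}$ and $(\varphi+i\chi)/\sqrt{2}$. The trace-norm identity (\ref{eq2-orth}), together with the isometric condition $\|T(\rho-\sigma)\|_1=\|\rho-\sigma\|_1$ applied to differences of such pure and mixed states, should force the spectral decompositions of $T(P_\varphi)$ and $T(P_\chi)$ into compatible block structures with common weights. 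I would expect Definition \ref{def3-mix} to isolate precisely those components on which this weight structure is maximally indecomposable (the \emph{mixing} isometries), after which Proposition \ref{pro3-mix} reduces each such component to a single pure sector, where the surjective Wigner--Kadison-type argument of Proposition \ref{pro3-pure} delivers the unitary or antiunitary $U_k$.
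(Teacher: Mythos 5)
Your overall architecture is the paper's: a first decomposition governed by Proposition \ref{pro3-dec} followed by a reduction of each block via Proposition \ref{pro3-mix}, and your sketch of the first step (comparing spectra of $T(P_\varphi)$ and $T(P_\chi)$ through linearity on two-dimensional spans and interference vectors) is in the right spirit, though the paper's actual mechanism is sharper: it derives the commutation identity $\bigl(T(P_\psi)\bigr)^n\bigl(T(P_\varphi)\bigr)^m=\bigl(T(P_\psi)\bigr)^m\bigl(T(P_\varphi)\bigr)^n$ and the moment equations $\sum_\nu n_\nu a_\nu^n=\sum_\mu m_\mu b_\mu^n$ for all $n$, which force equality of eigenvalues \emph{and} multiplicities.

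The genuine gap is in your second step. After the first decomposition, a block $T_\nu$ is an $m_\nu$-mixing isometry: it sends every pure state $P_\varphi$ to $\frac{1}{m_\nu}\Pi_\varphi^\nu$ with $\Pi_\varphi^\nu$ of rank $m_\nu$, possibly $m_\nu\ge 2$. Your proposed compression $T_k(\rho)=w_k^{-1}\tilde P_k\,T(\rho)\,\tilde P_k$ by the projection onto the whole block subspace returns exactly this mixing map, which is neither pure nor surjective onto $\mathcal{B}_1(\tilde{\mathcal{H}}_k)_s$ (a rank-one projection in $\tilde{\mathcal{H}}_k$ is not in its range when $m_\nu\ge2$), so Proposition \ref{pro3-pure} does not apply to it; the phrase ``surjective onto its image'' does not supply the surjectivity that proposition requires. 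What is missing is the entire content of Section 5: one must construct $m_\nu$ finer mutually orthogonal projections $P_1,\dots,P_{m_\nu}$ inside the block, each selecting a single ray $P_{\psi_k}\le\Pi_\psi^\nu$ coherently for \emph{every} $\psi\in\mathcal{H}\setminus\{0\}$, which the paper achieves via the isoclinic relation $\Pi_\varphi\Pi_\psi\Pi_\varphi=\langle P_\varphi,P_\psi\rangle\,\Pi_\varphi$ and the triple-product formula $\Pi_\varphi\Pi_\vartheta\Pi_\psi\Pi_\varphi=a\,\Pi_\varphi^{(L)}+\overline{a}\,\Pi_\varphi^{(A)}$. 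That formula is also the only place where the unitary versus antiunitary dichotomy is detected (through the eigenvalues $\pm i$ of a partial isometry and the splitting $P^{(L)},P^{(A)}$); your outline asserts the conclusion ``unitary or antiunitary'' without any mechanism that could distinguish, or even produce, the antilinear case.
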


Theorem \ref{th3-is} is an immediate consequence of Propositions \ref
{pro3-dec} and \ref{pro3-mix} to be formulated next. A concept central to
the decomposition of stochastic isometries is the following.

\begin{Def}
\label{def3-mix}A stochastic map $T:V\to \tilde{V}$ will be called \textsl{($%
m$)-mixing} ($m\in \Bbb{N}$) if every pure state $P_{\left. \varphi \right. }
$, $\varphi \in {\mathcal{H}}\backslash \{0\}$, is sent to an $m$-fold
degenerate mixture; that is: $T(P_{\left. \varphi \right. })=\frac 1m\Pi
_{\left. \varphi \right. }$, where $\Pi _{\left. \varphi \right. }$ is an
orthogonal projection of rank $m$.
\end{Def}

\begin{Prop}
\label{pro3-dec}A stochastic isometry $T:V\to \tilde{V}$ can be decomposed
into a ($\sigma $-)\allowbreak {convex} combination of mixing stochastic
isometries with mutually orthogonal ranges. That is, there is a family of
mixing stochastic isometries $T_{\left. \nu \right. }:V\to \tilde{V}$, $\nu
=1,2,\cdots ,N\le \infty $ and a strictly decreasing sequence of weights $%
w_1>w_2>\cdots $, $\sum_{\nu =1}^Nw_{\left. \nu \right. }=1$, such that 
\begin{equation}
T(\rho )\ =\ \sum_{\nu =1}^Nw_{\left. \nu \right. }T_{\left. \nu \right.
}(\rho ),\ \ \ \rho \in V.  \label{eq3-dec}
\end{equation}
The sum converges in trace norm when $N=\infty $. The ranges of the $%
T_{\left. \nu \right. }$ can be represented as subspaces of ${\mathcal{B}}_1(%
\tilde{\mathcal{H}}_{\left. \nu \right. })_s$, where the $\tilde{\mathcal{H}}%
_{\left. \nu \right. }$ are mutually orthogonal closed subspaces of $\tilde{%
\mathcal{H}}$.
\end{Prop}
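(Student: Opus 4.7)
My plan is to decompose $T$ by studying the spectrum of each $T(P_\varphi)$, proving it is independent of $\varphi$, and then isolating the mixing components through the corresponding spectral projections. Writing the spectral decomposition as $T(P_\varphi)=\sum_\nu\mu_\nu(\varphi)\Pi_\nu(\varphi)$ with distinct positive eigenvalues $\mu_1(\varphi)>\mu_2(\varphi)>\cdots$ and spectral projections $\Pi_\nu(\varphi)$ of rank $m_\nu(\varphi)$, the key is to show that both $\mu_\nu$ and $m_\nu$ do not depend on $\varphi$. Granted that, I set $w_\nu:=\mu_\nu m_\nu$ (which sum to $1$ since $\mathrm{tr}\,T(P_\varphi)=1$) and define $T_\nu(P_\varphi):=m_\nu^{-1}\Pi_\nu(\varphi)$ on pure states, to be extended linearly.

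The $\varphi$-independence of the spectrum is obtained by a two-dimensional rotation. For orthogonal unit vectors $\varphi,\psi$, set $\chi(\theta):=\cos\theta\,\varphi+\sin\theta\,\psi$ and $\chi^\perp(\theta):=-\sin\theta\,\varphi+\cos\theta\,\psi$; then $P_{\chi(\theta)}+P_{\chi^\perp(\theta)}=P_\varphi+P_\psi$, so $A(\theta):=T(P_{\chi(\theta)})$ and $B(\theta):=T(P_{\chi^\perp(\theta)})$ are summands of the $\theta$-independent operator $R:=T(P_\varphi)+T(P_\psi)$, and they are orthogonal by Proposition \ref{pro2-orth}. Orthogonality forces $A(\theta)R=A(\theta)^2=RA(\theta)$, so $A(\theta)$ commutes with $R$. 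Spectrally decomposing $R=\sum_i\rho_iE_i$, block-by-block analysis shows $A(\theta)=\sum_i\rho_iF_i(\theta)$ with each $F_i(\theta)\le E_i$ a projection satisfying $F_i(\theta)+F_i(\theta+\pi/2)=E_i$. Since $\theta\mapsto A(\theta)$ is trace-norm continuous, so is $F_i(\theta)$, and $\mathrm{rank}\,F_i(\theta)=\mathrm{tr}\,F_i(\theta)$ is an integer-valued continuous function of $\theta$, hence constant. Comparing $\theta=0$ and $\theta=\pi/2$ yields identical eigenvalue multisets (with multiplicities) for $T(P_\varphi)$ and $T(P_\psi)$. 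Non-orthogonal pairs are handled by interposing a third vector orthogonal to both when $\dim\mathcal{H}\ge 3$, and by direct rotation when $\dim\mathcal{H}=2$.

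With $(\mu_\nu,m_\nu)$ universal I set $\tilde{\mathcal{H}}_\nu:=\overline{\mathrm{span}}\{\mathrm{range}(\Pi_\nu(\varphi)):\varphi\in\mathcal{H}\setminus\{0\}\}$. These subspaces are mutually orthogonal: for orthogonal $\varphi,\psi$ this follows from $T(P_\varphi)\perp T(P_\psi)$; in general, realizing $\psi$ inside the span of some $\varphi,\varphi^\perp$ and applying the rotation analysis above places $\mathrm{range}(\Pi_\nu(\psi))$ in the $\rho_\nu$-eigenspace of $R=T(P_\varphi)+T(P_{\varphi^\perp})$, which is orthogonal to $\mathrm{range}(\Pi_{\nu'}(\varphi))$ whenever $\nu'\neq\nu$. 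Letting $\tilde E_\nu$ be the orthogonal projection onto $\tilde{\mathcal{H}}_\nu$, I define $T_\nu(\rho):=w_\nu^{-1}\tilde E_\nu T(\rho)\tilde E_\nu$. On pure states $T_\nu$ reduces to $m_\nu^{-1}\Pi_\nu(\varphi)$, so $T_\nu$ is $m_\nu$-mixing; positivity and trace-preservation are automatic, and the isometry property is verified by decomposing any $\rho=\rho_+-\rho_-$ and using orthogonality preservation. The identity $T=\sum_\nu w_\nu T_\nu$ holds on pure states and extends to $V$ by linearity; trace-norm convergence when $N=\infty$ is secured exactly as in Proposition \ref{pro3-is}, and a final relabeling in decreasing order of the $w_\nu$ completes the decomposition.

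The main obstacle is proving the $\varphi$-independence of the spectrum; the commutation $A(\theta)R=RA(\theta)$ forced by orthogonality preservation, together with the continuity of projection ranks along the rotation parameter, is the essential ingredient, and every remaining step is essentially bookkeeping from there.
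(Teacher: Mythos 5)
Your proposal is correct and reaches the same decomposition as the paper (spectral projections of $T(P_\varphi)$ with $\varphi$-independent eigenvalues and multiplicities, compression by the mutually orthogonal projections onto the spans of those spectral subspaces), but the key lemma --- the $\varphi$-independence of the spectrum and the cross-orthogonality of spectral subspaces belonging to distinct eigenvalues --- is proved by a genuinely different route. The paper (Lemma \ref{le4-dec1}) fixes a non-orthogonal pair $\varphi,\psi$, derives $\bigl(T(P_\psi)\bigr)^n\bigl(T(P_\varphi)\bigr)^m=\bigl(T(P_\psi)\bigr)^m\bigl(T(P_\varphi)\bigr)^n$ from orthogonality preservation, and extracts the power-sum identities $\sum_\nu n_\nu a_\nu^n=\sum_\mu m_\mu b_\mu^n$ for all $n$, from which equality of eigenvalues and then of multiplicities is teased out by a somewhat delicate estimate on $\sum_\kappa (a_\kappa/a_1)^n x_\kappa/x_1$. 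You instead embed $\varphi$ and $\psi$ in a one-parameter rotation family $\chi(\theta)$, observe that $A(\theta)=T(P_{\chi(\theta)})$ commutes with the $\theta$-independent operator $R=T(P_\varphi)+T(P_{\varphi^\perp})$ because $A(\theta)R=A(\theta)^2=RA(\theta)$, conclude $A(\theta)=\sum_i\rho_iF_i(\theta)$ with $F_i(\theta)\le E_i$ a projection, and finish by noting that the integer $\mathrm{tr}\,F_i(\theta)$ is continuous, hence constant. This buys you the eigenvalues, the multiplicities, \emph{and} the orthogonality $\mathrm{range}\,\Pi_\nu(\psi)\perp\mathrm{range}\,\Pi_{\nu'}(\varphi)$ ($\nu\ne\nu'$) in one stroke from the containment $F_{i(\nu)}(\theta)\le E_{i(\nu)}$, whereas the paper has to read the latter off separately from Eq.\ (\ref{eq4-ab2}); the price is a topological ingredient (connectedness plus integer-valued continuity) in place of the paper's purely algebraic moment computation. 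Both arguments rest on the same two pillars --- linearity of $T$ applied to two-dimensional resolutions of the identity, and Proposition \ref{pro2-orth} --- and the remaining steps (defining $w_\nu=\mu_\nu m_\nu$, $T_\nu=w_\nu^{-1}\tilde E_\nu T(\cdot)\tilde E_\nu$, extending from pure states by linearity and continuity) coincide with the paper's Lemma \ref{le4-dec2} and the subsequent proof. The only blemish, which you share with the paper, is that the ordered weights $w_\nu=\mu_\nu m_\nu$ need not be \emph{strictly} decreasing even though the $\mu_\nu$ are.
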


\begin{Prop}
\label{pro3-mix}Any $m$-mixing stochastic isometry $T:V\to \tilde{V}$ is of
the form given in Equation \ref{eq3-is}, with $N=m$, $w_k=\frac 1m$.
\end{Prop}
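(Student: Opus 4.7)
The plan is to first dispatch the case $\dim\mathcal{H}=2$ by a direct Clifford-algebra argument and then extend to general $\mathcal{H}$ by applying the 2-dim result inside every two-dimensional subspace and patching the local decompositions via a Wigner-type argument.

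For $\dim\mathcal{H}=2$, I would fix an orthonormal basis $\varphi_0,\varphi_1$ and the associated Pauli matrices $\sigma_1,\sigma_2,\sigma_3$; every $\rho\in V$ takes the form $\alpha I+\vec\beta\cdot\vec\sigma$, so $T(\rho)=\tfrac{1}{m}(\alpha\tilde Q+\vec\beta\cdot\vec{\tilde\tau})$ where $\tilde Q:=\Pi_{\varphi_0}+\Pi_{\varphi_1}$ is a rank-$2m$ projection (by orthogonality preservation) and $\tilde\tau_i:=m\,T(\sigma_i)$ are self-adjoint and supported on $\tilde Q$. Writing a general pure state as $P_\chi=\tfrac{1}{2}(I+\vec n\cdot\vec\sigma)$ and requiring $T(P_\chi)=\tfrac{1}{m}\Pi_\chi$ to be a projection for every unit $\vec n\in\mathbb{R}^3$ forces $(\vec n\cdot\vec{\tilde\tau})^2=\tilde Q$, which by polarisation in $\vec n$ is equivalent to the Clifford relations $\tilde\tau_i\tilde\tau_j+\tilde\tau_j\tilde\tau_i=2\delta_{ij}\tilde Q$. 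The irreducible complex representations of this algebra are two-dimensional with chirality $\tilde\tau_1\tilde\tau_2\tilde\tau_3=\pm i\tilde Q$, so $\tilde Q$ splits orthogonally as a sum of $m$ two-dimensional blocks on each of which a unitary (chirality $+i$) or antiunitary (chirality $-i$) map $U_k:\mathcal{H}\to\tilde M_k$ intertwines $\sigma_i$ with $\tilde\tau_i|_{\tilde M_k}$; a short check confirms $T=\tfrac{1}{m}\sum_k U_k(\cdot)U_k^*$.

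For $\dim\mathcal{H}\ge 3$, I would fix a unit vector $\varphi_0$ and an orthonormal basis $\tilde\chi_1,\dots,\tilde\chi_m$ of $\Pi_{\varphi_0}(\tilde{\mathcal{H}})$. For each other unit $\varphi$, the 2-dim case applied to $T|_{\mathcal{B}_1(\mathrm{span}(\varphi_0,\varphi))_s}$ produces a Clifford block decomposition of $\Pi_{\varphi_0}+\Pi_\varphi$ in which $\tilde\chi_k$ lies in a unique block, determining a distinguished unit vector $U_k\varphi\in\Pi_\varphi$. The 2-dim analysis applied instead on $\mathrm{span}(\varphi,\psi)$ then gives $|\langle U_k\varphi|U_k\psi\rangle|=|\langle\varphi|\psi\rangle|$ for all pairs, so $[\varphi]\mapsto[U_k\varphi]$ is a transition-probability-preserving projective map and Wigner's theorem promotes $U_k$ to a genuinely linear or antilinear isometry $U_k:\mathcal{H}\to\tilde{\mathcal{H}}_k$. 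Substituting back into $T$ gives the desired form.

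The principal obstacle is the coherence between the Clifford decompositions across the different 2-dim subspaces through $\varphi_0$: one needs to verify that each $\tilde\chi_k$ can be assigned consistently to a single 2-dim block as $\varphi$ varies, and that the chirality attached to $U_k$ remains constant so that $U_k$ is globally linear or globally antilinear. Both reduce to connectedness/continuity statements on the Grassmannian of 2-dim subspaces containing $\varphi_0$, combined with the rigidity of the Clifford block decomposition; resolving them is what turns the local (anti)unitary identifications into the globally defined maps of the statement.
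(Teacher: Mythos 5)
Your two-dimensional argument is sound and genuinely different from the paper's: the observation that $m$-mixing forces the operators $\tilde\tau_i=mT(\sigma_i)$ to satisfy the Clifford relations on $\tilde Q\tilde{\mathcal{H}}$, so that the two inequivalent irreducible $*$-representations of $C\ell_3(\Bbb{C})$ (chirality $\pm i$) account for the unitary/antiunitary dichotomy, is an elegant substitute for Lemma \ref{le5-mix2}, where the same dichotomy emerges from the spectrum $\{+i,-i\}$ of the partial isometry $U=4\Pi_{\varphi_1}\Pi_{\varphi_1+i\varphi_2}\Pi_{\varphi_2}\Pi_{\varphi_1+\varphi_2}\Pi_{\varphi_1}$. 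Your chirality sectors are exactly the projections $\Pi_{\varphi}^{(L)},\Pi_{\varphi}^{(A)}$ of the paper.

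The gap is in the passage to $\dim\mathcal{H}\ge 3$, at precisely the point you flag as the ``principal obstacle,'' and it is not a technicality that connectedness of the Grassmannian will absorb. First, ``the unique block containing $\tilde\chi_k$'' is not well defined: when several irreps of the same chirality occur (multiplicity $>1$), the decomposition of $\tilde Q\tilde{\mathcal{H}}$ into irreducible blocks is highly non-unique (this is the content of Remark 6.2), and when both chiralities occur a generic $\tilde\chi_k$ has components in both sectors. Second, and more seriously, your key step $|\langle U_k\varphi|U_k\psi\rangle|=|\langle\varphi|\psi\rangle|$ is simply false for such a $\tilde\chi_k$: with $U_k\varphi\sim\Pi_\varphi\tilde\chi_k$ one computes, using the four-fold product formula (\ref{eq5-prod1}), $\langle \Pi_\varphi\tilde\chi_k|\Pi_\psi\tilde\chi_k\rangle = a\,\Vert\Pi^{(L)}\tilde\chi_k\Vert^2+\overline{a}\,\Vert\Pi^{(A)}\tilde\chi_k\Vert^2$ with $a=\langle\varphi_0|\varphi\rangle\langle\varphi|\psi\rangle\langle\psi|\varphi_0\rangle$, and the modulus of this is strictly smaller than $|a|$ whenever $a\notin\Bbb{R}$ and $\tilde\chi_k$ straddles the two sectors; Wigner's theorem is then not applicable. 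So before any $\tilde\chi_k$ can be chosen and any ray map defined, one must prove that the chirality splitting is \emph{global}: that there exist fixed orthogonal projections $P^{(L)},P^{(A)}$ with $\Pi_\varphi^{(L)}=P^{(L)}\Pi_\varphi$ for \emph{all} $\varphi$, i.e.\ that the sectors produced by different planes through $\varphi_0$ (and through other base points) coincide. That is exactly the quantitative rigidity established in Proposition \ref{pro5-mix2} (Step 2, including the $a\notin\Bbb{R}$ selection argument and the continuity argument for $\varphi_2'=\varphi_3$) and Proposition \ref{pro5-mix3}; it rests on the identity (\ref{eq5-prod1}) for \emph{triples} of vectors spanning a three-dimensional subspace, which your pairwise, two-plane-at-a-time scheme never produces. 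Until you supply an argument of this strength, the local Clifford data cannot be patched into the globally linear or globally antilinear maps $U_k$ of the statement.
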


A positive map $T:{\mathcal{B}}_1({\mathcal{H}})_s\to {\mathcal{B}}_1
(\tilde{\mathcal{H}})_s$ is called \textsl{completely positive} if 
its canonical
extension $T^{(n)}:{\mathcal{B}}_1({\mathcal{H}})_s\otimes {\mathcal{B}}_1({\Bbb{C}}%
^n)_s\rightarrow {\mathcal{B}}_1(\tilde{\mathcal{H}})_s\otimes {\mathcal{B}}_1(%
{\Bbb{C}}^n)_s$ is positive for all $n\in \Bbb{N}$. As a specification of
Theorem \ref{th3-is} the following characterization of completely positive
stochastic isometries will be obtained.

\begin{Thm}
\label{th3-cp}A completely positive stochastic isometry $T:V\rightarrow 
\tilde{V}$ is of the form of Equation (\ref{eq3-is}), with all $U_k$
unitary, $k=1,2,\cdots ,N$.
\end{Thm}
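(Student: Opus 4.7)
The plan is to reduce Theorem~\ref{th3-cp} to Theorem~\ref{th3-is} by using complete positivity to eliminate antiunitary components. By Theorem~\ref{th3-is}, any stochastic isometry $T$ can be written as $T(\rho)=\sum_{k=1}^{N}w_{k}\,U_{k}\rho\,U_{k}^{*}$ with each $U_{k}:\mathcal{H}\to\tilde{\mathcal{H}}_{k}$ unitary or antiunitary and the $\tilde{\mathcal{H}}_{k}$ pairwise orthogonal closed subspaces of $\tilde{\mathcal{H}}$. Hence it suffices to show that if $T$ is completely positive then none of the $U_{k}$ can be antiunitary.

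First I would isolate each summand as a completely positive map in its own right. Let $P_{k}$ denote the orthogonal projection onto $\tilde{\mathcal{H}}_{k}$; the compression $\Phi_{k}:\sigma\mapsto P_{k}\sigma P_{k}$ is manifestly completely positive on $\tilde{V}$. Because the ranges of the maps $T_{j}(\rho):=U_{j}\rho\,U_{j}^{*}$ lie in the mutually orthogonal subspaces $\mathcal{B}_{1}(\tilde{\mathcal{H}}_{j})_{s}$, only the $j=k$ term survives the compression, which yields $\Phi_{k}\circ T=w_{k}T_{k}$. Since the composition of completely positive maps is completely positive and $w_{k}>0$, each $T_{k}$ is completely positive.

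Second, I would show that conjugation by an antiunitary is not completely positive, which would contradict the previous step. Using the standard convention $\langle U^{*}\psi|\varphi\rangle=\overline{\langle\psi|U\varphi\rangle}$ for the adjoint of an antilinear isometric $U$, a direct computation with rank-one operators gives $U|e\rangle\langle f|U^{*}=|Ue\rangle\langle Uf|$, while antilinearity conjugates coefficients: for $\rho=\sum_{i,j}\rho_{ij}|e_{i}\rangle\langle e_{j}|$ one obtains $U\rho U^{*}=\sum_{i,j}\overline{\rho_{ij}}|Ue_{i}\rangle\langle Ue_{j}|$. Restricted to a two-dimensional subspace $\mathcal{K}\subset\mathcal{H}$ with orthonormal basis $\{e_{1},e_{2}\}$ and expressed in the natural matrix units on source and target sides, $T_{k}$ is therefore matrix transposition (note that for self-adjoint $\rho$, $\overline{\rho_{ij}}=\rho_{ji}$), which is the classical example of a positive but not completely positive map on $M_{2}(\mathbb{C})$. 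Hence no $U_{k}$ can be antiunitary, completing the proof.

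The main subtlety I expect is the bookkeeping in the second step: verifying carefully that $\rho\mapsto U\rho U^{*}$ for antiunitary $U$ coincides, after an inessential unitary change of basis, with transposition. Once this identification is made, the non-complete-positivity of transposition is standard --- one takes the rank-one positive element $\Omega=\sum_{i,j}|e_{i}\rangle\langle e_{j}|\otimes|e_{i}\rangle\langle e_{j}|\in\mathcal{B}_{1}(\mathcal{K})_{s}\otimes\mathcal{B}_{1}(\mathbb{C}^{2})_{s}$, and observes that $(\mathrm{transp}\otimes\mathrm{id})(\Omega)$ is the flip operator on $\mathcal{K}\otimes\mathbb{C}^{2}$, whose spectrum contains $-1$. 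No further difficulty arises, and the result drops out.
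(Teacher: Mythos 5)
Your proposal is correct and follows essentially the same route as the paper: invoke Theorem~\ref{th3-is}, use the mutual orthogonality of the ranges $\tilde{\mathcal{H}}_k$ to isolate the summand $w_kT_k$, and derive a contradiction with complete positivity when some $U_k$ is antiunitary. The only difference is cosmetic --- the paper exhibits a witness vector $\Psi\in\tilde{\mathcal{H}}_k\otimes\mathbb{C}^n$ directly and simply asserts that antiunitary conjugation is not completely positive, whereas you compress by $P_k$ and then spell out the standard transposition/flip-operator argument, which the paper leaves implicit.
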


\section{Decompositions of Stochastic Isometries\label{sec-dec}}

For $\rho \in {\mathcal{B}}_1({\mathcal{H}})$, 
$a\in {\mathcal{B}}({\mathcal{H}})$,
we will interchangeably use the notations $\langle {\rho }\,,\,{a}\rangle $
and $\langle {a}\,,\,{\rho }\rangle $ for 
$\text{tr}\bigl[\rho \cdot a\bigr]$. 
The dual $T^{*}:{\mathcal{B}}(\tilde{{\mathcal{H}}})_s\to 
{\mathcal{B}}({\mathcal{H}})_s$ of a positive linear map 
$T:{\mathcal{B}}_1({\mathcal{H}})_s\to 
{\mathcal{B}}_1(\tilde{\mathcal{H}})_s$ is a normal positive linear 
map (\cite{Dav},
Lemma 2.2.2), and any normal positive linear map on ${\mathcal{B}}(\tilde{%
{\mathcal{H}}})_s$ is the dual of a positive linear map on 
${\mathcal{B}}_1({\mathcal{H}})_s$. 
$T^{*}(I)=I$ is equivalent to $T$ being trace-preserving.
Maps $T$ on ${\mathcal{B}}_1({\mathcal{H}})_s$ and their dual maps 
$T^{*}$ will henceforth be understood to be linear without explicit mention.

\begin{Def}
\label{def4-supp}The \textsl{support} (projection) of a state $\rho \in 
\mathcal{S}$, denoted $\Pi (\rho )$, is the smallest orthogonal projection $P
$ such that $\langle {\rho }\,,\,{P}\rangle =1$.
\end{Def}

In the sequel the term projection will mean orthogonal projection. We shall
freely use properties of the complete orthocomplemented lattice of
projections. Projections $P,Q$ are called \textsl{orthogonal}, $P\perp Q$,
if $PQ=0$, or equivalently, $P+Q\le I$, or $P\le I-Q=:Q^{\perp }$. The
supremum of a family of projections $P^{(\alpha )}$, $\alpha \in A$, will be
denoted $\bigvee_{\alpha \in A}P^{(\alpha )}$. Note that the projections $%
\Pi _{\left. \varphi \right. }$ introduced in Definition \ref{def3-mix} are
the support projections of $T(P_{\left. \varphi \right. })$.

\begin{Lem}
\label{le4-is}Let $T:V\rightarrow \tilde{V}$ be a stochastic map. The
following are equivalent:\newline
{(i)}\ \ $T$ is isometric;\newline
{(ii)}\ $T^{*}\left( \Pi (T(\rho ))\right) =\Pi (\rho )$ for all $\rho \in 
\mathcal{S}$;\newline
(iii) $T^{*}\left( \Pi (T(P_{\left. \varphi \right. })\right) =P_{\left.
\varphi \right. }$ for all $\varphi \in {\mathcal{H}}\backslash \{0\}.$
\end{Lem}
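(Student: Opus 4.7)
The plan is to run the cycle (i) $\Rightarrow$ (ii) $\Rightarrow$ (iii) $\Rightarrow$ (i). The implication (ii) $\Rightarrow$ (iii) is immediate because pure states are states. Both nontrivial implications will be driven by Proposition \ref{pro2-orth} together with the duality pairing $\langle\sigma,T^*(a)\rangle = \langle T(\sigma),a\rangle$ and the elementary principle that for $A\ge 0$ and $\rho\in\mathcal{S}$, $\text{tr}[\rho A]=0$ forces $A$ to vanish on the range of $\Pi(\rho)$.

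For (i) $\Rightarrow$ (ii), fix $\rho\in\mathcal{S}$ and set $\tilde\Pi := \Pi(T(\rho))$; I will establish $T^*(\tilde\Pi)=\Pi(\rho)$ in two stages. First, the one-sided bound $T^*(\tilde\Pi)\le\Pi(\rho)$: for any unit vector $\chi\in\Pi(\rho)^\perp\mathcal{H}$ one has $P_\chi\perp\rho$, so by (i) and Proposition \ref{pro2-orth} $T(P_\chi)\perp T(\rho)$, whence $T(P_\chi)$ is supported in $\tilde\Pi^\perp\mathcal{H}$ and $\langle P_\chi,T^*(\tilde\Pi)\rangle = \langle T(P_\chi),\tilde\Pi\rangle = 0$; positivity of $T^*(\tilde\Pi)$ then forces $T^*(\tilde\Pi)\chi = 0$, so $T^*(\tilde\Pi)$ is supported in $\Pi(\rho)\mathcal{H}$, and together with $T^*(\tilde\Pi)\le T^*(I)=I$ this yields $T^*(\tilde\Pi)\le\Pi(\rho)$. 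Second, to upgrade to equality I use duality to compute $\langle \rho, T^*(\tilde\Pi)\rangle = \langle T(\rho),\tilde\Pi\rangle = 1 = \langle\rho,\Pi(\rho)\rangle$, so $Y := \Pi(\rho)-T^*(\tilde\Pi)$ is a positive operator with $\text{tr}[\rho Y]=0$ and $Y\le\Pi(\rho)$; a spectral decomposition of $\rho$ on $\Pi(\rho)\mathcal{H}$ gives $Y\Pi(\rho)=0$, while $Y\le\Pi(\rho)$ gives $Y\Pi(\rho)^\perp=0$, whence $Y=0$.

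For (iii) $\Rightarrow$ (i), Proposition \ref{pro2-orth}(iii) reduces the task to showing that $T$ preserves orthogonality of pure states. Given $P_\varphi\perp P_\psi$, the duality pairing with (iii) gives $\langle T(P_\varphi),\Pi(T(P_\psi))\rangle = \langle P_\varphi, T^*(\Pi(T(P_\psi)))\rangle = \langle P_\varphi,P_\psi\rangle = 0$; since $T(P_\varphi)\ge 0$ and $\Pi(T(P_\psi))\ge 0$, rewriting this trace as $\text{tr}\bigl[T(P_\varphi)^{1/2}\Pi(T(P_\psi))T(P_\varphi)^{1/2}\bigr]$ forces $\Pi(T(P_\psi))T(P_\varphi)=0$, i.e., the range of $T(P_\varphi)$ sits in $\Pi(T(P_\psi))^\perp\mathcal{H}$, so $\Pi(T(P_\varphi))\perp\Pi(T(P_\psi))$ and thus $T(P_\varphi)\perp T(P_\psi)$.

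The one delicate point I expect is the upgrade from the inequality $T^*(\tilde\Pi)\le\Pi(\rho)$ to equality in step (i) $\Rightarrow$ (ii): one must exploit that $\rho$ is faithful on its own support to exclude any residual shortfall. Everything else is a clean alternation between orthogonality preservation (Proposition \ref{pro2-orth}), the adjoint duality of $T$ with $T^*$, and the standard support–trace dictionary for positive trace-class operators.
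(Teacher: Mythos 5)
Your proof is correct, and the two substantive computations coincide with the paper's: the inequality $T^{*}(\Pi(T(\rho)))\le\Pi(\rho)$ via orthogonality preservation (Proposition \ref{pro2-orth}) applied to $P_{\chi}\perp\rho$ for $\chi$ in the orthocomplement of the support, and the reverse direction via the duality pairing $\langle T(\sigma),\Pi(T(\rho))\rangle=\langle\sigma,T^{*}(\Pi(T(\rho)))\rangle$. The one genuine structural difference is your handling of (iii): the paper proves (i)$\Leftrightarrow$(ii), notes (ii)$\Rightarrow$(iii) trivially, and only sketches (iii)$\Rightarrow$(ii) by appeal to normality of $T^{*}$, whereas you close the cycle (iii)$\Rightarrow$(i) directly by invoking the pure-state criterion of Proposition \ref{pro2-orth}(iii). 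This is a small but real improvement: it makes the normality argument unnecessary and turns the paper's least explicit step into a one-line duality computation. For completeness of equality in (i)$\Rightarrow$(ii), note that the paper gets $\Pi(\rho)\le T^{*}(\Pi(T(\rho)))$ directly from $0\le T^{*}(\Pi(T(\rho)))\le I$ and $\langle\rho,T^{*}(\Pi(T(\rho)))\rangle=1$; your variant with $Y=\Pi(\rho)-T^{*}(\Pi(T(\rho)))\ge0$, $\mathrm{tr}[\rho Y]=0$, $Y\le\Pi(\rho)$ is the same fact rearranged and is equally sound.
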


\begin{proof}
Assume that $T$ is a stochastic isometry. Let $\rho \in \mathcal{S}$, then $%
0\le T^{*}\left( \Pi (T(\rho ))\right) \le I$, and $\langle {\rho }\,,\,{%
T^{*}\left( \Pi (T(\rho ))\right) }\rangle =\langle {T(\rho )}\,,\,{\Pi
(T(\rho ))}\rangle =1$. Therefore, $\Pi (\rho )\le T^{*}\left( \Pi (T(\rho
))\right) $. We show that also $T^{*}\left( \Pi (T(\rho ))\right) \le \Pi
(\rho )$, so that both operators are equal.

Write $\Pi (\rho )^{\perp }=\sum_kP_{\varphi _k}$. Then $P_{\varphi _k}\perp
\Pi (\rho )$, and so $P_{\varphi _k}\perp \rho $. Since $T$ is an
orthogonality preserving stochastic map [by Proposition \ref{pro2-orth}], it
follows that $T(P_{\varphi _k})\perp T(\rho )$. Then 
\begin{equation*}
0=\langle {T(P_{\varphi _k})}\,,\,{\Pi (T(\rho ))}\rangle =\langle {%
P_{\varphi _k}}\,,\,{T^{*}\left( \Pi (T(\rho ))\right) }\rangle .
\end{equation*}
This gives $P_{\varphi _k}\cdot T^{*}\left( \Pi (T(\rho ))\right) =0$ and
therefore one has $\Pi (\rho )^{\perp }\cdot T^{*}\left( \Pi (T(\rho
))\right) =0$. Since $T^{*}\left( \Pi (T(\rho ))\right) \le I$, it follows
that $T^{*}\left( \Pi (T(\rho ))\right) \le \Pi (\rho )$. Thus, $T^{*}\left(
\Pi (T(\rho ))\right) =\Pi (\rho )$.

Conversely, assume (ii) holds. We show that the stochastic map $T$ is
orthogonality preserving and thus, by Proposition \ref{pro2-orth}, an
isometry. Let $\rho ,\sigma \in S$, $\rho \perp \sigma $. Then $\langle {%
T(\rho )}\,,\,{\Pi (T(\sigma ))}\rangle =\langle {\rho }\,,\,{T^{*}\left(
\Pi (T(\sigma ))\right) }\rangle =\langle {\rho }\,,\,{\Pi (\sigma )}\rangle
=0$. This implies $T(\rho )\perp T(\sigma )$.

Property (iii) is entailed by (ii). The converse implication is a fairly
straightforward consequence of the fact that $T^{*}$ is normal.%
\end{proof}

\begin{Lem}
\label{le4-dec1}Let $T:V\rightarrow \tilde{V}$ be a stochastic isometry and
not pure. There exists a strictly decreasing sequence $\tilde{w}_{\left. \nu
\right. }$, $0<\tilde{w}_{\left. \nu \right. }<1$, $\nu =1,2,\cdots ,N\le
\infty $, and a sequence of numbers $m_{\left. \nu \right. }\in \Bbb{N}$
such that $\sum_{ \nu }\tilde{w}_{\left. \nu \right. }m_{\left.
\nu \right. }=1$ and 
\begin{equation}
T(P_{\left. \varphi \right. })=\sum_{\nu =1}^N\tilde{w}_{\left. \nu \right.
}\Pi _{\left. \varphi \right. }^{\left. \nu \right. }  \label{eq4-dec1}
\end{equation}
for all $\varphi \in {\mathcal{H}}\backslash \{0\}$. Here the $\Pi _{\left.
\varphi \right. }^{\left. \nu \right. }$ are projections with rank $%
m_{\left. \nu \right. }$. Furthermore, $\Pi _{\left. \varphi \right.
}^{\left. \nu \right. }\perp \Pi _{\left. \psi \right. }^{\left. \mu \right.
}$ for all $\varphi ,\psi \in {\mathcal{H}}\backslash \{0\}$ if $\nu \ne \mu $.
\end{Lem}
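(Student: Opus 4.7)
The plan is to reduce everything to a two-dimensional subspace $\mathcal{K}\subset\mathcal{H}$. For any orthonormal basis $\{\chi,\chi^\perp\}$ of $\mathcal{K}$, writing $Q_\mathcal{K}:=P_\chi+P_{\chi^\perp}$ for the projection onto $\mathcal{K}$ (independent of the basis chosen), one has the identity $T(P_\chi)+T(P_{\chi^\perp})=T(Q_\mathcal{K})$ together with $T(P_\chi)\perp T(P_{\chi^\perp})$ by Proposition \ref{pro2-orth}. The first key step is to convert this into a commutation relation: multiplying the identity on the left by the support $\Pi_\chi$ of $T(P_\chi)$ annihilates $T(P_{\chi^\perp})$, whose support $\Pi_{\chi^\perp}$ is orthogonal to $\Pi_\chi$, giving $\Pi_\chi T(Q_\mathcal{K})=T(P_\chi)$. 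Multiplying on the right similarly yields $T(Q_\mathcal{K})\Pi_\chi=T(P_\chi)$, so $\Pi_\chi$ commutes with $T(Q_\mathcal{K})$.

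Next I would spectrally decompose $T(Q_\mathcal{K})=\sum_\ell\gamma_\ell F_\ell$, where the $\gamma_\ell>0$ are the distinct eigenvalues and the $F_\ell$ are finite-rank eigenprojections of rank $n_\ell$ (finite rank because $T(Q_\mathcal{K})$ is trace class). The commutation gives $T(P_\chi)=\sum_\ell\gamma_\ell(\Pi_\chi F_\ell)$, and each $\Pi_\chi F_\ell=F_\ell\Pi_\chi$ is a sub-projection of $F_\ell$. The core observation is a rank-constancy claim for $r_\ell(\chi):=\text{tr}[F_\ell T(P_\chi)]/\gamma_\ell=\text{tr}[F_\ell\Pi_\chi]$: this is a non-negative integer for every $\chi$ and a continuous function of $\chi$ on the set of unit vectors in $\mathcal{K}$, since $\chi\mapsto P_\chi$ is trace-norm continuous and $T$ is bounded. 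Connectedness of this set forces $r_\ell$ to be constant, and the identity $r_\ell(\chi)+r_\ell(\chi^\perp)=\text{tr}[F_\ell Q_\mathcal{K}]=n_\ell$ then gives $r_\ell=n_\ell/2$.

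Consequently $T(P_\chi)$ has the same spectrum as a multiset, namely $\{\gamma_\ell\text{ with multiplicity }n_\ell/2\}$, for every unit $\chi\in\mathcal{K}$. Since any two non-parallel vectors $\varphi,\psi\in\mathcal{H}\setminus\{0\}$ lie jointly in the two-dimensional subspace they span, the spectrum of $T(P_\varphi)$ is independent of $\varphi$; listing its distinct values in strictly decreasing order as $\tilde w_\nu$ with multiplicities $m_\nu$ yields the required sequences, with $\sum_\nu\tilde w_\nu m_\nu=\text{tr}[T(P_\varphi)]=1$. Non-purity of $T$ rules out the degenerate possibility $\tilde w_1=1$. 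For the cross-orthogonality, given linearly independent $\varphi,\psi$ I would set $\mathcal{K}=\mathrm{span}(\varphi,\psi)$: the diagonalisation above places $\Pi_\varphi^\nu$ inside $F_\nu^{(\mathcal{K})}$ and $\Pi_\psi^\mu$ inside $F_\mu^{(\mathcal{K})}$, which are orthogonal eigenprojections of $T(Q_\mathcal{K})$ whenever $\nu\ne\mu$. The case of parallel $\varphi,\psi$ is immediate from the spectral theorem for a single state.

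The hard part will be the first step, converting the orthogonality preservation of $T$ into the operator-theoretic commutation $[\Pi_\chi,T(Q_\mathcal{K})]=0$, because that is what forces each $T(P_\chi)$ to be read off the spectral data of a single trace-class operator $T(Q_\mathcal{K})$. The rank-constancy via a continuous integer-valued function and the subsequent globalisation across varying two-dimensional subspaces should then be essentially bookkeeping.
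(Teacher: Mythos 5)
Your argument is correct, but it takes a genuinely different route from the paper's. The paper fixes a non-orthogonal, non-parallel pair $\varphi,\psi$, uses the two-dimensional resolution $P_{\varphi}+P_{\varphi'}=P_{\psi}+P_{\psi'}$ together with orthogonality preservation to derive the identities $(T(P_{\psi}))^n(T(P_{\varphi}))^m=(T(P_{\psi}))^m(T(P_{\varphi}))^n$, and then works in two stages: sandwiching with eigenprojections shows that eigenvalues agree wherever eigenvectors fail to be orthogonal (this also delivers the cross-orthogonality), while taking traces yields the power-sum identities $\sum_{\nu}n_{\nu}a_{\nu}^n=\sum_{\mu}m_{\mu}b_{\mu}^n$, from which equal multiplicities are extracted by a dominant-eigenvalue estimate; orthogonal pairs are then treated separately via an intermediate vector. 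You instead convert the same two-dimensional resolution into the single commutation relation $[\Pi_{\chi},T(Q_{\mathcal K})]=0$, so that every $T(P_{\chi})$, $\chi\in\mathcal K$, is read off the spectral projections $F_{\ell}$ of the one fixed trace-class operator $T(Q_{\mathcal K})$; the multiplicity statement then follows from the integrality of $r_{\ell}(\chi)=\mathrm{tr}[F_{\ell}\Pi_{\chi}]$ combined with the continuity of the equal quantity $\mathrm{tr}[F_{\ell}T(P_{\chi})]/\gamma_{\ell}$ on the connected unit sphere of $\mathcal K$, and the cross-orthogonality is immediate from $\Pi_{\varphi}^{\nu}\le F_{\nu}$, $\Pi_{\psi}^{\mu}\le F_{\mu}$, $F_{\nu}\perp F_{\mu}$. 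Your version is shorter, treats orthogonal and non-orthogonal pairs uniformly (eliminating the paper's Case 2), replaces the analytic multiplicity argument by a topological one, and gives the extra information that $\Pi_{\chi}$ and $\Pi_{\chi^{\perp}}$ split each eigenspace of $T(Q_{\mathcal K})$ exactly in half; the paper's route produces along the way the dual relations $T^{*}(P_{\varphi_{\nu}})=a_{\nu}P_{\varphi}$, which foreshadow the machinery of Lemma \ref{le4-is} and Section \ref{sec-mix}. The only step you should spell out when writing this up is that $T(P_{\chi})T(P_{\chi^{\perp}})=0$ forces $\Pi_{\chi}T(P_{\chi^{\perp}})=0$ (range of the second operator lies in the kernel of the first, which is the kernel of its support), since that is what makes both the left and right multiplications collapse to $T(P_{\chi})$ and hence yields the commutation.
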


\begin{proof}
For $\varphi ,\psi \ne 0$, write the spectral decompositions of $T(P_{\left.
\varphi \right. })$, $T(P_{\left. \psi \right. })$ as 
\begin{equation}
T(P_{\left. \varphi \right. })=\sum_{\nu =1}^Na_{\left. \nu \right.
}P_{\varphi _{\left. \nu \right. }},\ \ T(P_{\left. \psi \right.
})=\sum_{\mu =1}^Mb_{\left. \mu \right. }P_{\psi _{\left. \mu \right. }},
\label{eq4-dec2}
\end{equation}
respectively, where the $\varphi _{\left. \nu \right. }$ and $\psi _{\left.
\mu \right. }$ form orthonormal systems, $N,M\in \Bbb{N\cup \{\infty \}}$,
and the $a_{\left. \nu \right. },b_{\left. \mu \right. }$ are the nonzero
eigenvalues. We show that the eigenvalues coincide in numerical values and
multiplicities. First observe that $P_{\varphi _{\left. \nu \right. }}\le
\Pi (T(P_{\left. \varphi \right. }))$, so that, by virtue of Lemma \ref
{le4-dec1}, $T^{*}\big(P_{\varphi _{\left. \nu \right. }}\big)\le P_{\left.
\varphi \right. }$, $\langle {P}${$_{\left. \varphi \right. }$}$\,,\,{%
T^{*}(P_{\varphi _{\left. \nu \right. }})}\rangle =\langle {T(P}${$_{\left.
\varphi \right. }$}${)}\,,\,{P_{\varphi _{\left. \nu \right. }}}\rangle
=a_{\left. \nu \right. }$, and therefore 
\begin{equation}
T^{*}(P_{\varphi _{\left. \nu \right. }})\ =\ a_{\left. \nu \right.
}P_{\left. \varphi \right. }.  \label{eq4-dec3}
\end{equation}
Similarly, 
\begin{equation}
T^{*}(P_{\psi _{\left. \mu \right. }})\ =\ b_{\left. \mu \right. }P_{\left.
\psi \right. }.  \label{eq4-dec4}
\end{equation}
Case 1: $0\ne |\left\langle \,\varphi \,|\,\psi \,\right\rangle |(\ne 1)$.
Take unit vectors $\varphi ^{\prime },\psi ^{\prime }$ in the span of $%
\varphi ,\psi $ such that $\left\langle \,\varphi \,|\,\varphi ^{\prime
}\,\right\rangle =\left\langle \,\psi \,|\,\psi ^{\prime }\,\right\rangle =0$%
. Then $P_{\left. \varphi \right. }+P_{\varphi ^{\prime }}=P_{\left. \psi
\right. }+P_{\psi ^{\prime }}$ and therefore $T(P_{\left. \varphi \right.
})+T(P_{\varphi ^{\prime }})=T(P_{\left. \psi \right. })+T(P_{\psi ^{\prime
}})$. Since $T$ is orthogonality preserving, this yields: 
\begin{equation*}
\big(T(P_{\left. \varphi \right. })\big)^2=T(P_{\left. \psi \right.
})T(P_{\left. \varphi \right. })+T(P_{\psi ^{\prime }})T(P_{\left. \varphi
\right. }),\ \ 
\end{equation*}
and so 
\begin{equation*}
T(P_{\left. \psi \right. })\,\big(T(P_{\left. \varphi \right. })\big)^2=\big(%
T(P_{\left. \psi \right. })\big)^2T(P_{\left. \varphi \right. }),
\end{equation*}
and hence by induction 
\begin{equation}
\big(T(P_{\left. \psi \right. })\big)^n\big(T(P_{\left. \varphi \right. })%
\big)^m=\big(T(P_{\left. \psi \right. })\big)^m\big(T(P_{\left. \varphi
\right. })\big)^n,\ \ n,m\in \Bbb{N}.  \label{eq4-prod}
\end{equation}
On multiplication of Eq. (\ref{eq4-prod}) with $P_{\psi _{\left. \mu \right.
}}$ from the left and with $P_{\varphi _{\left. \nu \right. }}$ from the
right, we obtain 
\begin{equation}
b_{\left. \mu \right. }^na_{\left. \nu \right. }^m\,P_{\psi _{\left. \mu
\right. }}P_{\varphi _{\left. \nu \right. }}=b_{\left. \mu \right.
}^ma_{\left. \nu \right. }^n\,P_{\psi _{\left. \mu \right. }}P_{\varphi
_{\left. \nu \right. }}.  \label{eq4-ab1}
\end{equation}

Choosing $n=2,m=1$, one concludes that 
\begin{equation}
a_{\left. \nu \right. }=b_{\left. \mu \right. }\ \ \text{whenever}\ \
P_{\psi _{\left. \mu \right. }}P_{\varphi _{\left. \nu \right. }}\ne 0.
\label{eq4-ab2}
\end{equation}
Applying (\ref{eq4-prod}) again, one can write 
\begin{equation*}
\langle {\big(T(P_{\left. \psi \right. })\big)^{n-1}}\,,\,{T(P_{\left.
\varphi \right. })}\rangle =\langle {T(P_{\left. \psi \right. })}\,,\,{\big(%
T(P_{\left. \varphi \right. })\big)^{n-1}}\rangle ,\ n\ge 2,
\end{equation*}
that is, by virtue of (\ref{eq4-dec2}), 
\begin{equation*}
\sum_{\left. \mu \right. }b_{\left. \mu \right. }^{n-1}\langle {P_{\psi
_{\left. \mu \right. }}}\,,\,{T(P_{\left. \varphi \right. })}\rangle
=\sum_{\left. \nu \right. }a_{\left. \nu \right. }^{n-1}\langle {T(P_{\left.
\psi \right. })}\,,\,{P_{\varphi _{\left. \nu \right. }}}\rangle .
\end{equation*}
Observing that [by virtue of (\ref{eq4-dec3}),(\ref{eq4-dec4})] 
\begin{eqnarray}
{\langle {P_{\psi _{\left. \mu \right. }}}\,,\,{T(P_{\left. \varphi \right.
})}\rangle } &=&{\langle {T^{*}(P_{\psi _{\left. \mu \right. }})}\,,\,{%
P_{\left. \varphi \right. }}\rangle =b_{\left. \mu \right. }\langle {%
P_{\left. \psi \right. }}\,,\,{P_{\left. \varphi \right. }}\rangle ,}
\label{eq4-ab3} \\
{\langle {T(P_{\left. \psi \right. })}\,,\,{P_{\varphi _{\left. \nu \right.
}}}\rangle } &=&{\langle {P_{\left. \psi \right. }}\,,\,{T^{*}(P_{\varphi
_{\left. \nu \right. }})}\rangle =a_\nu \langle {P_{\left. \psi \right. }}%
\,,\,{P_{\left. \varphi \right. }}\rangle ,}  \label{eq4-ab4}
\end{eqnarray}
and using the fact that $\langle {P}${${_{\left. \psi \right. }}$}$\,,\,{P}${%
$_{\left. \varphi \right. }$}$\rangle \ne 0$, we obtain 
\begin{equation}
\sum_{\left. \nu \right. }a_{\left. \nu \right. }^n=\sum_{\left. \mu \right.
}b_{\left. \mu \right. }^n,\ \ \text{for all}\ \ n\in \Bbb{N}.
\label{eq4-ab5}
\end{equation}
We rewrite this, making the multiplicities explicit: thus we let $%
a_1,a_2,\cdots $ and $b_1,b_2,\cdots $ denote the strictly decreasing
sequences of eigenvalues of $T(P_{\left. \varphi \right. })$, $T(P{{_{\left.
\psi \right. }}})$, with multiplicities $n_{\left. \nu \right. },m{_{\left.
\mu \right. }}$, respectively. Then Eq. (\ref{eq4-ab5}) reads: 
\begin{equation}
\sum_{\left. \nu \right. }n_{\left. \nu \right. }a_{\left. \nu \right.
}^n=\sum_{\left. \mu \right. }m{_{\left. \mu \right. }}b_{\left. \mu \right.
}^n,\ \ \text{for all}\ \ n\in \Bbb{N}.  \label{eq4-ab6}
\end{equation}
The orthonormal systems of eigenvectors shall now be denoted $\varphi _{\nu
,i}$, $\psi _{\mu ,j}$, where $i\in \{1,\cdots ,n_{\left. \nu \right. }\}$, $%
j\in \{1,\cdots ,m_{\left. \mu \right. }\}$. Now observe that due to (\ref
{eq4-ab3}),(\ref{eq4-ab4}), 
\begin{equation*}
\langle {T(P{_{\left. \psi \right. }})}\,,\,{P_{\varphi _{\left. \nu
,i\right. }}}\rangle =\langle {P{_{\left. \psi \right. }}}\,,\,{%
T^{*}(P_{\varphi _{\left. \nu ,i\right. }})}\rangle =a_{\left. \nu \right.
}\langle {P{_{\left. \psi \right. }}}\,,\,{P_{\left. \varphi \right. }}%
\rangle ,
\end{equation*}
which implies that for each $\nu $ there must be $\mu ,j$ such that $\langle 
{P_{\psi _{\mu ,j}}}\,,\,{P_{\varphi _{\nu ,i}}}\rangle \ne 0$. Therefore,
by (\ref{eq4-ab2}), $a_{\left. \nu \right. }=b_{\left. \mu \right. }$. A
similar reasoning entails that for each $\mu $ there must be $\nu $ such
that $a_{\left. \nu \right. }=b_{\left. \mu \right. }$. Since the two
sequences of eigenvalues are strictly decreasing, they must be identical: $%
a_{\left. \nu \right. }=b_{\left. \nu \right. }$, for all values of $\nu $.
It remains to be shown that the multiplicities coincide as well. Equation (%
\ref{eq4-ab6}) can be written as 
\begin{equation}
\sum_{\nu =1}^N(n_{\left. \nu \right. }-m_{\left. \nu \right. })\,a_{\left.
\nu \right. }^n\ =\ 0\ \ \text{for all}\ \ n\in \Bbb{N}.  \label{eq4-mult}
\end{equation}
We show that $x_{\left. \nu \right. }:=n_{\left. \nu \right. }-m_{\left. \nu
\right. }=0$ for all $\nu =1,\cdots ,N$. This is obvious for $N=1$. So let $%
N>1$. Suppose $x_1\ne 0$. Then (\ref{eq4-mult}) gives 
\begin{equation*}
0=1+\left( a_1^nx_1\right) ^{-1}\,\sum_{\kappa =2}^Na_{\left. \kappa \right.
}^nx_{\left. \kappa \right. },
\end{equation*}
that is, 
\begin{equation*}
\sum_{\kappa =2}^N\left( \frac{a_{\left. \kappa \right. }}{a_1}\right) ^n%
\frac{x_{\left. \kappa \right. }}{x_1}=-1.
\end{equation*}
But then 
\begin{equation*}
1=\left| \sum_{\kappa =2}^N\left( \frac{a_{\left. \kappa \right. }}{a_1}%
\right) ^n\frac{x_{\left. \kappa \right. }}{x_1}\right| \le \sum_{\kappa
=2}^N\left( \frac{a_{\left. \kappa \right. }}{a_1}\right) ^n\left| \frac{%
x_{\left. \kappa \right. }}{x_1}\right| =:S(n)\ (<\infty ).
\end{equation*}
Since $a_1>a_2>\cdots $, we also have: 
\begin{equation*}
S(n+1)=\sum_{\kappa =2}^N\left( \frac{a_{\left. \kappa \right. }}{a_1}%
\right) ^{n+1}\left| \frac{x_{\left. \kappa \right. }}{x_1}\right| \le
\left( \frac{a_2}{a_1}\right) ^nS(1).
\end{equation*}
Hence $S(n)\downarrow 0$ and $S(n)\ge 1$, which is a contradiction.
Therefore, $x_1=0$. The argument can be repeated for $x_2,x_3,\cdots $ to
yield $x_{\left. \nu \right. }=0$ for all $\nu $.\newline
Note that Eq.\ (\ref{eq4-ab2}) also implies 
\begin{equation}
\langle {P_{\psi _{\mu ,j}}}\,,\,{P_{\varphi _{\nu ,i}}}\rangle =0\ \ \text{%
for}\ \ \nu \ne \mu .  \label{eq4-orth}
\end{equation}
Finally we get, putting $a_{\left. \nu \right. }=\tilde{w}_{\left. \nu
\right. }$: 
\begin{equation}
T(P_{\left. \varphi \right. })=\sum_{\nu =1}^N\tilde{w}_{\left. \nu \right.
}\left( \sum_{i=1}^{m_{\left. \nu \right. }}P_{\varphi _{\nu ,i}}\right) ,\
\ T(P_{\left. \psi \right. })=\sum_{\nu =1}^N\tilde{w}_{\left. \nu \right.
}\left( \sum_{j=1}^{m_{\left. \nu \right. }}P_{\psi _{\left. \nu ,j\right.
}}\right) ,  \label{eq4-dec5}
\end{equation}
where $\sum_{\left. \nu \right. }\tilde{w}_{\left. \nu \right. }n_{\left.
\nu \right. }=1$. This is of the form of Eq.\ (4.1), and due to (\ref
{eq4-orth}) the projections $\Pi _{\left. \varphi \right. }^{\left. \nu
\right. }$, $\Pi _{\left. \psi \right. }^{\left. \mu \right. }$ are mutually
orthogonal for any $\varphi ,\psi \ne 0$ ($\langle \varphi |\psi \rangle \ne
0$) if $\nu \ne \mu $.

Case 2: $\left\langle \,\varphi \,|\,\psi \,\right\rangle =0$. Replace $\psi 
$ by a unit vector $\tilde{\psi}$ in the span of $\varphi ,\psi $ and not
parallel to $\varphi $ or $\psi $. Then apply the argument of Case 1 to
obtain equations (\ref{eq4-dec5}) for the pairs $\varphi ,\tilde{\psi}$ and $%
\tilde{\psi},\psi $. It follows that equations of this form also hold for
orthogonal pairs $\varphi ,\psi $.%
\end{proof}

\begin{Lem}
\label{le4-dec2}Let $T:V\rightarrow \tilde{V}$ be a stochastic isometry.
There exists a complete family of mutually orthogonal projections $P_{\left.
\nu \right. }$, $\nu =0,1,\cdots ,N\le \infty $, $\sum_{\nu =0}^NP_{\left.
\nu \right. }=I$, such that 
\begin{equation}
T(\rho )=\sum_{\nu =0}^NP_{\left. \nu \right. }\,T(\rho )\,P_{\left. \nu
\right. }.  \label{eq4-dec6}
\end{equation}
For $\nu =1,\cdots ,N$, $\varphi \in {\mathcal{H}}\backslash \{0\}$, let $\Pi
_{\left. \varphi \right. }^{\left. \nu \right. }$ be the projections of rank 
$m_{\left. \nu \right. }$ established in Lemma \ref{le4-dec1}. Then $%
P_{\left. \nu \right. }=\bigvee_{\varphi \in {\mathcal{H}}\backslash \{0\}}\Pi
_{\left. \varphi \right. }^{\left. \nu \right. }$ for $\nu =1,\cdots N$, and 
$P_0=I-\sum_{\nu =1}^NP_{\left. \nu \right. }$.
\end{Lem}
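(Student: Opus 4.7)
The plan is to take the projections $P_{\nu}$ to be defined exactly as the statement dictates, then verify in turn (a) they are mutually orthogonal, (b) $P_0$ makes sense as a projection, and (c) the block–diagonal formula holds first on pure states and then in general by continuity.

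First, I would establish mutual orthogonality of $P_1,\ldots,P_N$. By the last sentence of Lemma \ref{le4-dec1}, whenever $\nu\neq\mu$ we have $\Pi_{\left.\varphi\right.}^{\left.\nu\right.}\perp\Pi_{\left.\psi\right.}^{\left.\mu\right.}$ for all $\varphi,\psi\in{\mathcal{H}}\setminus\{0\}$, i.e.\ $\Pi_{\left.\varphi\right.}^{\left.\nu\right.}\le I-\Pi_{\left.\psi\right.}^{\left.\mu\right.}$. Fixing $\nu\neq\mu$ and first taking the supremum over $\psi$, one gets $\Pi_{\left.\varphi\right.}^{\left.\nu\right.}\le I-P_{\left.\mu\right.}=P_{\left.\mu\right.}^{\perp}$; then taking the supremum over $\varphi$ yields $P_{\left.\nu\right.}\le P_{\left.\mu\right.}^{\perp}$. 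Since $\{P_\nu\}_{\nu\ge1}$ is a mutually orthogonal family, the strong-operator sum $\sum_{\nu=1}^{N}P_{\left.\nu\right.}$ is a projection bounded above by $I$, so $P_0:=I-\sum_{\nu=1}^{N}P_{\left.\nu\right.}$ is a well-defined projection, and $\{P_\nu\}_{\nu=0}^{N}$ is a complete orthogonal family.

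Next, I would verify the decomposition (\ref{eq4-dec6}) on pure states. By Lemma \ref{le4-dec1}, $T(P_{\left.\varphi\right.})=\sum_{\nu=1}^{N}\tilde{w}_{\left.\nu\right.}\Pi_{\left.\varphi\right.}^{\left.\nu\right.}$, with $\Pi_{\left.\varphi\right.}^{\left.\nu\right.}\le P_{\left.\nu\right.}$. Because the $P_{\left.\mu\right.}$'s are pairwise orthogonal, $P_{\left.\mu\right.}\Pi_{\left.\varphi\right.}^{\left.\nu\right.}P_{\left.\mu'\right.}=\delta_{\mu\nu}\delta_{\mu\mu'}\Pi_{\left.\varphi\right.}^{\left.\nu\right.}$, so $P_{\left.\mu\right.}T(P_{\left.\varphi\right.})P_{\left.\mu'\right.}=0$ when $\mu\neq\mu'$ and $P_0 T(P_{\left.\varphi\right.})P_0=0$; consequently $T(P_{\left.\varphi\right.})=\sum_{\nu=0}^{N}P_{\left.\nu\right.}T(P_{\left.\varphi\right.})P_{\left.\nu\right.}$.

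Finally I would extend by linearity and trace-norm continuity. Any $\rho\in\mathcal{S}$ admits a spectral expansion $\rho=\sum_{k}\lambda_{k}P_{\left.\varphi_{k}\right.}$ converging in trace norm, and $T$ is a trace-norm contraction (Lemma \ref{le2-stoch}), so $T(\rho)=\sum_{k}\lambda_{k}T(P_{\left.\varphi_{k}\right.})$ in trace norm. The pinching $\Phi(\sigma):=\sum_{\nu=0}^{N}P_{\left.\nu\right.}\sigma P_{\left.\nu\right.}$ is a trace-norm contraction on $\tilde{V}$ (for $\sigma\in\tilde{V}^{+}$ the partial sums increase and have bounded trace $\mathrm{tr}[\sigma]$, so converge in trace norm; for general $\sigma$ split into $\sigma_{+}-\sigma_{-}$), and by the pure-state step it fixes every $T(P_{\left.\varphi_{k}\right.})$. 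Passing to the limit on both sides gives $T(\rho)=\Phi(T(\rho))$, and extension to all of $V$ is immediate. The one point requiring real care—hence the main obstacle—is the trace-norm convergence of the pinching series when $N=\infty$, which I would handle via the bounded-trace monotone argument just sketched rather than by any delicate uniform estimate.
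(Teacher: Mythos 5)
Your proof is correct and takes essentially the same route as the paper's: mutual orthogonality of the $P_{\nu}$ from the final assertion of Lemma \ref{le4-dec1}, verification of the pinching identity on pure states via $\Pi_{\varphi}^{\nu}\le P_{\nu}$, and extension to all of $V$ by linearity and trace-norm continuity of $T$. The paper compresses the last step into the single phrase ``by continuity of $T$''; your explicit check that the pinching map is itself a trace-norm contraction merely fills in that detail.
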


\begin{proof}
The projections $P_{\left. \nu \right. }$ are mutually orthogonal by virtue
of the orthogonality of all $\Pi _{\left. \varphi \right. }^{\left. \nu
\right. }$, $\Pi _{\left. \psi \right. }^{\left. \mu \right. }$, $\nu \ne
\mu $. Then $P_{\left. \nu \right. }\Pi _{\left. \varphi \right. }^{\left.
\mu \right. }=\delta _{\mu \nu }\Pi _{\left. \varphi \right. }^{\left. \nu
\right. }=\Pi _{\left. \varphi \right. }^{\left. \mu \right. }P_{\left. \nu
\right. }$, and so $\sum_{\left. \nu \right. }P_{\left. \nu \right.
}T(P_{\left. \varphi \right. })\,P_{\left. \nu \right. }=\sum_{\left. \nu
\right. }\tilde{w}_{\left. \nu \right. }\Pi _{\left. \varphi \right.
}^{\left. \nu \right. }=T(P_{\left. \varphi \right. })$. By continuity of $T$%
, this equality extends to all $T(\rho )$, $\rho \in V$.%
\end{proof}

\noindent \textbf{Proof of Proposition}\textsc{\ }\textbf{\ref{pro3-dec}.}
For the $\tilde{w}_{\left. \nu \right. },m_{\left. \nu \right. }$ given in
Lemma \ref{le4-dec1}, put $w_{\left. \nu \right. }=\tilde{w}_{\left. \nu
\right. }m_{\left. \nu \right. }$, and define the maps $T_{\left. \nu
\right. }$ via 
\begin{equation*}
T_{\left. \nu \right. }(P_{\left. \varphi \right. }):=\frac 1{m_{\left. \nu
\right. }}\Pi _{\left. \varphi \right. }^{\left. \nu \right. }={w_{\left.
\nu \right. }}^{-1}P{_{\left. \nu \right. }}T(P_{\left. \varphi \right. })P{%
_{\left. \nu \right. }},
\end{equation*}
with the projections $P{_{\left. \nu \right. }},\nu =1,2,\cdots ,N$ given in
Lemma \ref{le4-dec2}. Then (\ref{eq4-dec1}) realizes (\ref{eq3-dec}), and in
view of (\ref{eq4-dec6}), the maps $T{_{\left. \nu \right. }}$ are mixing
stochastic maps on $V$ with ranges in ${\mathcal{H}}_1(\tilde{{\mathcal{H}}}{%
_{\left. \nu \right. }})_s$, $\tilde{{\mathcal{H}}}{_{\left. \nu \right. }}=P{%
_{\left. \nu \right. }}\tilde{{\mathcal{H}}}$.

\section{Mixing Stochastic Isometries\label{sec-mix}}

We establish some properties of the $m$-mixing stochastic isometries which
are instrumental in proving Proposition \ref{pro3-mix}. The proof given here
applies Hilbert space techniques and emphasizes elementary geometrical
aspects, particularly the preservation of orthogonality. In the appendix an
alternative proof is presented that is based on a link with Kadison's
characterization of isometries of operator algebras.

\begin{Lem}
\label{le5-mix1}Let $T:V\to \tilde{V}$ be an $m$-mixing stochastic isometry,
with $T(P_\phi )=\frac 1m\Pi _\phi $, $\phi \in {\mathcal{H}}\backslash \{0\}$%
. Let $\varphi ,\psi \in {\mathcal{H}},\xi \in \tilde{{\mathcal{H}}}$ be unit
vectors. Then the following hold: 
\begin{equation}
P_{\left. \xi \right. }\le \Pi _{\left. \varphi \right. }\ \ \
\Longrightarrow \ \ T^{*}(P_{\left. \xi \right. })\ =\ \frac 1mP_{\left.
\varphi \right. };  \label{eq5-mix1}
\end{equation}
\begin{equation}
P_{\left. \xi \right. }\le \Pi _{\left. \varphi \right. }\ \ \
\Longrightarrow \ \ \langle {P_{\left. \xi \right. }}\,,\,{\Pi _{\left. \psi
\right. }}\rangle \ =\ \langle {P_{\left. \varphi \right. }}\,,\,{P_{\left.
\psi \right. }}\rangle ;  \label{eq5-mix2}
\end{equation}
\begin{equation}
\Pi _{\left. \varphi \right. }\Pi {_{\left. \psi \right. }}\Pi _{\left.
\varphi \right. }\ =\ \langle {P_{\left. \varphi \right. }}\,,\,{P_{\left.
\psi \right. }}\rangle \,\Pi _{\left. \varphi \right. }.  \label{eq5-mix3}
\end{equation}
\end{Lem}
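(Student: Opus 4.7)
The strategy is to extract (\ref{eq5-mix1}) from Lemma \ref{le4-is}(iii), obtain (\ref{eq5-mix2}) by trace duality, and then upgrade it to the operator identity (\ref{eq5-mix3}) by polarization inside the range of $\Pi_\varphi$.

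For (\ref{eq5-mix1}), I first observe that the support projection of $T(P_\varphi)=\tfrac{1}{m}\Pi_\varphi$ is $\Pi_\varphi$ itself: the pairing $\langle T(P_\varphi),\Pi_\varphi\rangle=\tfrac{1}{m}\mathrm{tr}[\Pi_\varphi]=1$, while any strictly smaller projection would give a pairing less than $1$. Lemma \ref{le4-is}(iii) then supplies $T^{*}(\Pi_\varphi)=P_\varphi$. Since $T^{*}$ is positive and $P_\xi\le\Pi_\varphi$, we get $0\le T^{*}(P_\xi)\le P_\varphi$; rank-oneness of $P_\varphi$ forces $T^{*}(P_\xi)=c\,P_\varphi$ for some $c\in[0,1]$. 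Duality fixes the constant:
\begin{equation*}
c=\langle P_\varphi,T^{*}(P_\xi)\rangle=\langle T(P_\varphi),P_\xi\rangle=\tfrac{1}{m}\mathrm{tr}[\Pi_\varphi P_\xi]=\tfrac{1}{m},
\end{equation*}
where the last equality uses $P_\xi\le\Pi_\varphi$, i.e.\ $\Pi_\varphi P_\xi=P_\xi$.

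Claim (\ref{eq5-mix2}) is then immediate: $\langle P_\xi,\Pi_\psi\rangle=m\,\langle P_\xi,T(P_\psi)\rangle=m\,\langle T^{*}(P_\xi),P_\psi\rangle=\langle P_\varphi,P_\psi\rangle$. For (\ref{eq5-mix3}), abbreviate $c=\langle P_\varphi,P_\psi\rangle$ and consider the self-adjoint operator $B:=\Pi_\varphi\Pi_\psi\Pi_\varphi-c\,\Pi_\varphi$, which is supported inside the (finite-dimensional) range of $\Pi_\varphi$. For any unit vector $\xi$ with $P_\xi\le\Pi_\varphi$, (\ref{eq5-mix2}) yields $\langle\xi|B|\xi\rangle=\langle P_\xi,\Pi_\psi\rangle-c=0$. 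A self-adjoint operator whose quadratic form vanishes on every unit vector of its supporting subspace is zero, by the standard polarization identity applied to arbitrary pairs $\xi_1,\xi_2$ in the range of $\Pi_\varphi$. Hence $B=0$, establishing (\ref{eq5-mix3}).

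The core move is the identification of $\Pi_\varphi$ as the support of $T(P_\varphi)$; once Lemma \ref{le4-is} fires, (\ref{eq5-mix1}) and (\ref{eq5-mix2}) are essentially one-line duality computations. The only step requiring real attention is the polarization in (\ref{eq5-mix3}): it matters that $\langle\xi|B|\xi\rangle=0$ holds for \emph{every} unit $\xi$ in the range of $\Pi_\varphi$, not merely for one orthonormal basis—this is what promotes the scalar identity (\ref{eq5-mix2}) into the operator identity (\ref{eq5-mix3}).
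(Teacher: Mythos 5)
Your proposal is correct and follows essentially the same route as the paper: Lemma \ref{le4-is} plus positivity and rank-oneness of $P_\varphi$ give (\ref{eq5-mix1}), duality gives (\ref{eq5-mix2}), and the operator identity (\ref{eq5-mix3}) follows from the vanishing of the quadratic form of $\Pi_\varphi\Pi_\psi\Pi_\varphi-\langle P_\varphi,P_\psi\rangle\Pi_\varphi$ on the range of $\Pi_\varphi$. Your appeal to complex polarization is exactly what the paper does concretely with the vectors $(\varphi_j+\alpha\varphi_k)/\sqrt2$ for $\alpha=1,i$, so the two arguments coincide in substance.
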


\begin{proof}
Let $P{_{\left. \xi \right. }}\le \Pi _{\left. \varphi \right.
}=mT(P_{\left. \varphi \right. })$. Then by Lemma \ref{le4-is}, $0\le T^{*}(P%
{_{\left. \xi \right. }})\le T^{*}(\Pi _{\left. \varphi \right. })=P_{\left.
\varphi \right. }$, and therefore $T^{*}(P{_{\left. \xi \right. }}%
)=a\,P_{\left. \varphi \right. }$, $0\le a\le 1$. Further, $1=\langle {P}${$%
_{\left. \xi \right. }$}$\,,\,{\Pi }${$_{\left. \varphi \right. }$}$\rangle
=\langle {T^{*}(P}${$_{\left. \xi \right. }$}${)}\,,\,{m\,P}${$_{\left.
\varphi \right. }$}$\rangle =a\cdot m$, hence $a=\frac 1m$. This proves (\ref
{eq5-mix1}).

Let $P{_{\left. \xi \right. }}\le \Pi _{\left. \varphi \right. }$, then with 
$\Pi {_{\left. \psi \right. }}=mT(P{_{\left. \psi \right. }})$ and (\ref
{eq5-mix1}) one obtains 
\begin{equation*}
\langle {P_{\left. \xi \right. }}\,,\,{\Pi _{\left. \psi \right. }}\rangle
=m\langle {T^{*}(P_{\left. \xi \right. })}\,,\,{P_{\left. \psi \right. }}%
\rangle =\langle {P_{\left. \varphi \right. }}\,,\,{P_{\left. \psi \right. }}%
\rangle ,
\end{equation*}
that is, (\ref{eq5-mix2}).

Let $\{\varphi _k|k=1,\cdots m\}\cup \{\varphi _\ell ^{\prime }|\ell \in L\}$
be an orthonormal basis of ${\mathcal{H}}$ such that $\Pi _{\left. \varphi
\right. }=\sum_{k=1}^mP_{\varphi _k}$ and thus $\Pi _{\left. \varphi \right.
}\varphi _\ell ^{\prime }=0$ for all $\ell $. Then the only nonvanishing
matrix elements of the (finite rank) operator $\Pi _{\left. \varphi \right.
}\Pi {_{\left. \psi \right. }}P_{\left. \varphi \right. }$ are those
obtained from the $\varphi _k$, and (\ref{eq5-mix3}) is equivalent to 
\begin{equation}
\left\langle \,\varphi _j\,|\,\Pi {_{\left. \psi \right. }}\varphi
_k\,\right\rangle =\langle {P_{\left. \varphi \right. }}\,,\,{P_{\left. \psi
\right. }}\rangle \,\delta _{jk}.  \label{eq5-mix4}
\end{equation}
For $j=k$ one has $\left\langle \,\varphi _k\,|\,\Pi {_{\left. \psi \right. }%
}\varphi _k\,\right\rangle =\langle {P_{\varphi _k}}\,,\,{\Pi }${$_{\left.
\psi \right. }$}$\rangle $, which equals $\langle {P}${$_{\left. \varphi
\right. }$}$\,,\,{P}${$_{\left. \psi \right. }$}$\rangle $ by virtue of
(5.2).

For $j\ne k$, consider the unit vectors $\eta :=\left( \varphi _j+\alpha
\varphi _k\right) /\surd 2$, $|\alpha |=1$. We have $P_{\left. \eta \right.
}\le \Pi _{\left. \varphi \right. }$ and therefore, by (\ref{eq5-mix2}), 
\begin{equation}
\langle {P_{\left. \eta \right. }}\,,\,{\Pi _{\left. \psi \right. }}\rangle
=\langle {P_{\left. \varphi \right. }}\,,\,{P_{\left. \psi \right. }}\rangle
.  \label{eq5-mix5}
\end{equation}

Now observe that for any unit vectors $\varphi ,\psi $, $a\varphi +b\psi $ ($%
a,b\in \Bbb{C}$), the projection $P_{a\varphi +b\psi }$ can be written as 
\begin{equation}
P_{a\varphi +b\psi }\ =\ |a|^2P_{\left. \varphi \right. }+|b|^2P{_{\left.
\psi \right. }}+a\overline{b}\,A_{\varphi \psi }+\overline{a}b\,A_{\psi
\varphi },  \label{eq5-pol1}
\end{equation}
where $A_{\varphi \psi }$is the operator of rank one defined via 
\begin{equation}
A_{\varphi \psi }\xi \ =\ \varphi \,\left\langle \,\psi \,|\,\xi
\,\right\rangle ,\ \ \xi \in {\mathcal{H}}.  \label{eq5-pol2}
\end{equation}
Thus we obtain 
\begin{equation}
P{_{\left. \eta \right. }}\ =\ P_{(\varphi _j+a\varphi _k)/\surd 2}\ =\
\tfrac 12P_{\varphi _j}\,+\tfrac 12P_{\varphi _k}\,+\,\tfrac
12\{a\,A_{\varphi _k\varphi _j}\,+\,\overline{a}\,A_{\varphi _j\varphi _k}\}.
\label{eq5-pol3}
\end{equation}

\noindent Again by (\ref{eq5-mix2}), we have $\langle {P_{\varphi _j}}\,,\,{%
\Pi }${$_{\left. \psi \right. }$}$\rangle =\langle {P_{\varphi _k}}\,,\,{\Pi 
}${$_{\left. \psi \right. }$}$\rangle \allowbreak =\langle {P}${$_{\left.
\varphi \right. }$}$\,,\,{P}${$_{\left. \psi \right. }$}$\rangle $. Then
combining (\ref{eq5-pol3}) and (\ref{eq5-mix5}) yields 
\begin{equation*}
\text{Re}\left\{ a\,\left\langle \,\varphi _j\,|\,\Pi _{\left. \psi \right.
}\varphi _k\,\right\rangle \right\} =0.
\end{equation*}
Choosing for $a$ the values $a=1,i$, one concludes that $\left\langle
\,\varphi _j\,|\,\Pi _{\left. \psi \right. }\varphi _k\,\right\rangle =0$.
Hence (\ref{eq5-mix4}) follows, and so (\ref{eq5-mix3}) is verified.%
\end{proof}

\begin{Prop}
\label{pro5-mix1}Let $T:V\rightarrow \tilde{V}$ be an $m$-mixing stochastic
map. The following statements are equivalent.\newline
\noindent (i)\ \ $T$ is a stochastic isometry;\newline
\noindent (ii)\ $P{_{\left. \xi \right. }}\le \Pi _{\left. \varphi \right.
}\ \ \ \Longrightarrow \ \ T^{*}(P{_{\left. \xi \right. }})\ =\ \frac
1mP_{\left. \varphi \right. }$ for all $\varphi \in {\mathcal{H}}\backslash
\{0\},\xi \in \tilde{{\mathcal{H}}}\backslash \{0\}$;\newline
\noindent (iii) $P{_{\left. \xi \right. }}\le \Pi _{\left. \varphi \right.
}\ \ \ \Longrightarrow \ \ \langle {P}${$_{\left. \xi \right. }$}$\,,\,{\Pi }
${$_{\left. \psi \right. }$}$\rangle \ =\ \langle {P}${$_{\left. \varphi
\right. }$}$\,,\,{P}${$_{\left. \psi \right. }$}$\rangle $ for all $\xi
,\varphi ,\psi \in {\mathcal{H}}\backslash \{0\},\xi \in \tilde{{\mathcal{H}}}%
\backslash \{0\}$;

\noindent (iv)$\ \Pi _{\left. \varphi \right. }\Pi _{\left. \psi \right.
}\Pi _{\left. \varphi \right. }\ =\ \langle {P}${$_{\left. \varphi \right. }$%
}$\,,\,{P}${$_{\left. \psi \right. }$}$\rangle \,\Pi _{\left. \varphi
\right. }$ for all $\varphi ,\psi \in {\mathcal{H}}\backslash \{0\}$.
\end{Prop}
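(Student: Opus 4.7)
The plan is to close a cycle of implications (i) $\Rightarrow$ (ii) $\Rightarrow$ (iii) $\Rightarrow$ (iv) $\Rightarrow$ (i). The three forward arrows (i) $\Rightarrow$ (ii), (iii), (iv) are already contained in Lemma \ref{le5-mix1}, so the substantive work lies in the reverse direction, and in fact most of the algebra needed there has also been carried out inside the proof of that lemma.

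For (ii) $\Rightarrow$ (iii), assume $P_\xi \le \Pi_\varphi$. Using $\Pi_\psi = m\,T(P_\psi)$ and duality between $T$ and $T^*$,
\begin{equation*}
\langle P_\xi, \Pi_\psi\rangle = m\langle P_\xi, T(P_\psi)\rangle = m\langle T^*(P_\xi), P_\psi\rangle = \langle P_\varphi, P_\psi\rangle,
\end{equation*}
where the last step invokes (ii). For (iii) $\Rightarrow$ (iv), I would reproduce the polarization argument already used in the proof of Lemma \ref{le5-mix1}: pick an orthonormal basis $\{\varphi_k\}_{k=1}^m \cup \{\varphi'_\ell\}$ of $\tilde{\mathcal{H}}$ adapted to $\Pi_\varphi = \sum_{k=1}^m P_{\varphi_k}$, note that only matrix elements of $\Pi_\varphi \Pi_\psi \Pi_\varphi$ in the $\varphi_k$ survive, evaluate the diagonal ones from (iii) directly, and for the off-diagonal ones apply (iii) to $P_\eta$ with $\eta = (\varphi_j + a\varphi_k)/\sqrt{2}$ together with the identity \eqref{eq5-pol1} and the choices $a = 1, i$ to kill the cross-terms; this yields exactly \eqref{eq5-mix3}.

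The one genuinely new step is (iv) $\Rightarrow$ (i), and the key observation is that the geometric identity (iv) is strong enough to force orthogonality of ranges. Indeed, by Proposition \ref{pro2-orth} it suffices to verify that $T$ preserves orthogonality of pure states. Given $\langle \varphi | \psi\rangle = 0$ we have $\langle P_\varphi, P_\psi\rangle = 0$, so (iv) gives $\Pi_\varphi \Pi_\psi \Pi_\varphi = 0$. Writing this as $(\Pi_\psi \Pi_\varphi)^*(\Pi_\psi \Pi_\varphi) = 0$ and taking norms yields $\Pi_\psi \Pi_\varphi = 0$, and therefore
\begin{equation*}
T(P_\varphi)\,T(P_\psi) = \tfrac{1}{m^2}\Pi_\varphi \Pi_\psi = 0,
\end{equation*}
so $T(P_\varphi) \perp T(P_\psi)$. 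By Proposition \ref{pro2-orth}, $T$ is then a stochastic isometry, closing the cycle.

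I do not anticipate a serious obstacle: the analytical content is all packaged in Lemma \ref{le5-mix1}, and the only piece not explicitly present there is the elementary C*-identity turning $\Pi_\varphi \Pi_\psi \Pi_\varphi = 0$ into $\Pi_\varphi \Pi_\psi = 0$. The mild subtlety is being careful that in (iii) the vector $\xi$ ranges over $\tilde{\mathcal{H}}$ while $\varphi,\psi$ range over $\mathcal{H}$, so in the polarization step the test vectors $\eta = (\varphi_j + a\varphi_k)/\sqrt{2}$ must be built from an orthonormal basis of $\tilde{\mathcal{H}}$ adapted to the range of $\Pi_\varphi \subset \tilde{\mathcal{H}}$, rather than inside $\mathcal{H}$.
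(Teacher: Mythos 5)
Your proposal is correct and relies on the same ingredients as the paper: Lemma \ref{le5-mix1} for the forward implications and Proposition \ref{pro2-orth} (orthogonality preservation on pure states) for the return to (i). The only organizational difference is that the paper shows each of (ii), (iii), (iv) implies (i) directly by checking $\Pi_{\varphi}\perp\Pi_{\psi}$ whenever $P_{\varphi}\perp P_{\psi}$, whereas you chain (ii)$\Rightarrow$(iii)$\Rightarrow$(iv)$\Rightarrow$(i); your observation that the polarization step inside the proof of Lemma \ref{le5-mix1} uses only statement (iii) rather than the full isometry hypothesis is what makes that chain legitimate, and your remark that the basis adapted to $\Pi_{\varphi}$ lives in $\tilde{\mathcal{H}}$ is a correct reading of the intended statement.
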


\begin{proof}
According to Lemma \ref{le5-mix1}, (i) implies each of the statements (ii),
(iii), (iv). We show that each of the latter statements implies that $T$ is
orthogonality preserving for pairs of pure states, so that, by Proposition 
\ref{pro2-orth}, $T$ is a stochastic isometry.

Assume (ii) holds. Let $P_{\left. \varphi \right. }\perp P_{\left. \psi
\right. }$. Take $\xi \in \tilde{{\mathcal{H}}}\backslash \{0\}$ such that $P{%
_{\left. \xi \right. }}\le \Pi _{\left. \varphi \right. }$. Then 
\begin{equation*}
\langle {P_{\left. \xi \right. }}\,,\,{\Pi _{\left. \psi \right. }}\rangle
=\langle {P_{\left. \xi \right. }}\,,\,{mT(P_{\left. \psi \right. })}\rangle
=\langle {mT^{*}(P_{\left. \xi \right. })}\,,\,{P_{\left. \psi \right. }}%
\rangle =\langle {P_{\left. \varphi \right. }}\,,\,{P_{\left. \psi \right. }}%
\rangle =0.
\end{equation*}
Hence $P{_{\left. \xi \right. }}\perp \Pi {_{\left. \psi \right. }}$. This
holds for any $P{_{\left. \xi \right. }}\le \Pi _{\left. \varphi \right. }$,
and therefore $\Pi _{\left. \varphi \right. }\perp \Pi {_{\left. \psi
\right. }}$.

Assume (iii) holds. Let $P_{\left. \varphi \right. }\perp P{_{\left. \psi
\right. }}$. It follows that for any $\xi \in \tilde{{\mathcal{H}}}\backslash
\{0\}$ with $P{_{\left. \xi \right. }}\le \Pi _{\left. \varphi \right. }$, $%
\langle {P}${$_{\left. \xi \right. }$}$\,,\,{\Pi }${$_{\left. \psi \right. }$%
}$\rangle =\langle {P}${$_{\left. \varphi \right. }$}$\,,\,{P}${$_{\left.
\psi \right. }$}$\rangle =0$, and therefore $\Pi _{\left. \varphi \right.
}\perp \Pi {_{\left. \psi \right. }}$.

Assume (iv) holds. Let $P_{\left. \varphi \right. }\perp P{_{\left. \psi
\right. }}$. It follows that $\Pi _{\left. \varphi \right. }\Pi {_{\left.
\psi \right. }}\Pi _{\left. \varphi \right. }=\langle {P}${$_{\left. \varphi
\right. }$}$\,,\,{P}${$_{\left. \psi \right. }$}$\rangle \Pi _{\left.
\varphi \right. }=0$, and so $\Pi _{\left. \varphi \right. }\perp \Pi {%
_{\left. \psi \right. }}$.%
\end{proof}

The following generalization of the relation (\ref{eq5-mix3}) will be
crucial.

\begin{Prop}
\label{pro5-mix2}Let $T$ be an $m$-mixing stochastic isometry. For unit
vectors $\varphi ,\psi ,\vartheta \in {\mathcal{H}}$, the following relation
holds between the projections $\Pi _{\left. \varphi \right. }=mT(P_{\left.
\varphi \right. })$, $\Pi {_{\left. \psi \right. }}=mT(P{_{\left. \psi
\right. }})$, $\Pi _{\left. \vartheta \right. }=mT(P_{\left. \vartheta
\right. })$: 
\begin{eqnarray}
\Pi _{\left. \varphi \right. }\Pi _{\left. \vartheta \right. }\Pi {_{\left.
\psi \right. }}\Pi _{\left. \varphi \right. }=a\,\,\Pi _{\left. \varphi
\right. }^{(L)}\,+\overline{a}\,\,\Pi _{\left. \varphi \right. }^{(A)},
\label{eq5-prod1} \\
a=\left\langle \,\varphi \,|\,\vartheta \,\right\rangle \,\left\langle
\,\vartheta \,|\,\psi \,\right\rangle \,\left\langle \,\psi \,|\,\varphi
\,\right\rangle .  \notag
\end{eqnarray}
Here $\Pi _{\left. \varphi \right. }^{(L)},\Pi _{\left. \varphi \right.
}^{(A)}$ are projections independent of $\psi ,\vartheta $ and determined
solely by $P_{\left. \varphi \right. }$. They satisfy $\Pi _{\left. \varphi
\right. }^{(L)}+\Pi _{\left. \varphi \right. }^{(A)}=\Pi _{\left. \varphi
\right. }$.
\end{Prop}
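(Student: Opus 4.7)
\medskip
\noindent\textbf{Proof plan for Proposition \ref{pro5-mix2}.}
My approach is to show that $X:=\Pi _{\left. \varphi \right. }\Pi _{\left. \vartheta \right. }\Pi {_{\left. \psi \right. }}\Pi _{\left. \varphi \right. }$, regarded as an operator on the $m$-dimensional subspace $\Pi _{\left. \varphi \right. }\tilde{\mathcal H}$, is normal and has spectrum contained in the two-point set $\{a,\bar a\}$. The spectral theorem then provides the decomposition, with $\Pi _{\left. \varphi \right. }^{(L)}$, $\Pi _{\left. \varphi \right. }^{(A)}$ defined as the corresponding spectral projections. To do this, I plan to compute $XX^{*}$ and $X+X^{*}$ separately and to combine the results.

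First I would compute $XX^{*}=\Pi _{\left. \varphi \right. }\Pi _{\left. \vartheta \right. }\Pi _{\left. \psi \right. }\Pi _{\left. \varphi \right. }\Pi _{\left. \psi \right. }\Pi _{\left. \vartheta \right. }\Pi _{\left. \varphi \right. }$ by applying the sandwich identity (\ref{eq5-mix3}) three times from the inside out: first $\Pi _{\left. \psi \right. }\Pi _{\left. \varphi \right. }\Pi _{\left. \psi \right. }=|\langle \varphi |\psi \rangle |^{2}\Pi _{\left. \psi \right. }$, then $\Pi _{\left. \vartheta \right. }\Pi _{\left. \psi \right. }\Pi _{\left. \vartheta \right. }=|\langle \vartheta |\psi \rangle |^{2}\Pi _{\left. \vartheta \right. }$, and finally $\Pi _{\left. \varphi \right. }\Pi _{\left. \vartheta \right. }\Pi _{\left. \varphi \right. }=|\langle \varphi |\vartheta \rangle |^{2}\Pi _{\left. \varphi \right. }$. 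This yields $XX^{*}=|a|^{2}\Pi _{\left. \varphi \right. }$, and the analogous computation gives $X^{*}X=|a|^{2}\Pi _{\left. \varphi \right. }$; so $X$ is normal on $\Pi _{\left. \varphi \right. }\tilde{\mathcal H}$ with $|X|=|a|\Pi _{\left. \varphi \right. }$.

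For $X+X^{*}=\Pi _{\left. \varphi \right. }(\Pi _{\left. \vartheta \right. }\Pi _{\left. \psi \right. }+\Pi _{\left. \psi \right. }\Pi _{\left. \vartheta \right. })\Pi _{\left. \varphi \right. }$ I would use the polarization identity $\Pi _{\left. \vartheta \right. }\Pi _{\left. \psi \right. }+\Pi _{\left. \psi \right. }\Pi _{\left. \vartheta \right. }=(\Pi _{\left. \vartheta \right. }+\Pi _{\left. \psi \right. })^{2}-(\Pi _{\left. \vartheta \right. }+\Pi _{\left. \psi \right. })$. The linear part is handled by (\ref{eq5-mix3}). For the quadratic part, I would diagonalize $P_{\left. \vartheta \right. }+P{_{\left. \psi \right. }}$ in the two-dimensional span of $\vartheta ,\psi $: writing $P_{\left. \vartheta \right. }+P{_{\left. \psi \right. }}=\mu _{+}P_{e_{+}}+\mu _{-}P_{e_{-}}$ with $\mu _{\pm }=1\pm |\langle \vartheta |\psi \rangle |$ and $e_{\pm }$ orthonormal, the linearity of $T$ and its orthogonality preservation (Proposition \ref{pro2-orth}) give $\Pi _{\left. \vartheta \right. }+\Pi _{\left. \psi \right. }=\mu _{+}\Pi _{e_{+}}+\mu _{-}\Pi _{e_{-}}$ with $\Pi _{e_{+}}\perp \Pi _{e_{-}}$, so $(\Pi _{\left. \vartheta \right. }+\Pi _{\left. \psi \right. })^{2}=\mu _{+}^{2}\Pi _{e_{+}}+\mu _{-}^{2}\Pi _{e_{-}}$. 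Sandwiching with $\Pi _{\left. \varphi \right. }$ and applying (\ref{eq5-mix3}) to each term collapses this to $\langle \varphi |(P_{\left. \vartheta \right. }+P{_{\left. \psi \right. }})^{2}|\varphi \rangle \,\Pi _{\left. \varphi \right. }$. Expanding $(P_{\left. \vartheta \right. }+P{_{\left. \psi \right. }})^{2}=P_{\left. \vartheta \right. }+P{_{\left. \psi \right. }}+P_{\left. \vartheta \right. }P{_{\left. \psi \right. }}+P{_{\left. \psi \right. }}P_{\left. \vartheta \right. }$ and using $\langle \varphi |P_{\left. \vartheta \right. }P{_{\left. \psi \right. }}|\varphi \rangle =a$, $\langle \varphi |P{_{\left. \psi \right. }}P_{\left. \vartheta \right. }|\varphi \rangle =\bar a$, the linear terms cancel and I obtain $X+X^{*}=2\mathrm{Re}(a)\,\Pi _{\left. \varphi \right. }$. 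Combining $|X|^{2}=|a|^{2}\Pi _{\left. \varphi \right. }$ and $X+X^{*}=2\mathrm{Re}(a)\Pi _{\left. \varphi \right. }$ forces every spectral value $\lambda $ of $X|_{\Pi _{\left. \varphi \right. }\tilde{\mathcal H}}$ to satisfy $|\lambda |=|a|$ and $\mathrm{Re}(\lambda )=\mathrm{Re}(a)$, hence $\lambda \in \{a,\bar a\}$, and the decomposition (\ref{eq5-prod1}) with $\Pi _{\left. \varphi \right. }^{(L)}+\Pi _{\left. \varphi \right. }^{(A)}=\Pi _{\left. \varphi \right. }$ follows by the spectral theorem.

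The main obstacle, I expect, is the final assertion that $\Pi _{\left. \varphi \right. }^{(L)},\Pi _{\left. \varphi \right. }^{(A)}$ are \emph{independent} of $\vartheta ,\psi $. When $a\in \mathbb R$ the decomposition is non-unique and places no constraint; when $a\notin \mathbb R$ the spectral projections of $X(\vartheta ,\psi )$ are uniquely determined, and one must show that for two distinct pairs $(\vartheta _{i},\psi _{i})$ yielding non-real $a_{i}$ the spectral projections coincide. A natural route is to prove that all the operators $X(\vartheta ,\psi )$ mutually commute: being normal with two-point spectrum, they would then share common spectral projections, giving a single canonical pair $(\Pi _{\left. \varphi \right. }^{(L)},\Pi _{\left. \varphi \right. }^{(A)})$ depending only on $\varphi $ (equivalently on $P_{\left. \varphi \right. }$). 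Commutation can be attacked by an iterated polarization in both slots: linearly extending $T$ to $\mathcal B_{1}(\mathcal H)$ and expanding $\Pi _{\left. \chi \right. }$ for $\chi =\alpha \vartheta +\beta \psi $ via (\ref{eq5-pol1}), one expresses $X(\chi _{1},\chi _{2})$ polynomially in $\alpha ,\beta ,\gamma ,\delta $ and their conjugates; matching this polynomial structure against $a(\chi _{1},\chi _{2})\Pi _{\left. \varphi \right. }^{(L)}+\overline{a(\chi _{1},\chi _{2})}\Pi _{\left. \varphi \right. }^{(A)}$ should pin the eigenprojections as intrinsic objects. Carrying out this bookkeeping cleanly is the delicate point of the proof.
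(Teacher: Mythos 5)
Your first step is correct and is a genuinely different, and cleaner, route to the \emph{existence} of the decomposition for a fixed pair $(\vartheta,\psi)$: the computations $XX^{*}=X^{*}X=|a|^{2}\Pi_{\varphi}$ (three applications of (\ref{eq5-mix3})) and $X+X^{*}=2\,\mathrm{Re}(a)\,\Pi_{\varphi}$ (via diagonalizing $P_{\vartheta}+P_{\psi}$, pushing the eigenbasis through $T$ by linearity and orthogonality preservation, and applying (\ref{eq5-mix3}) termwise) do check out, and normality plus these two trace identities indeed force the spectrum of $X$ on $\Pi_{\varphi}\tilde{\mathcal{H}}$ into $\{a,\overline{a}\}$. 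The paper instead obtains existence by first constructing, for an orthogonal pair $\varphi_{1},\varphi_{2}$, an explicit partial isometry $U=4\Pi_{\varphi_{1}}\Pi_{\varphi_{1}+i\varphi_{2}}\Pi_{\varphi_{2}}\Pi_{\varphi_{1}+\varphi_{2}}\Pi_{\varphi_{1}}$ with spectrum $\{i,-i\}$ (Lemma \ref{le5-mix2}), and then expanding $P_{\psi},P_{\vartheta}$ in the nine-element polarization basis (\ref{eq5-bas1}) and matching coefficients. Your normality argument replaces all of that bookkeeping for the existence part, which is a real gain.

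However, the substantive content of Proposition \ref{pro5-mix2} --- the assertion that $\Pi_{\varphi}^{(L)},\Pi_{\varphi}^{(A)}$ are \emph{independent} of $\psi,\vartheta$ --- is not proved in your proposal; you correctly identify it as the delicate point but then only offer a plan, and the plan as stated would not close the gap. Proving that all the $X(\vartheta,\psi)$ mutually commute is not sufficient: commuting normal operators with two-point spectra can still have different spectral projections (two distinct commuting diagonal projections of equal rank already give a counterexample), so commutation alone does not "pin the eigenprojections as intrinsic objects." Moreover, the products $X_{1}X_{2}$ involve five-fold alternating products of the $\Pi$'s, which (\ref{eq5-mix3}) cannot reduce, so even establishing commutation would require essentially the machinery you are trying to avoid. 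The independence claim is exactly what is needed downstream (Proposition \ref{pro5-mix3} builds the global projections $P^{(L)},P^{(A)}$ out of it), and the paper devotes all of Step 2 to it: when $a\notin\Bbb{R}$ the spectral projections are uniquely determined by $X$, one shows that every admissible auxiliary vector $\varphi_{2}'\perp\varphi$ can be reached by some choice of $\vartheta$ with $a\notin\Bbb{R}$, and a trace-norm continuity argument absorbs the exceptional configuration. Your "iterated polarization in both slots" is in the right spirit --- it is essentially the paper's Step 1 expansion plus Lemmas \ref{le5-mix2} and \ref{le5-mix3} --- but until that bookkeeping, or the paper's uniqueness-plus-connectedness argument, is actually carried out, the proof is incomplete.
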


\noindent The proof will be based the study of a number of special cases and
on exploiting the linearity of $T$. We note two trivial special cases: if
any single pair among the three vectors are mutually orthogonal then the
left hand side of Eq.\ (\ref{eq5-prod1}) is identically 0; and if any two of
these vectors are linearly dependent then (\ref{eq5-prod1}) reduces to (\ref
{eq5-mix3}).

\begin{Lem}
\label{le5-mix2}Let $T$ be an $m$-mixing stochastic isometry. For any pair
of mutually orthogonal unit vectors $\varphi _1,\varphi _2\in {\mathcal{H}}$
the following holds: 
\begin{equation}
\Pi _{\varphi _1}\,\Pi _{\varphi _1+i\varphi _2}\Pi _{\varphi _1+\varphi
_2}\Pi _{\varphi _1}\ =\ {{\textstyle{\frac{1+i}4}}}\,\Pi _{\varphi
_1}^{(L)}\,+\,{{\textstyle{\frac{1-i}4}}}\,\Pi _{\varphi _1}^{(A)}.
\label{eq5-prod2}
\end{equation}
Here $\Pi _{\varphi _1}^{(L)},\Pi _{\varphi _1}^{(A)}$ are projections
satisfying $\Pi _{\varphi _1}^{(L)}+\Pi _{\varphi _1}^{(A)}=\Pi _{\varphi _1}
$.
\end{Lem}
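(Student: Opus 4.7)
The plan is to polarize the mixed projections and extract a Clifford-algebra structure from the one-parameter family $(\varphi_1+e^{i\theta}\varphi_2)/\sqrt{2}$. Set $\eta:=(\varphi_1+\varphi_2)/\sqrt{2}$ and $\zeta:=(\varphi_1+i\varphi_2)/\sqrt{2}$; these are unit vectors and $\Pi_{\varphi_1+\varphi_2}=\Pi_\eta$, $\Pi_{\varphi_1+i\varphi_2}=\Pi_\zeta$. Applying (\ref{eq5-pol1}) gives $P_\eta=\tfrac12(P_{\varphi_1}+P_{\varphi_2})+\tfrac12 R$ and $P_\zeta=\tfrac12(P_{\varphi_1}+P_{\varphi_2})+\tfrac12 S$, where $R:=A_{\varphi_1\varphi_2}+A_{\varphi_2\varphi_1}$ and $S:=i(A_{\varphi_2\varphi_1}-A_{\varphi_1\varphi_2})$ are self-adjoint. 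Linearity of $T$ together with $T(P_\phi)=\tfrac{1}{m}\Pi_\phi$ then produces
\[
\Pi_\eta=\tfrac12(\Pi_{\varphi_1}+\Pi_{\varphi_2})+C,\qquad \Pi_\zeta=\tfrac12(\Pi_{\varphi_1}+\Pi_{\varphi_2})+D,
\]
with self-adjoint $C:=\tfrac{m}{2}T(R)$ and $D:=\tfrac{m}{2}T(S)$.

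Next I would collect structural facts. Proposition \ref{pro2-orth} applied to $\varphi_1\perp\varphi_2$ gives $\Pi_{\varphi_1}\perp\Pi_{\varphi_2}$; set $\Pi_{\mathcal{K}}:=\Pi_{\varphi_1}+\Pi_{\varphi_2}$. Applying $T$ to $P_\eta+P_{(\varphi_1-\varphi_2)/\sqrt{2}}=P_{\varphi_1}+P_{\varphi_2}$ and using orthogonality of the two images yields $\Pi_\eta\le\Pi_{\mathcal{K}}$, so $C,D$ are supported in $\mathcal{K}$. By (\ref{eq5-mix3}), $\Pi_{\varphi_i}\Pi_\eta\Pi_{\varphi_i}=\tfrac12\Pi_{\varphi_i}$, forcing $\Pi_{\varphi_i}C\Pi_{\varphi_i}=0$, and similarly for $D$; put $K:=\Pi_{\varphi_1}C\Pi_{\varphi_2}$, $L:=\Pi_{\varphi_1}D\Pi_{\varphi_2}$, so $C=K+K^*$, $D=L+L^*$. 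Repeating the polarization on the whole circle $(\varphi_1+e^{i\theta}\varphi_2)/\sqrt{2}$ yields a rank-$m$ projection $\tfrac12\Pi_{\mathcal{K}}+\cos\theta\cdot C+\sin\theta\cdot D$ for every $\theta\in\mathbb{R}$; demanding idempotence for all $\theta$ is equivalent to the Clifford-type relations
\[
C^2=D^2=\tfrac14\Pi_{\mathcal{K}},\qquad CD+DC=0,
\]
which further force $KK^*=LL^*=\tfrac14\Pi_{\varphi_1}$, $K^*K=L^*L=\tfrac14\Pi_{\varphi_2}$, and that $\Pi_{\varphi_1}-\Pi_{\varphi_2}$ anticommutes with each of $C,D$ (hence commutes with $DC$).

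Finally, expand $B:=\Pi_{\varphi_1}\Pi_\zeta\Pi_\eta\Pi_{\varphi_1}$ using $\Pi_{\varphi_1}\Pi_\zeta=\tfrac12\Pi_{\varphi_1}+L$ and $\Pi_\eta\Pi_{\varphi_1}=\tfrac12\Pi_{\varphi_1}+K^*$; the cross terms drop (thanks to $\Pi_{\varphi_1}\perp\Pi_{\varphi_2}$) and one is left with $B=\tfrac14\Pi_{\varphi_1}+LK^*$, where $LK^*=\Pi_{\varphi_1}DC\Pi_{\varphi_1}$. The anticommutation $CD+DC=0$ immediately gives $B+B^*=\tfrac12\Pi_{\varphi_1}$, and the key algebraic identity $(LK^*)^2=-\tfrac{1}{16}\Pi_{\varphi_1}$ follows from $\Pi_{\varphi_1}=\tfrac12(\Pi_{\mathcal{K}}+(\Pi_{\varphi_1}-\Pi_{\varphi_2}))$ together with $DCDC=-D^2C^2=-\tfrac{1}{16}\Pi_{\mathcal{K}}$ and the commutation of $DC$ with $\Pi_{\varphi_1}-\Pi_{\varphi_2}$. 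Consequently $V:=-4i\,LK^*$ is self-adjoint and satisfies $V^2=\Pi_{\varphi_1}$, so its $\pm1$ spectral projections $\Pi_{\varphi_1}^{(L)}:=\tfrac12(\Pi_{\varphi_1}+V)$ and $\Pi_{\varphi_1}^{(A)}:=\tfrac12(\Pi_{\varphi_1}-V)$ are orthogonal projections with $\Pi_{\varphi_1}^{(L)}+\Pi_{\varphi_1}^{(A)}=\Pi_{\varphi_1}$; substituting $LK^*=\tfrac{i}{4}(\Pi_{\varphi_1}^{(L)}-\Pi_{\varphi_1}^{(A)})$ into $B=\tfrac14\Pi_{\varphi_1}+LK^*$ delivers the desired identity $B=\tfrac{1+i}{4}\Pi_{\varphi_1}^{(L)}+\tfrac{1-i}{4}\Pi_{\varphi_1}^{(A)}$. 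The main obstacle is recognizing the Pauli/Clifford structure of $\{C,D,\Pi_{\varphi_1}-\Pi_{\varphi_2}\}$, which is precisely what produces the $\pm i$ split responsible for the coefficients.
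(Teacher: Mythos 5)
Your proof is correct, and it takes a genuinely different route from the paper's, even though both arguments ultimately diagonalize the same operator: your $V=-4i\,LK^{*}=-4i\,\Pi_{\varphi_1}\Pi_{\varphi_1+i\varphi_2}\Pi_{\varphi_2}\Pi_{\varphi_1+\varphi_2}\Pi_{\varphi_1}$ is exactly $-iU$ for the partial isometry $U$ introduced in the paper's proof. The paper proceeds by choosing an orthonormal basis $\{\varphi_{1k}\}$ of $\Pi_{\varphi_1}\mathcal{H}$, transporting it to bases $\varphi_{2k}(\alpha)$ of $\Pi_{\varphi_2}\mathcal{H}$, showing that $U$ acts unitarily on $\Pi_{\varphi_1}\mathcal{H}$ via these basis maps, and then pinning the eigenvalues to $\pm i$ by combining $|u_k|=1$ with the single real-part relation (\ref{eq5-orth}) obtained from (\ref{eq5-mix2}). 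You instead work basis-free: polarizing over the whole circle $(\varphi_1+e^{i\theta}\varphi_2)/\sqrt2$ and imposing idempotence of $\frac12\Pi_{\mathcal K}+\cos\theta\,C+\sin\theta\,D$ yields the Clifford relations $C^2=D^2=\frac14\Pi_{\mathcal K}$, $CD+DC=0$, from which $V^{*}=V$ and $V^{2}=\Pi_{\varphi_1}$ follow by pure operator algebra, giving the spectral split $\Pi_{\varphi_1}^{(L)}=\frac12(\Pi_{\varphi_1}+V)$, $\Pi_{\varphi_1}^{(A)}=\frac12(\Pi_{\varphi_1}-V)$ at once. I checked the supporting steps ($\Pi_\eta\le\Pi_{\mathcal K}$ via the analogue of (\ref{eq5-prod3}); $\Pi_{\varphi_i}C\Pi_{\varphi_i}=0$ via (\ref{eq5-mix3}); the vanishing of the cross terms in $B$; commutation of $DC$ with $\Pi_{\varphi_1}-\Pi_{\varphi_2}$) and they all hold. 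What your approach buys is a cleaner, coordinate-free derivation in which both $U^{*}=-U$ and $U^{2}=-\Pi_{\varphi_1}$ come out of a single algebraic identity rather than an eigenvector-by-eigenvector computation; what the paper's approach buys is that the transported bases $\varphi_{2k}(\alpha)\sim\Pi_{\varphi_2}\Pi_{\varphi_1+\alpha\varphi_2}\varphi_{1k}$ constructed along the way are reused later (in Proposition \ref{pro5-mix4}) to build the pure components $T_k$, so that machinery is not wasted.
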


\begin{proof}
We note first that at this stage we do not claim the independence of the
projections $\Pi _{\varphi _1}^{(L)},\Pi _{\varphi _1}^{(A)}$ of the choice
of $\varphi _2$. This will be established in a later step.

We will frequently use Eqs. (\ref{eq5-mix2}) and (\ref{eq5-mix3}) without
explicit mentioning. Let $\{\varphi _{1k}\,|\,k=1,\cdots ,m\}$ be an
arbitrary orthonormal basis (\textsc{onb}) of the subspace $\Pi _{\varphi _1}%
{\mathcal{H}}$. Then, by virtue of Eq. (\ref{eq5-mix3}), for $\alpha \in \Bbb{C%
}$, $|\alpha |=1$, the system of vectors $\varphi _{2k}(\alpha )=\allowbreak
\allowbreak 2\Pi _{\varphi _2}\Pi _{\varphi _1+\alpha \varphi _2}\varphi
_{1k}$, $k=1,\cdots ,m$, form an \textsc{onb} of $\Pi _{\varphi _2}\mathcal{H%
}$. Furthermore, 
\begin{eqnarray}
{\Pi _{\varphi _1+\alpha \varphi _2}}\left( {\varphi _{1k}+\varphi
_{2k}(\alpha )}\right) &=&{\Pi _{\varphi _1+\alpha \varphi _2}}\left( {%
\varphi _{1k}}\right) {+2\Pi _{\varphi _1+\alpha \varphi _2}\Pi _{\varphi
_2}\Pi _{\varphi _1+\alpha \varphi _2}}\left( {\varphi _{1k}}\right)  \notag
\\
&=&{2\Pi _{\varphi _1+\alpha \varphi _2}}\left( {\varphi _{1k}}\right) 
\notag \\
&=&2\left( \Pi _{\varphi _1}+\Pi _{\varphi _2}\right) {\Pi _{\varphi
_1+\alpha \varphi _2}\Pi _{\varphi _1}}\left( {\varphi _{1k}}\right)
\label{eq5-eig} \\
&=&{\varphi _{1k}+\varphi _{2k}(\alpha )}  \notag
\end{eqnarray}
In the third line we have used the fact that $\Pi _{\varphi _1+\alpha
\varphi _2}\perp \Pi _{\varphi _1-\alpha \varphi _2}$ (since $T$ is
orthogonality preserving) and that, by virtue of the linearity of $T$, 
\begin{equation*}
\Pi _{\varphi _1+\alpha \varphi _2}+\Pi _{\varphi _1-\alpha \varphi _2}=\Pi
_{\varphi _1}+\Pi _{\varphi _2},
\end{equation*}
and therefore 
\begin{equation}
\Pi _{\varphi _1+\alpha \varphi _2}\le \Pi _{\varphi _1}+\Pi _{\varphi _2}.
\label{eq5-prod3}
\end{equation}
Let us denote $\varphi _{2k}(1)=\varphi _{2k}$ and $\varphi _{2k}(i)=\varphi
_{2k}^{\prime }$. Then (\ref{eq5-eig}) entails $P_{\varphi _{1k}+\varphi
_{2k}}\le \Pi _{\varphi _1+\varphi _2}$, and so [by (\ref{eq5-mix2})] 
\begin{equation}
\langle {P_{\varphi _{1k}+\varphi _{2k}}}\,,\,{\Pi _{\varphi _1+i\varphi _2}}%
\rangle =\langle {P_{\varphi _1+\varphi _2}}\,,\,{P_{\varphi _1+i\varphi _2}}%
\rangle ={{\textstyle{\frac 14}}}|1+i|^2={{\textstyle{\frac 12}}}.
\label{eq5-prod4}
\end{equation}
But from (\ref{eq5-pol1}) we have 
\begin{equation*}
P_{\varphi _{1k}+\varphi _{2k}}={{\textstyle{\frac 12}}}P_{\varphi _{1k}}+{{%
\textstyle{\frac 12}}}P_{\varphi _{2k}}+{{\textstyle{\frac 12}}}\left[
A_{\varphi _{1k}\varphi _{2k}}+A_{\varphi _{2k}\varphi _{1k}}\right] .
\end{equation*}
Using this to evaluate the left hand side of (\ref{eq5-prod4}), and noting
that $\langle {P}_{\varphi _{1k}}\,,\,{\Pi _{\varphi _1+i\varphi _2}}\rangle
=\langle {P}_{\varphi _{2k}}\,,\,{\Pi _{\varphi _1+i\varphi _2}}\rangle ={%
\textstyle{\frac 12}}$, we can conclude that 
\begin{equation}
\text{Re}\left\langle \,\varphi _{2k}\,|\,\Pi _{\varphi _1+i\varphi
_2}\varphi _{1k}\,\right\rangle =\text{Re}\left\langle \,\varphi
_{2k}\,|\,\varphi _{2k}^{\prime }\,\right\rangle =0.  \label{eq5-orth}
\end{equation}
Next we observe that the operator 
\begin{equation*}
U:=4\left( \Pi _{\varphi _2}\Pi _{\varphi _1+i\varphi _2}\Pi _{\varphi
_1}\right) ^{*}\Pi _{\varphi _2}\Pi _{\varphi _1+\varphi _2}\Pi _{\varphi
_1}=4\Pi _{\varphi _1}\Pi _{\varphi _1+i\varphi _2}\Pi _{\varphi _2}\Pi
_{\varphi _1+\varphi _2}\Pi _{\varphi _1}
\end{equation*}
is a partial isometry that acts as a unitary operator on $\Pi _{\left.
\varphi \right. _1}{\mathcal{H}}$; i.e.\ $U^{*}U=UU^{*}=\Pi _{\left. \varphi
\right. }$. Take the \textsc{onb} $\{\varphi _{1k}\,:\,k=1,\cdots ,m\}$ to
be a set of eigenvectors of this operator, $U\varphi _{1k}=u_k\varphi _{1k}$%
, $|u_k|=1$. It follows that 
\begin{equation*}
\left\langle \,\varphi _{2k}^{\prime }\,|\,\varphi _{2\ell }\,\right\rangle
=\left\langle \,\varphi _{1k}\,|\,U\varphi _{1\ell }\,\right\rangle
=u_k\delta _{k\ell }.
\end{equation*}
Combining this with (\ref{eq5-orth}), it follows that $u_k\in \{+i,-i\}$ for
all $k$. Denote the spectral projections of the partial isometry $U$
associated with the eigenvalues $i,-i$ as $\Pi _{\varphi _1}^{(L)}$, $\Pi
_{\varphi _1}^{(A)}$, respectively. Then the spectral decomposition of $U$
is 
\begin{equation}
U=i\left( \Pi _{\varphi _1}^{(L)}-\Pi _{\varphi _1}^{(A)}\right) ,\ \ \Pi
_{\varphi _1}^{(L)}+\Pi _{\varphi _1}^{(A)}=\Pi _{\varphi _1}.
\label{eq5-pis}
\end{equation}
Finally we can write the operator on the left hand side of (\ref{eq5-prod2})
as 
\begin{equation*}
\Pi _{\varphi _1}\,\Pi _{\varphi _1+i\varphi _2}\Pi _{\varphi _1+\varphi
_2}\Pi _{\varphi _1}=\Pi _{\varphi _1}\,\Pi _{\varphi _1+i\varphi _2}\left(
\Pi _{\varphi _1}+\Pi _{\varphi _2}\right) \Pi _{\varphi _1+\varphi _2}\Pi
_{\varphi _1}={{\textstyle{\frac 14}}}(\Pi _{\varphi _1}+U),
\end{equation*}
where by virtue of (\ref{eq5-pis}) the last expression equals the right hand
side of (\ref{eq5-prod2}).%
\end{proof}

\begin{Lem}
\label{le5-mix3}Let $\dim {\mathcal{H}}>2$ and $T:V\rightarrow \tilde{V}$ be
an $m$-mixing stochastic isometry, $\varphi _1,\varphi _2,\varphi _3\in 
{\mathcal{H}}$ a triple of mutually orthogonal unit vectors, $\alpha ,\beta
\in \Bbb{C}$, $|\alpha |=|\beta |=1$, $k,\ell \in \{1,2,3\}$. Then: 
\begin{eqnarray}
{\Pi _{\varphi _1}\,\Pi _{\varphi _1+i\varphi _2}\Pi _{\varphi _1+\varphi
_2}\Pi _{\varphi _1}\ }={\ {{\textstyle{\frac{1+i}4}}}\,\Pi _{\varphi
_1}^{(L)}\,+\,{{\textstyle{\frac{1-i}4}}}\,\Pi _{\varphi _1}^{(A)};}
\label{eq5-prod5a} \\
{\Pi _{\varphi _1}\,\Pi _{\varphi _1+\alpha \varphi _3}\Pi _{\varphi
_1+\beta \varphi _2}\Pi _{\varphi _1}\ }={\ {{\textstyle{\frac 14}}}\Pi
_{\varphi _1};}  \label{eq5-prod5b} \\
{\Pi _{\varphi _1}\,\Pi _{\varphi _2+\alpha \varphi _3}\Pi _{\varphi
_k+\beta \varphi _\ell }\Pi _{\varphi _1}\ }={\ 0;}  \label{eq5-prod5c} \\
{\Pi _{\varphi _1}\,\Pi _{\varphi _1+i\varphi _3}\Pi _{\varphi _1+\varphi
_3}\Pi _{\varphi _1}\ }={\ {{\textstyle{\frac{1+i}4}}}\,\widetilde{\Pi }%
_{\varphi _1}^{(L)}\,+\,{{\textstyle{\frac{1-i}4}}}\,\widetilde{\Pi }%
_{\varphi _1}^{(A)}.}  \label{eq5-prod5d}
\end{eqnarray}
\end{Lem}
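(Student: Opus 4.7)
The plan is to dispatch the four identities as follows. Identity (\ref{eq5-prod5a}) is nothing but Lemma \ref{le5-mix2} applied to the orthogonal pair $(\varphi_1,\varphi_2)$, so it requires no further work; it is being re-stated here to fix the notation $\Pi_{\varphi_1}^{(L)},\Pi_{\varphi_1}^{(A)}$ as the spectral projections coming from that particular companion. Identity (\ref{eq5-prod5d}) is then just Lemma \ref{le5-mix2} applied a second time to the orthogonal pair $(\varphi_1,\varphi_3)$, producing a second decomposition $\Pi_{\varphi_1} = \widetilde{\Pi}_{\varphi_1}^{(L)} + \widetilde{\Pi}_{\varphi_1}^{(A)}$. (The equality of the two decompositions is not asserted at this stage; it will be what Proposition \ref{pro5-mix2} eventually pins down.)

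Identity (\ref{eq5-prod5c}) is immediate from orthogonality preservation. Since $\varphi_2,\varphi_3 \perp \varphi_1$, the vector $\varphi_2+\alpha\varphi_3$ is orthogonal to $\varphi_1$, hence $P_{\varphi_1}\perp P_{\varphi_2+\alpha\varphi_3}$. By Proposition \ref{pro2-orth} the stochastic isometry $T$ sends this orthogonal pair to an orthogonal pair, so $\Pi_{\varphi_1}\,\Pi_{\varphi_2+\alpha\varphi_3}=0$, and the entire four-fold product vanishes.

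The real work is (\ref{eq5-prod5b}). My approach is to observe two simple facts and combine them with Eq.~(\ref{eq5-mix3}). First, by (\ref{eq5-prod3}), $\Pi_{\varphi_1+\beta\varphi_2}\le \Pi_{\varphi_1}+\Pi_{\varphi_2}$. Second, since $\varphi_2\perp(\varphi_1+\alpha\varphi_3)$, orthogonality preservation gives $\Pi_{\varphi_1+\alpha\varphi_3}\,\Pi_{\varphi_2}=0$. Inserting $\Pi_{\varphi_1}+\Pi_{\varphi_2}$ between the two inner projections and discarding the $\Pi_{\varphi_2}$-term yields
\begin{equation*}
\Pi_{\varphi_1+\alpha\varphi_3}\,\Pi_{\varphi_1+\beta\varphi_2}
\;=\;\Pi_{\varphi_1+\alpha\varphi_3}\,\Pi_{\varphi_1}\,\Pi_{\varphi_1+\beta\varphi_2}.
\end{equation*}
Sandwiching with $\Pi_{\varphi_1}$ on both sides thus factorises the left-hand side of (\ref{eq5-prod5b}) as
$\bigl(\Pi_{\varphi_1}\Pi_{\varphi_1+\alpha\varphi_3}\Pi_{\varphi_1}\bigr)\bigl(\Pi_{\varphi_1}\Pi_{\varphi_1+\beta\varphi_2}\Pi_{\varphi_1}\bigr)$, and each factor equals $\tfrac{1}{2}\Pi_{\varphi_1}$ by Eq.~(\ref{eq5-mix3}) (using $|\langle\varphi_1|(\varphi_1+\alpha\varphi_3)/\sqrt{2}\rangle|^2=\tfrac12$, and similarly with $\beta$). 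Their product is $\tfrac{1}{4}\Pi_{\varphi_1}$.

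The only genuine obstacle is step (\ref{eq5-prod5b}): one must notice that the ``cross-$\varphi_2/\varphi_3$'' contribution is killed precisely by the mutual orthogonality of $\varphi_2$ and $\varphi_3$ (which is where $\dim\mathcal{H}>2$ is used), so that no $\pm i$ coefficient survives — consistent with the fact that the corresponding triple product $\langle\varphi_1|\tfrac{\varphi_1+\alpha\varphi_3}{\sqrt2}\rangle\langle\tfrac{\varphi_1+\alpha\varphi_3}{\sqrt2}|\tfrac{\varphi_1+\beta\varphi_2}{\sqrt2}\rangle\langle\tfrac{\varphi_1+\beta\varphi_2}{\sqrt2}|\varphi_1\rangle = \tfrac14$ is real. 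The remaining three identities are essentially bookkeeping built on Lemmas \ref{le5-mix1} and \ref{le5-mix2}.
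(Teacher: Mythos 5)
Your proposal is correct and follows essentially the same route as the paper: (\ref{eq5-prod5a}) and (\ref{eq5-prod5d}) as instances of Lemma \ref{le5-mix2} with the two decompositions not yet identified, (\ref{eq5-prod5c}) from orthogonality preservation, and (\ref{eq5-prod5b}) by inserting a sum of the $\Pi_{\varphi_j}$ between the two middle projections, killing the cross terms via $\Pi_{\varphi_1+\alpha\varphi_3}\Pi_{\varphi_2}=0$ (the paper also uses $\Pi_{\varphi_3}\Pi_{\varphi_1+\beta\varphi_2}=0$, inserting all three $\Pi_{\varphi_j}$ instead of just two), and applying Eq.\ (\ref{eq5-mix3}) twice. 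The minor variations -- invoking Proposition \ref{pro2-orth} directly for (\ref{eq5-prod5c}) rather than the containment (\ref{eq5-prod3}) -- are equivalent and equally valid.
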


\begin{proof}
We do not claim here that $\widetilde{\Pi }_{\varphi _1}^{(L,A)}=\Pi
_{\varphi _1}^{(L,A)}$, nor that these projections are independent of the
choice of $\varphi _2,\varphi _3$. This will be established in a later step.

Relations (\ref{eq5-prod5a}) and (\ref{eq5-prod5d}) are instances of Lemma 
\ref{le5-mix2}. Due to $\Pi _{\varphi _2+\alpha \varphi _3}\le \Pi _{\varphi
_2}+\Pi _{\varphi _3}$ [cf.\ Eq.\ (\ref{eq5-prod3})], we have $\Pi _{\varphi
_1}\,\Pi _{\varphi _2+\alpha \varphi _3}=0$, which proves (\ref{eq5-prod5c}%
). To verify (\ref{eq5-prod5b}), we compute: 
\begin{eqnarray*}
\Pi _{\varphi _1}\,\Pi _{\varphi _1+\alpha \varphi _3}\Pi _{\varphi _1+\beta
\varphi _2}\Pi _{\varphi _1} &=&\Pi _{\varphi _1}\,\Pi _{\varphi _1+\alpha
\varphi _3}\left( \Pi _{\varphi _1}+\Pi _{\varphi _2}+\Pi _{\varphi
_3}\right) \Pi _{\varphi _1+\beta \varphi _2}\Pi _{\varphi _1} \\
&=&\Pi _{\varphi _1}\,\Pi _{\varphi _1+\alpha \varphi _3}\Pi _{\varphi
_1}\Pi _{\varphi _1+\beta \varphi _2}\Pi _{\varphi _1}={{\textstyle{\frac 14}%
}}\Pi _{\varphi _1}.
\end{eqnarray*}
Here we have used the orthogonalities: $\Pi _{\varphi _1+\alpha \varphi
_3}\Pi _{\varphi _2}=\Pi _{\varphi _3}\Pi _{\varphi _1+\beta \varphi _2}=0$.%
\end{proof}

Approaching the proof of Proposition \ref{pro5-mix2}, we observe that any
triple of pairwise independent unit vectors $\left. \varphi \right. ,\psi
,\vartheta $ spans a subspace $[\varphi ,\psi ,\vartheta ]$ of ${\mathcal{H}}$
of dimension 2 or 3. We assume $\dim {\mathcal{H}}\ge 3$. As will be made
evident, the case $\dim {\mathcal{H}}=2$ can be dealt with by restriction of
the constructions to be carried out for the case $\dim {\mathcal{H}}>2$. Let $%
\varphi _1,\varphi _2,\varphi _3$ be three mutually orthogonal unit vectors
such that $[\varphi ,\psi ,\vartheta ]\subseteq [\varphi _1,\varphi
_2,\varphi _3]=:\mathcal{K}$. Then the nine operators $A_{\varphi _k\varphi
_\ell }$ [cf. (\ref{eq5-pol2})] form a basis of the space of operators on $%
\mathcal{K}$. By polarization, an alternative basis is given by the
operators 
\begin{eqnarray}
&&{P_{\varphi _1},P_{\varphi _2},P_{\varphi _1+\varphi _2},P_{\varphi
_1+i\varphi _2},}  \notag \\
&&{P_{\varphi _3},P_{\varphi _1+\varphi _3},P_{\varphi _2+\varphi
_3},P_{\varphi _1+i\varphi _3},P_{\varphi _2+i\varphi _3}.}  \label{eq5-bas1}
\end{eqnarray}
We denote these $P_{\xi _k}$, $k=1,\cdots ,9$, in the ordering given. It
follows that the corresponding set of projections $\Pi _{\xi _k},\
k=1,\cdots ,9$, i.e.\ 
\begin{eqnarray}
&&{\Pi _{\varphi _1},\Pi _{\varphi _2},\Pi _{\varphi _1+\varphi _2},\Pi
_{\varphi _1+i\varphi _2},}  \notag \\
&&\Pi {_{\varphi _3},\Pi _{\varphi _1+\varphi _3},\Pi _{\varphi _2+\varphi
_3},\Pi _{\varphi _1+i\varphi _3},\Pi _{\varphi _2+i\varphi _3}}
\label{eq5-bas2}
\end{eqnarray}
forms a basis of the space of operators spanned by all $\Pi _{\left. \varphi
\right. }$, $\varphi \in {\mathcal{K}}\backslash \{0\}$.

\noindent \textbf{Proof of Proposition \ref{pro5-mix2}. Step 1.} Excluding
the trivial cases mentioned immediately after Proposition \ref{pro5-mix2},
we need to verify Eq.\ (\ref{eq5-prod1}) for any triple of unit vectors $%
\varphi ,\psi ,\vartheta \in {\mathcal{H}}$ which are pairwise independent and
nonorthogonal. Consider the case $\dim {\mathcal{H}}\ge 3$. We work with a
specific choice of \textsc{onb} $\{\varphi _1,\varphi _2,\varphi _3\}$ of $%
\mathcal{K}$, namely, $\varphi _1=\varphi $, $\varphi _2\perp \varphi _1$
such that $\psi \in [\varphi _1,\varphi _2]$, and $\varphi _3\perp [\varphi
_1,\varphi _2]$ such that $\vartheta \in [\varphi _1,\varphi _2,\varphi _3]=%
\mathcal{K}$. Using the operator bases (\ref{eq5-bas1}) and (\ref{eq5-bas2}%
), and denoting these operators as $P_{\xi _k},\Pi _{\xi _k}$ in the order
given, we can write 
\begin{equation*}
P{_{\left. \psi \right. }}=\sum_{k=1}^4s_kP_{\xi _k},\ \ P_{\left. \vartheta
\right. }=\sum_{\ell =1}^9t_\ell P_{\xi _\ell },
\end{equation*}
and 
\begin{equation*}
\Pi {_{\left. \psi \right. }}=\sum_{k=1}^4s_k\Pi _{\xi _k},\ \ \Pi _{\left.
\vartheta \right. }=\sum_{\ell =1}^9t_\ell \Pi _{\xi _\ell },
\end{equation*}
with all $s_k,t_\ell \in \Bbb{R}$. With this we find: 
\begin{eqnarray}
{P_{\left. \varphi \right. }}\ {P_{\left. \vartheta \right. }}\ {P_{\left.
\psi \right. }P_{\left. \varphi \right. }\ }\ \ &=&{\ \left\langle \,\varphi
\,|\,\vartheta \,\right\rangle \,\left\langle \,\vartheta \,|\,\psi
\,\right\rangle \,\left\langle \,\psi \,|\,\varphi \,\right\rangle
\,P_{\left. \varphi \right. }}  \notag \\
&=&{\sum_{k=1}^4\sum_{\ell =1}^9s_kt_\ell P_{\left. \varphi \right. }P_{\xi
_\ell }P_{\xi _k}P_{\left. \varphi \right. }}  \label{eq5-brace} \\
&=&{\{s_1t_1+{{\textstyle{\frac 12}}}s_1t_3+{{\textstyle{\frac 12}}}s_1t_4+{{%
\textstyle{\frac 12}}}s_1t_6+{{\textstyle{\frac 12}}}s_1t_8}  \notag \\
&&{+{{\textstyle{\frac 12}}}s_3t_1+{{\textstyle{\frac 12}}}s_3t_3+{{%
\textstyle{\frac{1+i}4}}}s_3t_4}+\ {{{\textstyle{\frac 14}}}s_3t_6+{{%
\textstyle{\frac 14}}}s_3t_8}  \notag \\
&&{+{{\textstyle{\frac 12}}}s_4t_1+{{\textstyle{\frac{1-i}4}}}s_4t_3+{{%
\textstyle{\frac 12}}}s_4t_4+{{\textstyle{\frac 14}}}s_4t_6+{{\textstyle{%
\frac 14}}}s_4t_8\}\,P_{\left. \varphi \right. }.}  \notag
\end{eqnarray}
Next, applying Eqs.\ (\ref{eq5-prod5a},\ref{eq5-prod5b},\ref{eq5-prod5c})
[the expression (\ref{eq5-prod5d}) does not occur], we evaluate the
corresponding sum for the left hand side of Eq.\ (\ref{eq5-prod1}): 
\begin{equation}
\Pi _{\left. \varphi \right. }\Pi {_{\left. \vartheta \right. }}\Pi {%
_{\left. \psi \right. }}\Pi _{\left. \varphi \right. }\ =\
\sum_{k=1}^4\sum_{\ell =1}^9s_kt_\ell \Pi _{\left. \varphi \right. }\Pi
_{\xi _\ell }\Pi _{\xi _k}\Pi _{\left. \varphi \right. }=\ a\Pi _{\varphi
_1}^{(L)}+\overline{a}\Pi _{\varphi _1}^{(A)},  \label{eq5-spec1}
\end{equation}
where $a$ turns out to be the same expression as that given in the braces $%
\{\cdots \}$ in Eq.\ (\ref{eq5-brace}). Hence, $a=\left\langle \,\varphi
\,|\,\vartheta \,\right\rangle \,\left\langle \,\vartheta \,|\,\psi
\,\right\rangle \,\left\langle \,\psi \,|\,\varphi \,\right\rangle $, and
Eq.\ (\ref{eq5-prod1}) is verified. Note that the case $\dim {\mathcal{H}}=2$
is covered by putting $s_k=t_k=0$ for $k>4$.

\textbf{Step 2}{.} Next we show that the spectral projections occurring in
Eq.\ (\ref{eq5-prod1}) do not depend on the choice of $\psi ,\vartheta \in 
{\mathcal{H}}$. Note that the result of Step 1 holds for any choice of unit
vectors $\vartheta \in {\mathcal{H}}$, and thus for any choice of $\varphi
_3\perp [\varphi _1,\varphi _2]$ in the case $\dim {\mathcal{H}}\ge 3$. The
construction of $\Pi _{\varphi _1}^{(L)},\Pi _{\varphi _1}^{(A)}$ so far
depends on the choice of $\psi \in {\mathcal{H}}$, but only via $\varphi
_2\perp \varphi _1$. In the case $\dim {\mathcal{H}}=2$, the ray orthogonal to 
$\varphi _1$ is uniquely determined by that vector, so that the spectral
projections in Eq. (\ref{eq5-spec1}) are the same for all $\psi ,\vartheta $%
. In the case $\dim {\mathcal{H}}\ge 3$, we have to show that these
projections are actually the same for \textsl{all} choices of unit vectors $%
\varphi _2^{\prime }\perp \varphi _1$ [including $\varphi _2^{\prime
}=\varphi _3$, cf.\ Eq.\ (\ref{eq5-prod5d})]. To this end we repeat the
procedure of Step 1, this time choosing a unit vector $\varphi _2^{\prime
}\perp \varphi _1=\varphi $ in the plane $[\varphi ,\vartheta ]$. Then there
exists a unit vector $\varphi _3^{\prime }$ such that $\psi \in [\varphi
_1,\varphi _2^{\prime },\varphi _3^{\prime }]=\mathcal{K}$. We obtain again
a spectral decomposition of the form (\ref{eq5-spec1}), 
\begin{equation}
\Pi _{\left. \varphi \right. }\Pi {_{\left. \vartheta \right. }}\Pi {%
_{\left. \psi \right. }}\Pi _{\left. \varphi \right. }=a{\Pi _{\left.
\varphi \right. }^{\prime }}^{(L)}\,+\,\overline{a}{\Pi _{\left. \varphi
\right. }^{\prime }}^{(A)},\ \ {\Pi _{\left. \varphi \right. }^{\prime }}%
^{(L)}\,+\,{\Pi _{\left. \varphi \right. }^{\prime }}^{(A)}=\Pi _{\left.
\varphi \right. },  \label{eq5-spec2}
\end{equation}
for the \textsl{same} set of vectors $\varphi ,\psi ,\vartheta $. Provided
that the eigenvalues $a,\overline{a}$ do not coincide, i.e.\ $a\notin \Bbb{R}
$, then the spectral projections ${\Pi _{\left. \varphi \right. }^{\prime }}%
^{(L)},{\Pi _{\left. \varphi \right. }^{\prime }}^{(A)}$ constructed from $%
\varphi _1,\varphi _2^{\prime }$ along the lines of the proof of Lemma \ref
{le5-mix2} coincide with the projections $\Pi _{\left. \varphi \right.
}^{(L)},\Pi _{\left. \varphi \right. }^{(A)}$ constructed from $\varphi
_1,\varphi _2$.

We show that, given $\varphi ,\psi $, any $\varphi _2^{\prime }\perp \varphi
_1$ can be realized with a choice of $\vartheta $ such that $a=\left\langle
\,\varphi _1\,|\,\vartheta \,\right\rangle \,\left\langle \,\vartheta
\,|\,\psi \,\right\rangle \,\left\langle \,\psi \,|\,\varphi
_1\,\right\rangle \notin \Bbb{R}$. Let $\varphi _2^{\prime }$ be any unit
vector such that $\varphi _2^{\prime }\perp \varphi $ and $\varphi
_2^{\prime }\not{\perp}\psi $. Choose $\vartheta =\alpha _1\varphi _1+\alpha
_2\varphi _2^{\prime }$, with $\alpha _1,\alpha _2$ to be further specified
shortly. Take a unit vector $\varphi _3^{\prime }$ orthogonal to $\varphi
_1,\varphi _2^{\prime }$ and such that $\psi =\beta _1\varphi _1+\beta
_2\varphi _2^{\prime }+\beta _3\varphi _3^{\prime }$. Since $\varphi
_2^{\prime }\not{\perp}\psi $, we have $\beta _1\ne 0\ne \beta _2$. We
obtain 
\begin{equation*}
a\ =\ \alpha _1\,\{\overline{\alpha }_1\beta _1\,+\,\overline{\alpha }%
_2\beta _2\}\,\overline{\beta }_1\ =\ |\alpha _1|^2\,|\beta _1|^2\,+\,\alpha
_1\overline{\alpha }_2\beta _2\beta _1.
\end{equation*}
The first term on the right hand side is real and positive. Then $a\ne \Bbb{R%
}$ can be easily achieved by proper choice of $\alpha _1,\alpha _2$. Hence
all choices of $\varphi _2^{\prime }\perp \varphi _1$, $\varphi _2^{\prime
}\ne \varphi _3$ lead to the same spectral decomposition in Eqs.\ (\ref
{eq5-prod2}) or (\ref{eq5-prod5a}). [Note that $\varphi _2^{\prime }=\varphi
_3$ would imply $\beta _2=0$.]

A continuity argument finally shows that the case $\varphi _2^{\prime
}=\varphi _3$ can be included, too, so that the spectral decompositions in
Eqs.\ (\ref{eq5-prod5a}) and (\ref{eq5-prod5d}) do indeed coincide. In fact
let $\varphi _2^{(n)}$ be a sequence of unit vectors orthogonal to $\varphi
_1$ and different from $\varphi _3$, such that $||${$\varphi $}${_2^{(n)}-}${%
$\varphi $}${_3}||\to 0$. Then $P_{\varphi _1+\alpha \varphi _2^{(n)}}$
converges to $P_{\varphi _1+\alpha \varphi _3}$ in trace norm. By the
continuity of $T$, $\Pi _{\varphi _1+\alpha \varphi _2^{(n)}}$ converges to $%
\Pi _{\varphi _1+\alpha \varphi _3}$ in trace norm. Then 
\begin{equation*}
{{\textstyle{\frac{1+i}4}}}\,\Pi _{\varphi _1}^{(L)}\,+\,{{\textstyle{\frac{%
1-i}4}}}\,\Pi _{\varphi _1}^{(A)}\ =\ \Pi _{\varphi _1}\,\Pi _{\varphi
_1+i\varphi _2^{(n)}}\Pi _{\varphi _1+\varphi _2^{(n)}}\Pi _{\varphi _1}\
\longrightarrow \ \Pi _{\varphi _1}\,\Pi _{\varphi _1+i\varphi _3}\Pi
_{\varphi _1+\varphi _3}\Pi _{\varphi _1},
\end{equation*}
convergence in trace norm. It follows that the first and last expressions
coincide. This concludes Step 2.%
\endproof%

We introduce the following notation for the operators of Eq.\ (\ref
{eq5-prod1}): 
\begin{eqnarray*}
{\mathcal{W}_{\vartheta \psi }^{\left. \varphi \right. }\ } &=&{\ \Pi
_{\left. \varphi \right. }\Pi _{\left. \vartheta \right. }\Pi _{\left. \psi
\right. }\Pi _{\left. \varphi \right. }\ =a_{\vartheta \psi }^{\left.
\varphi \right. }\,\Pi _{\left. \varphi \right. }^{(L)}\,+\,\overline{%
a_{\vartheta \psi }^{\left. \varphi \right. }}\,\Pi _{\left. \varphi \right.
}^{(A)},} \\
{a_{\vartheta \psi }^{\left. \varphi \right. }\ } &=&{\ \left\langle
\,\varphi \,|\,\vartheta \,\right\rangle \,\left\langle \,\vartheta
\,|\,\psi \,\right\rangle \,\left\langle \,\psi \,|\,\varphi \right\rangle }.
\end{eqnarray*}

\begin{Prop}
\label{pro5-mix3}Let $T:V\rightarrow \tilde{V}$ be an $m$-mixing stochastic
isometry. For any two unit vectors $\left. \varphi \right. ,\chi \in 
{\mathcal{H}}$ one has 
\begin{eqnarray}
\quad {\Pi _{\left. \chi \right. }\Pi _{\left. \varphi \right. }^{(L)}\Pi
_{\left. \chi \right. }\ }={\ \langle {P_{\left. \chi \right. }}\,,\,{%
P_{\left. \varphi \right. }}\rangle \Pi _{\left. \chi \right. }^{(L)},\ \
\Pi _{\left. \chi \right. }\Pi _{\left. \varphi \right. }^{(A)}\Pi _{\left.
\chi \right. }\ =\ \langle {P_{\left. \chi \right. }}\,,\,{P_{\left. \varphi
\right. }}\rangle \Pi _{\left. \chi \right. }^{(A)};}  \label{eq5-la1} \\
{\Pi _{\left. \varphi \right. }^{(L)}\Pi _{\left. \chi \right. }^{(A)}\ }={\
\Pi _{\left. \varphi \right. }^{(A)}\Pi _{\left. \chi \right. }^{(L)}\ =\ 0.}
\label{eq5-la2}
\end{eqnarray}
The projections 
\begin{equation}
P^{(L)}=\bigvee_{\varphi \in {\mathcal{H}}\backslash \{0\}}\Pi _{\left.
\varphi \right. }^{(L)},\ \ \ \ P^{(A)}=\bigvee_{\varphi \in {\mathcal{H}}%
\backslash \{0\}}\Pi _{\left. \varphi \right. }^{(A)}  \label{eq5-la3}
\end{equation}
are mutually orthogonal. Moreover, for all $\left. \varphi \right. \in 
{\mathcal{H}}\backslash \{0\}$, 
\begin{equation}
\Pi _{\left. \varphi \right. }^{(L)}\ =\ P^{(L)}\Pi _{\left. \varphi \right.
},\ \ \ \ \Pi _{\left. \varphi \right. }^{(A)}\ =\ P^{(A)}\Pi _{\left.
\varphi \right. },  \label{eq5-la4}
\end{equation}
and the ranks $m^{(L)},m^{(A)}$ of the projections $\Pi _{\left. \varphi
\right. }^{(L)}$, $\Pi _{\left. \varphi \right. }^{(A)}$ are independent of $%
\varphi \in {\mathcal{H}}\backslash \{0\}$ and satisfy $m^{(L)}+m^{(A)}=m$. 
\newline
If the invariants $m^{(L)},m^{(A)}$ are nonzero, then $T$ decomposes into a
convex combination of two stochastic isometries $T^{(L)},T^{(A)}$ with
orthogonal ranges, 
\begin{equation}
T\ =\ {{\textstyle{\frac{m^{(L)}}m}}}\,T^{(L)}\,+\,{{\textstyle{\frac{m^{(A)}%
}m}}}\,T^{(A)},  \label{eq5-la5}
\end{equation}
\begin{equation}
T^{(L)}(\rho )\ =\ {{\textstyle{\frac 1{m^{(L)}}}}}\,P^{(L)}\,T(\rho
)\,P^{(L)},\ \ \ T^{(A)}(\rho )\ =\ {{\textstyle{\frac 1{m^{(A)}}}}}%
\,P^{(A)}\,T(\rho )\,P^{(A)}.  \label{eq5-la6}
\end{equation}
\end{Prop}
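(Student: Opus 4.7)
The heart of the argument is the identity (\ref{eq5-la1}); everything else is algebraic bookkeeping. My strategy is to evaluate the six-fold product $\Pi_\chi\Pi_\varphi\Pi_\chi\Pi_\psi\Pi_\varphi\Pi_\chi$ in two different ways, for a freely chosen auxiliary unit vector $\psi$. Grouping on the left and using (\ref{eq5-mix3}) gives $\langle P_\chi,P_\varphi\rangle\,\Pi_\chi\Pi_\psi\Pi_\varphi\Pi_\chi$, which by Proposition~\ref{pro5-mix2} equals $\langle P_\chi,P_\varphi\rangle\bigl(a\,\Pi_\chi^{(L)}+\bar a\,\Pi_\chi^{(A)}\bigr)$, with $a=\langle\varphi|\chi\rangle\langle\chi|\psi\rangle\langle\psi|\varphi\rangle$. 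Alternatively, recognising the four inner factors as $\mathcal{W}^\varphi_{\chi\psi}=a\,\Pi_\varphi^{(L)}+\bar a\,\Pi_\varphi^{(A)}$ and sandwiching by $\Pi_\chi$ gives $a\,\Pi_\chi\Pi_\varphi^{(L)}\Pi_\chi+\bar a\,\Pi_\chi\Pi_\varphi^{(A)}\Pi_\chi$. Setting $X:=\Pi_\chi\Pi_\varphi^{(L)}\Pi_\chi-\langle P_\chi,P_\varphi\rangle\Pi_\chi^{(L)}$ and $Y:=\Pi_\chi\Pi_\varphi^{(A)}\Pi_\chi-\langle P_\chi,P_\varphi\rangle\Pi_\chi^{(A)}$ (both self-adjoint), the equality becomes $aX+\bar aY=0$; taking adjoints gives also $\bar aX+aY=0$, a $2\times2$ complex system with determinant $4i\,\text{Re}(a)\,\text{Im}(a)$. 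A parametric argument in the plane $[\varphi,\chi]$, in the spirit of Step~2 of the proof of Proposition~\ref{pro5-mix2}, shows that $\psi$ can be chosen so that $a$ has both nonzero real and imaginary parts whenever $\varphi,\chi$ are linearly independent and non-orthogonal, forcing $X=Y=0$. The degenerate cases $\varphi\parallel\chi$ and $\varphi\perp\chi$ are trivial, the latter using that $T$ is orthogonality-preserving so both sides vanish.

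\textbf{Orthogonality, global projections, and ranks.} Applying (\ref{eq5-la1}) with $\Pi_\chi^{(L)}\le\Pi_\chi$ yields $(\Pi_\varphi^{(A)}\Pi_\chi^{(L)})^*(\Pi_\varphi^{(A)}\Pi_\chi^{(L)})=\Pi_\chi^{(L)}\Pi_\varphi^{(A)}\Pi_\chi^{(L)}=\langle P_\chi,P_\varphi\rangle\,\Pi_\chi^{(L)}\Pi_\chi^{(A)}\Pi_\chi^{(L)}=0$, whence $\Pi_\varphi^{(A)}\Pi_\chi^{(L)}=0$, and symmetrically (\ref{eq5-la2}). This pairwise orthogonality passes to the suprema, giving $P^{(L)}\perp P^{(A)}$. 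Writing $\Pi_\varphi=\Pi_\varphi^{(L)}+\Pi_\varphi^{(A)}$ with $\Pi_\varphi^{(L)}\le P^{(L)}$ and $\Pi_\varphi^{(A)}\le P^{(A)}\perp P^{(L)}$ yields (\ref{eq5-la4}). For the ranks, sandwiching (\ref{eq5-la1}) by $\Pi_\chi^{(L)}$ gives $\Pi_\chi^{(L)}\Pi_\varphi^{(L)}\Pi_\chi^{(L)}=\langle P_\chi,P_\varphi\rangle\Pi_\chi^{(L)}$; taking traces, using cyclicity and $\langle P_\chi,P_\varphi\rangle\in\mathbb{R}$, gives $m^{(L)}(\chi)=m^{(L)}(\varphi)$ whenever $\varphi\not\perp\chi$, the orthogonal case following by inserting an intermediate vector. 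The relation $m^{(L)}+m^{(A)}=m$ is then immediate from $\Pi_\varphi^{(L)}\perp\Pi_\varphi^{(A)}$ and $\Pi_\varphi^{(L)}+\Pi_\varphi^{(A)}=\Pi_\varphi$.

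\textbf{Decomposition.} Define $T^{(L)}(\rho):=(m/m^{(L)})\,P^{(L)}T(\rho)P^{(L)}$ and $T^{(A)}$ analogously. On a pure state, $T^{(L)}(P_\varphi)=(1/m^{(L)})\Pi_\varphi^{(L)}$, a rank-$m^{(L)}$ projection normalised to unit trace, so $T^{(L)}$ is trace-preserving and $m^{(L)}$-mixing. Orthogonality-preservation on pure states is inherited from $T$ together with (\ref{eq5-la2}), and Proposition~\ref{pro2-orth} upgrades this to an isometry; the same holds for $T^{(A)}$, and the ranges lie in the orthogonal subspaces $P^{(L)}\tilde{\mathcal H}$, $P^{(A)}\tilde{\mathcal H}$. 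The convex combination (\ref{eq5-la5}) is verified on pure states via the decomposition $\Pi_\varphi=\Pi_\varphi^{(L)}+\Pi_\varphi^{(A)}$ and extends to $V$ by linearity and trace-norm continuity. The principal obstacle is the coefficient-matching in Step~1: the adjoint trick collapses precisely on the real and imaginary axes in $a$-space, so one must verify explicitly that the freedom in $\psi$ permits avoiding both of them---a small but unavoidable planar computation.
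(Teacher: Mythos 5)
Your argument is correct and follows the paper's proof in all essentials: the same two-way evaluation of $\Pi_{\chi}\Pi_{\varphi}\Pi_{\chi}\Pi_{\psi}\Pi_{\varphi}\Pi_{\chi}$ via Eq.~(\ref{eq5-mix3}) and Proposition~\ref{pro5-mix2}, followed by the same routine deductions of (\ref{eq5-la2})--(\ref{eq5-la6}). The only divergence is how $X=Y=0$ is extracted from the resulting operator identity: you eliminate via the adjoint equation, which requires $a\notin\Bbb{R}\cup i\Bbb{R}$ (and your planar computation does supply such a $\psi$), whereas the paper needs only $a\notin\Bbb{R}$ because it notes that $\Pi_{\chi}\Pi_{\varphi}^{(L)}\Pi_{\chi}$ and $\Pi_{\chi}\Pi_{\varphi}^{(A)}\Pi_{\chi}$ are already $\langle P_{\chi},P_{\varphi}\rangle$ times complementary subprojections of $\Pi_{\chi}$ (since $\Pi_{\chi}|_{\Pi_{\varphi}{\mathcal{H}}}$ is a scaled unitary onto $\Pi_{\chi}{\mathcal{H}}$) and then invokes uniqueness of the spectral decomposition.
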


\begin{proof}
Let $\chi \in {\mathcal{H}}$ be a unit vector not orthogonal to $\varphi $,
and choose any unit vector $\psi \in {\mathcal{H}}$ neither orthogonal nor
parallel to $\varphi ,\chi $ and such that $a_{\chi \psi }^{\left. \varphi
\right. }\notin \Bbb{R}$. We compute the operator $\Pi _{\left. \chi \right.
}\mathcal{W}_{\chi \psi }^{\left. \varphi \right. }\Pi _{\left. \chi \right.
}$ in two ways, using (\ref{eq5-mix3}) and noting that $a_{\chi \psi
}^{\left. \varphi \right. }=a_{\psi \left. \varphi \right. }^\chi $: 
\begin{eqnarray*}
{\Pi _{\left. \chi \right. }\mathcal{W}_{\chi \psi }^{\left. \varphi \right.
}\Pi _{\left. \chi \right. }\ =\ \langle {P_{\left. \chi \right. }}\,,\,{%
P_{\left. \varphi \right. }}\rangle \mathcal{W}_{\psi \left. \varphi \right.
}^\chi \ } &=&{\ \langle {P_{\left. \chi \right. }}\,,\,{P_{\left. \varphi
\right. }}\rangle \left( a_{\chi \psi }^{\left. \varphi \right. }\,\Pi
_{\left. \chi \right. }^{(L)}\,+\,\overline{a_{\chi \psi }^{\left. \varphi
\right. }}\,\Pi _{\left. \chi \right. }^{(A)}\right) } \\
&=&a_{\chi \psi }^{\left. \varphi \right. }\Pi _{\left. \chi \right. }\Pi
_{\left. \varphi \right. }^{(L)}\Pi _{\left. \chi \right. }+\overline{%
a_{\chi \psi }^{\left. \varphi \right. }}\Pi _{\left. \chi \right. }\Pi
_{\left. \varphi \right. }^{(A)}\Pi _{\left. \chi \right. }.
\end{eqnarray*}
Using the fact that the map $\Pi _{\left. \chi \right. }|_{\Pi _{\left.
\varphi \right. }{\mathcal{H}}}:\Pi _{\left. \varphi \right. }{\mathcal{H}}\to
\Pi _{\left. \chi \right. }{\mathcal{H}}$ is orthogonality preserving [cf.\
Eq.\ (\ref{eq5-mix4})], Eqs.\ (\ref{eq5-la1}) and (\ref{eq5-la2}) follow by
application of Eq.\ (\ref{eq5-mix2}) and a version of (\ref{eq5-mix4}). The
orthogonality of $P^{(L)},P^{(A)}$ and the relation (\ref{eq5-la4}) are then
obvious.

The invariance of the numbers $m^{(L)},m^{(A)}$ is a consequence of the fact
that $\Pi _{\left. \chi \right. }$ transforms an orthogonal basis of $\Pi
_{\left. \varphi \right. }^{(L)}{\mathcal{H}}$ [$\Pi _{\left. \varphi \right.
}^{(A)}{\mathcal{H}}$] onto an orthogonal basis of $\Pi _{\left. \chi \right.
}^{(L)}{\mathcal{H}}$ [$\Pi _{\left. \chi \right. }^{(A)}{\mathcal{H}}$].

For $\varphi \in {\mathcal{H}}\backslash \{0\}$ we have, by virtue of Eq.\ (%
\ref{eq5-la4}): 
\begin{eqnarray*}
T(P_{\left. \varphi \right. })\ &=&\ {{\textstyle{\frac 1m}}}\,\Pi _{\left.
\varphi \right. }\ =\ {{\textstyle{\frac 1m}}}\left( P^{(L)}\,\Pi _{\left.
\varphi \right. }\,P^{(L)}\,+\,P^{(A)}\,\Pi _{\left. \varphi \right.
}\,P^{(A)}\right) \\
&=&\ {{\textstyle{\frac{m^{(L)}}m}}}\,T^{(L)}(P_{\left. \varphi \right.
})\,+\,{{\textstyle{\frac{m^{(A)}}m}}}\,T^{(A)}(P_{\left. \varphi \right. }).
\end{eqnarray*}
Due to the linearity and continuity of $T$ this equation extends to all ($%
\sigma $-)\allowbreak {convex} combinations of pure states and thus to all
states $\rho $ and finally to all of $V$.%
\end{proof}

Next we come to analyze the stochastic isometries $T^{(L)}$, $T^{(A)}$, or
equivalently, the $m$-mixing stochastic isometries for which either $%
m=m^{(L)}$ or $m=m^{(A)}$. We note that 
\begin{eqnarray}
{\mathcal{W}_{\vartheta \psi }^{\left. \varphi \right. }\ } &=&{\ \Pi
_{\left. \varphi \right. }\Pi _{\left. \vartheta \right. }\Pi _{\left. \psi
\right. }\Pi _{\left. \varphi \right. }\ =\ a_{\vartheta \psi }^{\left.
\varphi \right. }\,\Pi _{\left. \varphi \right. },\ \ \ [\text{case}\
m=m^{(L)}]}  \label{eq5-w1} \\
{\mathcal{W}_{\vartheta \psi }^{\left. \varphi \right. }\ } &=&{\ \Pi
_{\left. \varphi \right. }\Pi _{\left. \vartheta \right. }\Pi _{\left. \psi
\right. }\Pi _{\left. \varphi \right. }\ =\ \overline{a_{\vartheta \psi
}^{\left. \varphi \right. }}\,\Pi _{\left. \varphi \right. }.\ \ \ [\text{%
case}\ m=m^{(A)}]}  \label{eq5-w2}
\end{eqnarray}

Let $T$ be an $m$-mixing stochastic isometry with $m=m^{(L)}$ or $m=m^{(A)}$%
. Pick an arbitrary vector $\varphi \in {\mathcal{H}}\backslash \{0\}$ and an
orthonormal basis $\{\varphi _1,\cdots ,\varphi _m\}$ of $\Pi _{\left.
\varphi \right. }{\mathcal{H}}$. For any $\psi \in {\mathcal{H}}\backslash \{0\}$
with $\psi \not{\perp}\varphi $, the set of vectors 
\begin{equation}
\psi _k:=\langle {P_{\left. \psi \right. }}\,,\,{P_{\left. \varphi \right. }}%
\rangle ^{-1/2}\Pi {_{\left. \psi \right. }}\varphi _k  \label{eq5-obas}
\end{equation}
is an orthonormal basis of $\Pi {_{\left. \psi \right. }}{\mathcal{H}}$, cf.\
Eq.\ (\ref{eq5-mix4}). Moreover, for any two unit vectors $\psi ,\vartheta $
not orthogonal to $\varphi $, the corresponding basis vectors $\psi _k$, $%
\vartheta _\ell $ are mutually orthogonal if $k\ne \ell $. In fact, due to
Eqs.\ (\ref{eq5-w1},\ref{eq5-w2}) we have ($\sim $ denoting proportionality) 
\begin{equation}
\left\langle \,\vartheta _\ell \,|\,\psi _k\,\right\rangle \sim \left\langle
\,\Pi _{\left. \vartheta \right. }\varphi _\ell \,|\,\Pi {_{\left. \psi
\right. }}\varphi _k\,\right\rangle =\left\langle \,\varphi _k\,|\,\mathcal{W%
}_{\vartheta \psi }^{\left. \varphi \right. }\,\varphi _\ell \,\right\rangle
\sim \delta _{k\ell }.  \label{eq5-w3}
\end{equation}

\begin{Prop}
\label{pro5-mix4}Let $T$ be an $m$-mixing stochastic isometry with $m=m^{(L)}
$ or $m=m^{(A)}$. Fix a vector $\varphi ^0\in {\mathcal{H}}%
\backslash \{0\}$ and an orthonormal basis $\{\varphi _1,\cdots ,\varphi _m\}
$ of $\Pi _{\varphi ^0}{\mathcal{H}}$. The projections 
\begin{equation}
P_k:=\bigvee_{\psi \not{\perp}\varphi ^0}P_{\psi _k}.  \label{eq5-pk1}
\end{equation}
are mutually orthogonal. For any $\psi \in {\mathcal{H}}\backslash \{0\}$
there exists a unique ray $P_{\psi _k}{\mathcal{H}}$, $\psi _k\in \Pi _{\left.
\psi \right. }{\mathcal{H}}$, such that 
\begin{equation}
P_k\Pi _{\left. \psi \right. }\ =\ \Pi _{\left. \psi \right. }P_k\ =\
P_{\psi _k},\ \ k=1,\cdots ,m.  \label{eq5-pk2}
\end{equation}
Then 
\begin{equation}
P_k\ =\ \bigvee_{\psi \in {\mathcal{H}}\backslash \{0\}}P_{\psi _k},\ \
k=1,\cdots ,m,  \label{eq5-pk3}
\end{equation}
and 
\begin{equation}
\sum_{k=1}^mP_k\ =\ \bigvee_{\left. \varphi \right. \in {\mathcal{H}}%
\backslash \{0\}}\Pi _{\left. \varphi \right. }.  \label{eq5-pk4}
\end{equation}
$T$ decomposes into a convex combination of pure stochastic isometries, 
\begin{equation}
T\ =\ \sum_{k=1}^m{{\textstyle{\frac 1m}}}T_k,  \label{eq5-tk1}
\end{equation}
\begin{equation}
T_k(\rho )\ =\ m\,P_k\,T(\rho )\,P_k.  \label{eq5-tk2}
\end{equation}
Hence there exist maps $U_k:{\mathcal{H}}\to P_k\tilde{{\mathcal{H}}}$, $%
k=1,\cdots ,m$, which are all unitary if $m=m^{(L)}$ or all antiunitary if $%
m=m^{(A)}$, such that $P_k=U_kU_k^{*}$ and 
\begin{equation}
T_k(\rho )\ =\ U_k\,\rho \,U_k^{*},\ \ \rho \in {\mathcal{H}}_1({\mathcal{H}})_s.
\label{eq5-tk3}
\end{equation}
\end{Prop}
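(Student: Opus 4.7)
The plan is to first establish mutual orthogonality of the projections $P_k$, then identify each product $P_k\Pi_\psi$ as a rank-one projection $P_{\psi_k}$ for every $\psi\in\mathcal{H}\backslash\{0\}$, then extract the pure maps $T_k$ and invoke the pure-state representation theorem used in Proposition \ref{pro3-pure}, and finally use the dichotomy (\ref{eq5-w1})--(\ref{eq5-w2}) to pin down the unitary versus antiunitary type of each $U_k$.

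For step one I would apply Eq.~(\ref{eq5-w3}) directly: for any $\psi,\vartheta\not\perp\varphi^0$, $\langle\vartheta_\ell\,|\,\psi_k\rangle=0$ when $k\neq\ell$, so $P_{\vartheta_\ell}\perp P_{\psi_k}$. Taking the supremum first over $\vartheta$ and then over $\psi$ (the set of projections orthogonal to a fixed projection is closed under suprema) yields $P_k\perp P_\ell$ for $k\neq\ell$. For $\psi\not\perp\varphi^0$ the vectors $\{\psi_k\}_{k=1}^m$ form an \textsc{onb} of $\Pi_\psi\mathcal{H}$, so $\Pi_\psi=\sum_k P_{\psi_k}$, and mutual orthogonality of the $P_\ell$ immediately yields $P_k\Pi_\psi=P_{\psi_k}=\Pi_\psi P_k$, which is (\ref{eq5-pk2}) in this case. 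To extend to $\psi\perp\varphi^0$ I would argue by continuity: pick unit vectors $\psi^{(n)}\not\perp\varphi^0$ with $\psi^{(n)}\to\psi$; trace-norm continuity of $T$ gives $\Pi_{\psi^{(n)}}\to\Pi_\psi$, whence $P_{\psi_k^{(n)}}=P_k\Pi_{\psi^{(n)}}\to P_k\Pi_\psi$ in trace norm. A trace-norm limit of rank-one projections is again a rank-one projection (idempotence, self-adjointness, and trace $1$ all pass to the limit), so $P_k\Pi_\psi=P_{\psi_k}$ for a unique ray $\psi_k\in\Pi_\psi\mathcal{H}\cap P_k\tilde{\mathcal{H}}$, and self-adjointness of $P_{\psi_k}$ forces $\Pi_\psi P_k=P_k\Pi_\psi$. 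The identities (\ref{eq5-pk3}) and (\ref{eq5-pk4}) then follow by combining $P_{\psi_k}\le P_k$ with $P_{\psi_k}\le\Pi_\psi$ and passing to suprema.

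The decomposition (\ref{eq5-tk1})--(\ref{eq5-tk2}) is then nearly automatic. From $\Pi_\psi=\sum_k P_k\Pi_\psi P_k$ (the $P_k$ commute with $\Pi_\psi$ and are mutually orthogonal) one obtains $T(P_\psi)=\sum_k P_k T(P_\psi)P_k$, which extends by linearity and trace-norm continuity of $T$ to all $\rho\in V$; setting $T_k(\rho):=m\,P_k T(\rho)P_k$ yields $T=\sum_k\tfrac1m T_k$. Each $T_k$ is positive, and a short dual computation on pure states gives $T^*(P_k)=\tfrac1m I_\mathcal{H}$, so $T_k$ is trace-preserving; since $T_k(P_\psi)=P_{\psi_k}$, each $T_k$ is a pure stochastic map. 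Theorem 2.3.1 of \cite{Dav}, applied exactly as in Proposition \ref{pro3-pure}, then produces a linear or antilinear $U_k:\mathcal{H}\to\tilde{\mathcal{H}}$ with $T_k(\rho)=U_k\rho U_k^*$, together with $U_k^*U_k=I_\mathcal{H}$ (from trace preservation) and $U_kU_k^*=P_k$ (from (\ref{eq5-pk3})).

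The main obstacle is the last assertion, that every $U_k$ is linear when $m=m^{(L)}$ and every $U_k$ is antilinear when $m=m^{(A)}$. To handle this I would expand $\Pi_\varphi\Pi_\vartheta\Pi_\psi\Pi_\varphi=\sum_k P_{U_k\varphi}P_{U_k\vartheta}P_{U_k\psi}P_{U_k\varphi}$, the cross-terms vanishing by mutual orthogonality of the $P_k$, and evaluate each summand as $\alpha_k P_{U_k\varphi}$ with $\alpha_k=a^\varphi_{\vartheta\psi}$ when $U_k$ is linear and $\alpha_k=\overline{a^\varphi_{\vartheta\psi}}$ when $U_k$ is antilinear. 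Summing and matching against the defining relation (\ref{eq5-w1}) (resp.\ (\ref{eq5-w2})) of the case at hand, and choosing $\varphi,\psi,\vartheta$ so that $a^\varphi_{\vartheta\psi}\notin\mathbb{R}$ (possible by the argument used in Step~2 of the proof of Proposition \ref{pro5-mix2}), then forces every $U_k$ to be of the common type dictated by the case.
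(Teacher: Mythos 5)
Your proof is correct and follows essentially the same route as the paper: orthogonality of the $P_k$ from Eq.~(\ref{eq5-w3}), a continuity argument to extend (\ref{eq5-pk2}) to $\psi\perp\varphi^0$, the block decomposition $T=\sum_k\frac 1m T_k$ with Theorem 2.3.1 of \cite{Dav} supplying the $U_k$, and the comparison of $\sum_k\alpha_k T_k(P_{\left.\varphi\right.})$ against (\ref{eq5-w1})--(\ref{eq5-w2}) at a non-real $a_{\vartheta\psi}^{\left.\varphi\right.}$ to force a common (anti)unitary type. The only cosmetic difference is in the orthogonal case: the paper builds an explicit Cauchy sequence of basis vectors $\vartheta_k^{(n)}\sim\Pi_{\vartheta^{(n)}}\varphi_k$ and takes its limit, whereas you pass directly to the trace-norm limit of the rank-one projections $P_k\Pi_{\psi^{(n)}}$ and use that the set of rank-one projections is trace-norm closed; both work.
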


\begin{proof}
The orthogonality of the $P_k$ is a direct consequence of Eq.\ (\ref{eq5-w3}%
). Let $\psi $ be a nonzero vector. If $\psi \not{\perp}\varphi ^0$, then
take $\psi _k\sim \Pi _{\left. \psi \right. }\varphi _k$. We have $P_{\psi
_k}\le P_k$ for all $k$, thus $P_{\psi _k}P_k=P_{\psi _k}$ and, due to the
orthogonality of the $P_k$, $P_{\psi _\ell }P_k=0$ if $\ell \ne k$. We
obtain $P_k\Pi _{\left. \psi \right. }=P_k\sum_\ell P_{\psi _\ell
}=P_kP_{\psi _k}=P_{\psi _k}$. This proves (\ref{eq5-pk2}) for the case $%
\psi \not{\perp}\varphi ^0$.

Consider the case $\psi \perp \varphi ^0$. We assume that $\varphi ^0,\psi $
are unit vectors. Define a sequence of unit vectors $\vartheta
^{(n)}:=\left( \frac 1n\right) ^{1/2}\varphi ^0+\left( 1-\frac 1n\right)
^{1/2}\psi \to \psi $ as $n\to \infty $. We can define an \textsc{onb} of $%
\Pi _{\vartheta ^{(n)}}{\mathcal{H}}$ as in (\ref{eq5-obas}), $\vartheta
_k^{(n)}\sim \Pi _{\vartheta ^{(n)}}\varphi _k$. Using again Eq.\ (\ref
{eq5-w1}),(\ref{eq5-w2}), it is easy to verify that for each $k$, the
vectors $\vartheta _k^{(n)}$ form a Cauchy sequence. Let $\psi _k$ denote
the limiting unit vector. We show that $P_k\Pi _{\left. \psi \right.
}=P_{\psi _k}$. (This ensures the uniqueness of the ray $P_{\psi _k}$.)

Let $\xi \in {\mathcal{H}}$ be any vector. One estimates 
\begin{eqnarray*}
\left\| \left( P_{\vartheta _k^{(n)}}-P_{\psi _k}\right) \xi \right\| \ &\le
&||{\vartheta _k^{(n)}-\psi _k}\,||\ |\langle {\vartheta _k^{(n)}}\,|\,\xi
\rangle |+\,\left\| \psi _k\right\| \ |\langle {\vartheta _k^{(n)}-\psi _k}%
\,|\,\xi \rangle | \\
&\le &\ 2\,||{\vartheta _k^{(n)}-\psi _k}||\,\left\| \xi \right\| .
\end{eqnarray*}
We conclude that $\big\Vert{P_{\vartheta _k^{(n)}}-P_{\psi _k}}\big\Vert\to
0 $ and so $\big\Vert{P_{\vartheta _k^{(n)}}-P_{\psi _k}}\big\Vert_1\to 0$.
We also have $\left\| P_{\vartheta ^{(n)}}-P_{\left. \psi \right. }\right\|
_1\to 0$ and therefore, due to the continuity of $T$, $\left\| \Pi
_{\vartheta ^{(n)}}-\Pi _{\left. \psi \right. }\right\| _1\to 0$ and also $%
\left\| P_k\Pi _{\vartheta ^{(n)}}-P_k\Pi _{\left. \psi \right. }\right\|
_1\to 0$. But $P_k\Pi _{\vartheta ^{(n)}}=P_{\vartheta _k^{(n)}}\to P_{\psi
_k}$, and so $P_k\Pi _{\left. \psi \right. }=P_{\psi _k}$. This proves (\ref
{eq5-pk2}).

The relation (\ref{eq5-pk3}) is an immediate consequence of the fact just
demonstrated that for all $\psi $, $P_{\psi _k}\le P_k$. Finally we have $%
\left( \sum_kP_k\right) \Pi _{\left. \varphi \right. }=\Pi _{\left. \varphi
\right. }$ and so $\Pi _{\left. \varphi \right. }\le \sum_kP_k$. Since all $%
P_{\psi _k}\le \Pi _{\left. \psi \right. }$, the converse ordering holds as
well. This proves (\ref{eq5-pk4}).

The maps $T_k$ are clearly linear and positive. We have $T_k(P_{\left. \xi
\right. })=P_k\Pi _{\left. \xi \right. }=P_{\xi _k}$ for all vectors $\xi
\ne 0$, so the $T_k$ are trace preserving and thus pure stochastic maps. Let 
$\psi ,\xi \in {\mathcal{H}}\backslash \{0\}$, with $\psi \perp \xi $. Then $%
\Pi _{\left. \psi \right. }\perp \Pi _{\left. \xi \right. }$ [since $T$ is
orthogonality preserving], and so $P_{\psi _k}\perp P_{\xi _k}$, that is, $%
T_k(P_{\left. \psi \right. })\perp T_k(P_{\left. \xi \right. })$. Thus the $%
T_k$ are orthogonality preserving and therefore (pure stochastic)
isometries. Eq.\ (\ref{eq5-tk1}) then follows by straightforward application
of (\ref{eq5-tk2}). The existence of unitary or antiunitary maps $U_k$
satisfying (\ref{eq5-tk3}) is a consequence of Theorem 2.3.1 of \cite{Dav}
[cf. Proposition \ref{pro3-pure} above].

If $U_k$ is unitary, we find, using $U_k^{*}U_k=I$: 
\begin{equation*}
T_k(P_{\left. \varphi \right. })T_k(P_{\left. \vartheta \right.
})T_k(P_{\left. \psi \right. })T_k(P_{\left. \varphi \right. })\ =\
U_k(P_{\left. \varphi \right. }P_{\left. \vartheta \right. }P_{\left. \psi
\right. }P_{\left. \varphi \right. })\,U_k^{*}\ =\ a_{\vartheta \psi
}^{\left. \varphi \right. }\,T_k(P_{\left. \varphi \right. }).
\end{equation*}
Similarly, if $U_k$ is antiunitary, we find: 
\begin{equation*}
T_k(P_{\left. \varphi \right. })T_k(P_{\left. \vartheta \right.
})T_k(P_{\left. \psi \right. })T_k(P_{\left. \varphi \right. })\ =\
U_k(P_{\left. \varphi \right. }P_{\left. \vartheta \right. }P_{\left. \psi
\right. }P_{\left. \varphi \right. })\,U_k^{*}\ =\ \overline{a_{\vartheta
\psi }^{\left. \varphi \right. }}\,T_k(P_{\left. \varphi \right. }).
\end{equation*}
Observing that $T_k(P_{\left. \xi \right. })\perp T_\ell (P_{\left. \chi
\right. })$ ($k\ne \ell $) and $\Pi _{\left. \varphi \right. }=\sum
T_k(P_{\left. \varphi \right. })$, etc., we obtain: 
\begin{equation*}
{\mathcal{W}}_{\vartheta \psi }^{\left. \varphi \right. }\ =\ \Pi _{\left.
\varphi \right. }\Pi _{\left. \vartheta \right. }\Pi _{\left. \psi \right.
}\Pi _{\left. \varphi \right. }\ =\ \sum_kT_k(P_{\left. \varphi \right.
})T_k(P_{\left. \vartheta \right. })T_k(P_{\left. \psi \right.
})T_k(P_{\left. \varphi \right. })\ =\ \sum_ka_kT_k(P_{\left. \varphi
\right. }),
\end{equation*}
where $a_k={a_{\vartheta \psi }^{\left. \varphi \right. }}$ or $a_k=%
\overline{a_{\vartheta \psi }^{\left. \varphi \right. }}$ according to
whether $U_k$ is unitary or antiunitary. Comparing this with Eqs.\ (\ref
{eq5-w1}),(\ref{eq5-w2}), it is seen that all the $U_k$ must either be
simultaneously unitary or antiunitary in order to reproduce the right hand
sides of these equations.

From $T_k(P_{\left. \varphi \right. })\cdot T_\ell (P_{\left. \psi \right.
})=0$ for $k\ne \ell $, $\varphi ,\psi \in {\mathcal{H}}\backslash \{0\}$, it
follows that $U_k^{*}U_\ell =\delta _{k\ell }I_{{\mathcal{H}}}$. Also, it is
obvious that the projection $U_kU_k^{*}$ has the same range as $P_k$, and so 
$U_kU_k^{*}=P_k$.%
\end{proof}

\begin{Rem} The stochastic isometries $T^{(L)}$, $T^{(A)}$
of Proposition \ref{pro5-mix3}, Eq.\ (\ref{eq5-la6}), are convex
combinations of pure stochastic isometries generated by linear and
antilinear isometries of ${\mathcal{H}}$, respectively. This explains the use
of the superscripts $(L),(A)$ throughout this section.
\end{Rem}

\noindent \textbf{Proofs of Proposition \ref{pro3-mix} and Theorem \ref
{th3-is}.} Proposition \ref{pro3-mix} is a direct consequence of the
combination of Propositions \ref{pro5-mix3} and \ref{pro5-mix4}. Theorem \ref
{th3-is} is a direct result of the combination of Propositions \ref{pro3-dec}
and \ref{pro3-mix}.%
{\ \ \ $\square$}

\noindent \textbf{Proof of Theorem \ref{th3-cp}}{.} Let $T=\sum_kw_kT_k$ be
a stochastic isometry of the form of Eq.\ (\ref{eq3-is}), the $T_k$ being
pure stochastic isometries generated by unitary or antiunitary maps $U_k:%
{\mathcal{H}}\to \tilde{{\mathcal{H}}}_k$. For $U_k$ unitary (antiunitary), $T_k$
is completely positive (not completely positive). The sum of completely
positive maps is completely positive, and this extends to $\sigma $-convex
combinations. Hence $T$ is completely positive whenever all $U_k$ are
unitary.

Conversely, let $T$ be completely positive. Suppose some $U_k$ is
antiunitary. Hence there exist unit vectors $\Psi \in \tilde{{\mathcal{H}}}%
_k\otimes {\Bbb{C}}^n$, $\Theta \in {\mathcal{H}}\otimes {\Bbb{C}}^n$ for some $%
n\in {\Bbb{N}}$ such that 
\begin{equation}
\left\langle \,\Psi \,|\,T\otimes \imath (P_{\Theta} )\Psi \,\right\rangle
=\left\langle \,\Psi \,|\,T_k\otimes \imath (P_{\Theta} )\Psi
\,\right\rangle <0.  \label{eq5-cp}
\end{equation}
(Here $\imath $ denotes the identity map on 
${\mathcal{H}}_1({\Bbb{C}}^n)$.) The
equality is due to the choice of $\Psi $ in the subspace 
$\tilde{{\mathcal{H}}}_k\otimes {\Bbb{C}}^n$ and the fact that 
the ranges of the $U_\ell $ are
mutually orthogonal. Eq.\ (\ref{eq5-cp}) contradicts the complete positivity
of $T$.%
{\ \ \ $\square$}

\section{Concluding Remarks. Some Physical Applications}

Stochastic isometries give rise to a variety of associated maps some of
which will be briefly described here. Any linear bounded map 
$T:{\mathcal{B}}%
_1({\mathcal{H}})_s\to {\mathcal{B}}_1(\tilde{{\mathcal{H}}})_s$ has a unique
extension to a linear bounded map on ${\mathcal{B}}_1({\mathcal{H}})$. This
extension of $T$ will be denoted $\widehat{T}$. A stochastic map $T$ has a
unique extension to a linear, trace preserving map on ${\mathcal{B}}_1(%
{\mathcal{H}})$.

The extension $\widehat{T}$ to ${\mathcal{B}}_1({\mathcal{H}})$ of a pure
stochastic isometry $T$ generated by a unitary or antiunitary map $U:%
{\mathcal{H}}\to \tilde{{\mathcal{H}}}$ is given as follows: for $\rho \in 
{\mathcal{B}}_1({\mathcal{H}})$, 
\begin{eqnarray*}
{{\widehat{T}}(\rho )\,} &&{=\,U\rho \,U^{*}\ \ \ \ \text{if}\ U\ \text{is
unitary};} \\
{{\widehat{T}}(\rho )\,} &\,&{=\,U\rho ^{*}U^{*}\ \ \ \text{if}\ U\ \text{is
antiunitary}.}
\end{eqnarray*}

\begin{Prop}
\label{pro6-ext}The extension $\widehat{T}$ :${\mathcal{B}}_1({\mathcal{H}}%
)\rightarrow {\mathcal{B}}_1(\tilde{{\mathcal{H}}})$ of a 
stochastic isometry $T:%
{\mathcal{B}}_1({\mathcal{H}})_s\to 
{\mathcal{B}}_1(\tilde{{\mathcal{H}}})_s$ is a
trace preserving isometry. Conversely, the restriction to ${\mathcal{B}}_1(%
{\mathcal{H}})_s$ of a trace preserving isometry on ${\mathcal{B}}_1({\mathcal{H}})
$ is a stochastic isometry.
\end{Prop}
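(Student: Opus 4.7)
My plan is to exploit Theorem \ref{th3-is} to reduce the forward direction to pure summands, and a polar-decomposition argument for the converse. Writing $T = \sum_{k=1}^N w_k T_k$ with $T_k(\rho) = U_k\rho U_k^*$ for unitary or antiunitary $U_k:\mathcal{H}\to\tilde{\mathcal{H}}_k$ having mutually orthogonal ranges, I define $\widehat{T}$ on $\mathcal{B}_1(\mathcal{H})$ by complex linearity from the decomposition $\rho = \rho_1 + i\rho_2$ into self-adjoint parts. A direct calculation, using the conjugate-linearity in $\rho$ of $\rho\mapsto U_k\rho U_k^*$ when $U_k$ is antilinear, shows that the $k$-th summand takes the closed form $\widehat{T_k}(\rho) = U_k\rho U_k^*$ in the unitary case and $\widehat{T_k}(\rho) = U_k\rho^*U_k^*$ in the antiunitary case, matching the formulas announced just before the proposition. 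Trace preservation of $\widehat{T}$ then follows at once from complex linearity of both $\widehat{T}$ and the trace, together with trace preservation of $T$ on the self-adjoint part.

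For the isometric property I would first verify that each summand $\widehat{T_k}$ preserves trace norm on all of $\mathcal{B}_1(\mathcal{H})$. In the unitary case, $(U_k\rho U_k^*)^*(U_k\rho U_k^*) = U_k\rho^*\rho U_k^*$ yields $|\widehat{T_k}(\rho)| = U_k|\rho|U_k^*$ and hence $\|\widehat{T_k}(\rho)\|_1 = \text{tr}[U_k|\rho|U_k^*] = \text{tr}[|\rho|]$. In the antiunitary case I would reduce to the unitary one by factoring $U_k = U_k' J$, where $J$ is the conjugation associated with a fixed orthonormal basis of $\mathcal{H}$ and $U_k'$ is unitary; then $\rho\mapsto J\rho^*J^{-1}$ is nothing but transposition of the matrix representation and is manifestly trace-norm preserving, while $\rho\mapsto U_k'\rho (U_k')^*$ is trace-norm preserving by the unitary case. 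Mutual orthogonality of the ranges $\tilde{\mathcal{H}}_k$ then gives a block-diagonal structure in which $\widehat{T}(\rho)^*\widehat{T}(\rho) = \sum_k w_k^2\,\widehat{T_k}(\rho)^*\widehat{T_k}(\rho)$ and hence $|\widehat{T}(\rho)| = \sum_k w_k\,|\widehat{T_k}(\rho)|$, so $\|\widehat{T}(\rho)\|_1 = \sum_k w_k\|\rho\|_1 = \|\rho\|_1$. Trace-norm convergence of the sum when $N = \infty$ is controlled exactly as in Proposition \ref{pro3-is}.

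For the converse, given $S:\mathcal{B}_1(\mathcal{H})\to\mathcal{B}_1(\tilde{\mathcal{H}})$ trace preserving and isometric, the crux is to show $S$ preserves self-adjointness and positivity. The underlying lemma is that $\text{tr}[A] = \|A\|_1$ forces $A\geq 0$: taking the polar decomposition $A = V|A|$ and the spectral decomposition $|A| = \sum_k\lambda_k P_{e_k}$ with $\lambda_k > 0$, one has $\text{tr}[A] = \sum_k\lambda_k\langle e_k|Ve_k\rangle$ and $\|A\|_1 = \sum_k\lambda_k$, and since $|\langle e_k|Ve_k\rangle|\leq 1$ equality forces $Ve_k = e_k$ on the range of $|A|$, whence $A = |A|\geq 0$. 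Applied to $S(\rho)$ for $\rho\in\mathcal{S}$, where $\text{tr}[S(\rho)] = 1 = \|S(\rho)\|_1$, this places $S(\rho)$ in $\tilde{\mathcal{S}}$; scaling and the Jordan decomposition $\rho = \rho_+ - \rho_-$ then extend the conclusion to all self-adjoint $\rho$, showing $S$ maps $V^+$ into $\tilde V^+$ and $V$ into $\tilde V$. The restriction is thus trace preserving, positive, and isometric, so it is a stochastic isometry by Lemma \ref{le2-stoch}. The technical wrinkle I most expect is the antilinear summand isometry, specifically the verification that $U\rho^*U^*$ really realises the complex-linear extension and preserves trace norm.
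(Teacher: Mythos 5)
Your proof is correct and follows essentially the same route as the paper: the forward direction rests on the decomposition of Theorem \ref{th3-is} together with the mutual orthogonality of the ranges $\tilde{\mathcal{H}}_k$, which yields $|\widehat{T}(\rho)|=\sum_k w_k\,|\widehat{T_k}(\rho)|$ and hence preservation of the trace norm, while the converse reduces to Lemma \ref{le2-stoch}. Your auxiliary lemma that $\mathrm{tr}[A]=\|A\|_1$ forces $A\ge 0$ for an arbitrary trace class operator $A$ is a welcome refinement, since it makes explicit the step --- left implicit in the paper's ``$T$ is clearly trace preserving and isometric'' --- that the restriction of a trace preserving isometry on ${\mathcal{B}}_1({\mathcal{H}})$ really does map ${\mathcal{B}}_1({\mathcal{H}})_s$ into ${\mathcal{B}}_1(\tilde{{\mathcal{H}}})_s$, so that Lemma \ref{le2-stoch} applies.
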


\begin{proof}
Let $T$ be a stochastic isometry, $\widehat{T}$ its extension. That $%
\widehat{T}$ is trace preserving follows trivially from the corresponding
property or $T$. Using Eq.\ (\ref{eq3-is}), one finds $|T(\rho )|=\sum_kw_k%
\big|U_k\rho \,U_k^{*}\big|$ for $\rho \in {\mathcal{B}}_1({\mathcal{H}})$, and
so 
\begin{equation*}
\big\Vert{{\widehat{T}}(\rho )}\big\Vert_1\ =\ \text{tr}\bigl[\,|T(\rho )|\,%
\bigr]\ =\ \sum_kw_k\,\left\| \rho \right\| _1\ =\ \left\| \rho \right\| _1.
\end{equation*}
Conversely, assume $\widehat{T}$ is a trace preserving isometry on 
$\mathcal{B}_1({\mathcal{H}})$, and let $T$ denote its restriction to 
${\mathcal{B}}_1({\mathcal{H}})_s$. 
$T$ is clearly trace preserving and isometric and
therefore, by Lemma \ref{le2-stoch}, a stochastic isometry.%
\end{proof}

\begin{Prop}
\label{pro6-his}Let $T=\sum_kw_kT_k:V\rightarrow \tilde{V}$ be a stochastic
isometry, with $T_k$ pure stochastic isometries as in Eq.\ (\ref{eq3-is}).
Then 
\begin{equation*}
\tilde{T}(\tau )\ :=\ \sum_k\sqrt{w_k}\,T_k(\tau ),\ \ \tau \in 
{\mathcal{B}}_2({\mathcal{H}}),
\end{equation*}
defines an isometry $\tilde{T}$ of the Hilbert Schmidt class 
${\mathcal{B}}_2({\mathcal{H}})$.
\end{Prop}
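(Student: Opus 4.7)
The plan is to expand the Hilbert--Schmidt norm squared $\|\tilde{T}(\tau)\|_2^2 = \text{tr}\bigl[\tilde{T}(\tau)^*\tilde{T}(\tau)\bigr]$ and exploit the same block-orthogonality mechanism that made the $T_k$ mutually orthogonality preserving in trace norm in the proof of Proposition \ref{pro3-is}. Specifically I would show that (a) each $T_k$, acting on $\mathcal{B}_2(\mathcal{H})$ by the formulas $T_k(\tau) = U_k\tau U_k^*$ (unitary case) and $T_k(\tau) = U_k\tau^*U_k^*$ (antiunitary case) from Proposition \ref{pro6-ext}---which make sense on all of $\mathcal{B}(\mathcal{H})$ and in particular restrict to $\mathcal{B}_2(\mathcal{H})\to\mathcal{B}_2(\tilde{\mathcal{H}})$---preserves the Hilbert--Schmidt norm of $\tau$; and (b) for $k\neq\ell$ the cross product $T_\ell(\tau)^* T_k(\tau)$ vanishes identically as an operator on $\tilde{\mathcal{H}}$.

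Both ingredients hinge on the same two identities: $U_k^*U_k = I_{\mathcal{H}}$, which holds in both cases since $U_k$ is isometric onto $\tilde{\mathcal{H}}_k$, and $U_\ell^*U_k = 0$ for $k\neq\ell$ since the range of $U_k$ lies in $\tilde{\mathcal{H}}_\ell^\perp$. For (a), the unitary case follows in one line from cyclicity of the trace; the antiunitary case reduces to $\|\tau\|_2 = \|\tau^*\|_2$ after evaluating $\|T_k(\tau)\|_2^2 = \sum_m \|T_k(\tau)f_m\|^2$ in an orthonormal basis of $\tilde{\mathcal{H}}$ of the form $\{U_ke_n\}$ completed by any ONB of $\tilde{\mathcal{H}}_k^\perp$, on which $T_k(\tau)$ vanishes. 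For (b), $T_k(\tau)$ has range in $\tilde{\mathcal{H}}_k$ by construction, while $T_\ell(\tau)^*$ annihilates $\tilde{\mathcal{H}}_\ell^\perp\supseteq\tilde{\mathcal{H}}_k$ (because $\text{ran}\,T_\ell(\tau)\subseteq\tilde{\mathcal{H}}_\ell$ forces $\ker T_\ell(\tau)^*\supseteq\tilde{\mathcal{H}}_\ell^\perp$), so the composition is zero.

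Combining (a) and (b) gives
\begin{equation*}
\|\tilde{T}(\tau)\|_2^2 \;=\; \sum_{k,\ell}\sqrt{w_kw_\ell}\,\text{tr}\bigl[T_\ell(\tau)^*T_k(\tau)\bigr] \;=\; \sum_k w_k\,\|T_k(\tau)\|_2^2 \;=\; \Bigl(\sum_k w_k\Bigr)\|\tau\|_2^2 \;=\; \|\tau\|_2^2.
\end{equation*}
For $N=\infty$ the identical Pythagorean computation applied to tails shows that the partial sums $\sum_{k=1}^n\sqrt{w_k}\,T_k(\tau)$ are Cauchy in $\mathcal{B}_2(\tilde{\mathcal{H}})$, so $\tilde{T}(\tau)$ is well defined as a Hilbert--Schmidt limit. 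I do not expect a substantive obstacle; the only delicate point is keeping adjoint conventions straight in the antiunitary case, but the block structure makes the verification essentially automatic.
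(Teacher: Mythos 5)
Your proposal is correct and follows essentially the same route as the paper: expand $\|\tilde{T}(\tau)\|_2^2$, kill the cross terms via $T_\ell(\tau)^*T_k(\tau)=0$ for $k\ne\ell$ (orthogonality of the ranges $\tilde{\mathcal{H}}_k$), and reduce each diagonal term to $\|\tau\|_2^2$ using $U_k^*U_k=I$. Your explicit treatment of the antiunitary case and of convergence for $N=\infty$ only fills in details the paper leaves implicit.
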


\begin{proof}
For $\tau \in {\mathcal{B}}_2({\mathcal{H}})$, one easily verifies that 
\begin{equation*}
\tilde{T}(\tau )^{*}\tilde{T}(\tau )\ =\ \sum_kw_kT_k(\tau )^{*}T_k(\tau )\
\in {\mathcal{B}}_1({\mathcal{H}})
\end{equation*}
(where we used the orthogonality $T_k(\tau )^{*}T_\ell (\tau )=0$, $k\ne
\ell $), and so 
\begin{equation*}
\big\Vert{R(\tau )}\big\Vert_2^2\ =\ \sum_kw_k\text{tr}\bigl[U_k\tau
^{*}\tau \,U_k^{*}\bigr]\ =\ \big\Vert{\tau }\big\Vert_2^2.
\end{equation*}
Therefore, $\Vert {R(\tau )}\Vert _2=\Vert {\tau }\Vert _2$.%
\end{proof}

\begin{Rem} Let $T=\sum_kw_kT_k$ be a stochastic
isometry, with pure stochastic isometric components $T_k$, $\widehat{T}%
=\sum_kw_k{\widehat{T}}_k$ its extension to 
${\mathcal{B}}_1({\mathcal{H}})$. It
follows that $^{\mathrm{o}}T:=\sum_k{\widehat{T}}_k$ is an isometry on $%
{\mathcal{B}}({\mathcal{H}})$ and a Jordan $*$-isomorphism, and that its
restriction to the orthocomplemented lattice of projections of 
${\mathcal{H}}$, 
$\mathcal{P}({\mathcal{H}})$, is a lattice- and ortho-isomorphism from $%
\mathcal{P}({\mathcal{H}})$ onto the lattice 
\begin{equation*}
{\mathcal{P}}_T:=\left\{ \Pi \left( T(\rho )\right) \,:\,\rho \in 
{\mathcal{B}}_1({\mathcal{H}})_s\right\} .
\end{equation*}
The map $^{\mathrm{o}}T$ decomposes in a unique way into 
\begin{eqnarray*}
{^o}T{\ } &=&{\ {^o}}T{^{(L)}\,+\,{^o}}T{^{(A)},} \\
{^o}T{^{(L)}(a)\ } &=&{\ P^{(L)}\,{^o}}T{(a)\,P^{(L)},\ \ {^o}}T{^{(A)}(a)\
=\ P^{(A)}\,{^o}}T{(a)\,P^{(A)},\ \ a\in {\mathcal{B}}({\mathcal{H}})},
\end{eqnarray*}
and ${^oT}^{(L)}$ [${^oT}^{(A)}$] is a $*$-isomorphism [$*$%
-anti-isomorphism] of the $C^{*}$-algebra ${\mathcal{B}}({\mathcal{H}})$ onto
the subalgebras ${^oT}^{(L)}({\mathcal{B}}({\mathcal{H}}))$ [${^oT}^{(A)}(%
{\mathcal{B}}({\mathcal{H}}))$]. This is an illustration of the result of
Kadison \cite{Kad2} cited in Section 1.

Further we note that the projections $P^{(L)},P^{(A)}$ are in the centre of
the von Neumann algebra generated by $^oT\left( {\mathcal{B}}({\mathcal{H}}%
)\right) $. In physical terms, they induce a superselection rule on the
state space $T({{\mathcal{B}}}_1({{\mathcal{H}}})_s)$.
\end{Rem}

\begin{Rem}
Consider a completely positive $m$-mixing
stochastic isometry, $T=\frac 1m\sum_{k=1}^mT_k$, with $T_k(\rho )=U_k\rho
\,U_k^{*}$, all $U_k$ unitary (hence $m^{(L)}=m)$. The maps $U_k$ are not
uniquely determined by $T$, even apart from a phase factor, if $m\ge 2$.
This is apparent from the construction of the pure isometries $T_k$ in
Proposition \ref{pro5-mix4}, which was based on the choice of some nonzero
vector $\varphi ^0\in {\mathcal{H}}$ and an arbitrary \textsc{onb} $\varphi _k$
of $\Pi _{\varphi ^0}{\mathcal{H}}$. Accordingly, it can be shown that if $%
T(\rho )=\frac 1m\sum_kU_k\rho \,U_k^{*}=\frac 1m\sum_kV_k\rho \,V_k^{*}$,
where $V_k$ is another set of unitary maps, then 
\begin{equation*}
V_k\ =\ \sum_\ell \gamma _{k\ell }\,U_\ell ,
\end{equation*}
with $(\gamma _{k\ell })$ a unitary $m\times m$ matrix. It is easily
verified that this condition ensures that the $V_k$ define the same mixing
stochastic isometry as the $U_k$. The totality of all projections $%
Q_k=V_kV_k^{*}$ thus obtained from a given set $U_k$ using all unitary
matrices $(\gamma _{k\ell })$ commute with all elements of $T\left( \mathcal{%
B}_1({\mathcal{H}})\right) $ and ${^oT}\left( {\mathcal{B}}({\mathcal{H}})\right) $%
.
\end{Rem}

\begin{Rem}
The relation $\Pi _{\left. \varphi \right.
}\Pi _{\left. \psi \right. }\Pi _{\left. \varphi \right. }=\alpha \Pi
_{\left. \varphi \right. }$, valid for $m$-mixing stochastic isometries
shows that the set of projections $\Pi _{\left. \varphi \right. }$ bears
some fundamental similarities with the set of one dimensional projections.
In fact pairs of projections satisfying such a relation (with nonzero factor 
$\alpha $) are called \textsl{isoclinic}; their geometric significance has
been studied by von Neumann \cite{Mae}. Note that also the projections $%
P_k,Q_\ell $ discussed in Remark 6.2 are isoclinic.
\end{Rem}

A physically interesting feature of a stochastic isometry $T$ lies in the
fact that its inverse on the range $T(V)$ can be extended to a stochastic
map on $V$.

\begin{Thm}
\label{th6-inv}Let $T:V\to \tilde{V}$ be a stochastic isometry. The inverse $%
T^{-1}:T(V)\to V$ admits an extension to a stochastic map on $\tilde{V}$ as
follows. For $T$ expressed in the form of Eq. (\ref{eq3-is}), let $P_0$ be
the projection onto $\tilde{{\mathcal{H}}}_0$. The following defines a
stochastic map on $\tilde{V}$: 
\begin{equation*}
S(\rho )=\sum_{k=1}^NU_k^{*}\rho \,U_k\,+\,P_0\rho P_0,\ \ \rho \in \tilde{V}%
.
\end{equation*}
Then $T^{-1}=S|_{_{T(V)}}$.
\end{Thm}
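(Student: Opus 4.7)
\noindent The strategy is to verify in turn that $S$ is a well-defined, linear, positive, trace-preserving map on $\tilde V$, and then to compute $S\circ T$ directly and recognize it as $\mathrm{id}_V$; since $T$ is injective (being isometric), this forces $S|_{T(V)}=T^{-1}$. Throughout I would write $P_k:=U_kU_k^*$ for the projection onto $\tilde{\mathcal H}_k$ (whether $U_k$ is linear or antilinear), so that $\{P_0\}\cup\{P_k\}_{k\ge 1}$ is a complete family of mutually orthogonal projections on $\tilde{\mathcal H}$ satisfying $U_k=P_kU_k$ and $U_k^*U_k=I_{\mathcal H}$.

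The first task is trace-norm convergence of $\sum_kU_k^*\rho\,U_k$. From $U_k=P_kU_k$ one has $\|U_k^*\rho U_k\|_1\le\|P_k\rho P_k\|_1$, and splitting $\rho=\rho_+-\rho_-$ gives $\sum_k\|P_k\rho_\pm P_k\|_1\le\mathrm{tr}[\rho_\pm]$, so $\sum_k\|U_k^*\rho U_k\|_1\le\|\rho\|_1$; linearity of $S$ is then immediate. For positivity, on $\rho\in V^+$ I would compute $\langle\varphi\,|\,U_k^*\rho U_k\varphi\rangle=\langle U_k\varphi\,|\,\rho U_k\varphi\rangle\ge 0$ in the linear case and $=\overline{\langle U_k\varphi\,|\,\rho U_k\varphi\rangle}\ge 0$ in the antilinear case, while $P_0\rho P_0\ge 0$ is obvious. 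For trace preservation: when $U_k$ is linear, $\mathrm{tr}[U_k^*\rho U_k]=\mathrm{tr}[\rho U_kU_k^*]=\mathrm{tr}[\rho P_k]$ by cyclicity; when $U_k$ is antilinear, fixing an \textsc{onb} $\{e_n\}$ of $\mathcal H$ makes $\{U_ke_n\}$ an \textsc{onb} of $P_k\tilde{\mathcal H}$, and
\begin{equation*}
\mathrm{tr}[U_k^*\rho U_k]=\sum_n\overline{\langle U_ke_n\,|\,\rho U_ke_n\rangle}=\overline{\mathrm{tr}[P_k\rho P_k]}=\mathrm{tr}[\rho P_k],
\end{equation*}
the last step using $\rho=\rho^*$. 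Summing over $k$ and using $P_0+\sum_kP_k=I_{\tilde{\mathcal H}}$ delivers $\mathrm{tr}[S(\rho)]=\mathrm{tr}[\rho]$, and by Lemma \ref{le2-stoch} $S$ is a stochastic map on $\tilde V$.

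To finish I would compute
\begin{equation*}
S(T(\rho))=\sum_\ell U_\ell^*\Bigl(\sum_k w_kU_k\rho U_k^*\Bigr)U_\ell+P_0\Bigl(\sum_k w_kU_k\rho U_k^*\Bigr)P_0.
\end{equation*}
Orthogonality of the ranges gives $P_0U_k=0$, killing the second summand, and $U_\ell^*U_k=U_\ell^*P_\ell P_kU_k=0$ for $k\ne\ell$, while $U_\ell^*U_\ell=I_{\mathcal H}$. The double sum therefore collapses to $\sum_\ell w_\ell\,U_\ell^*U_\ell\,\rho\,U_\ell^*U_\ell=\rho$; in the antilinear case one simply reads $U_\ell^*(U_\ell\rho U_\ell^*)U_\ell\varphi=U_\ell^*(U_\ell\rho\varphi)=\rho\varphi$ on vectors, using $U_\ell^*U_\ell=I_{\mathcal H}$. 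The only genuine obstacle is bookkeeping, namely keeping the linear and antilinear conventions for $U_k\rho U_k^*$ and $U_k^*\rho U_k$ aligned; once the trivialities $P_0U_k=0$ and $U_\ell^*U_k=\delta_{\ell k}I_{\mathcal H}$ are in place, the verification is a direct computation exploiting orthogonality of the $\tilde{\mathcal H}_k$.
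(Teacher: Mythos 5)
Your proposal is correct and follows essentially the same route as the paper: verify positivity and trace preservation of $S$ (the paper deduces the latter directly from $U_kU_k^{*}=P_k$ and $\sum_{k=0}^NP_k=I_{\tilde{\mathcal H}}$), then compute $S(T(\sigma))=\sigma$ using $U_k^{*}U_\ell=\delta_{k\ell}I_{\mathcal H}$ and $P_0T(\sigma)P_0=0$. The extra care you take with trace-norm convergence and with the antilinear case of the trace identity is detail the paper leaves implicit, not a different argument.
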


\begin{proof}
The positivity of $S$ is obvious. Trace preservation follows from the fact
that $U_kU_k^{*}=P_k$ is the projection onto $\tilde{{\mathcal{H}}}_k$ and $%
\sum_{k=0}^NP_k=I_{\tilde{{\mathcal{H}}}}$.

Let $\rho =T(\sigma )$. Then, since $U_k^{*}U_\ell =\delta _{k\ell }I_{%
{\mathcal{H}}}$ and $P_0\,\rho \,P_0=0$, one has 
\begin{equation*}
S(\rho )=S\left( T(\sigma )\right) =\sum_{\ell =1}^N\sum_{k=1}^Nw_kU_\ell
^{*}U_k\,\sigma \,U_k^{*}U_\ell =\sum_{k=1}^Nw_k\sigma =\sigma =T^{-1}\left(
\rho \right) .
\end{equation*}
Hence $S|_{_{T(V)}}=T^{-1}$.%
\end{proof}

\noindent  This result suggests the possibility that reversible physical
state changes are not necessarily represented by surjective, and hence pure,
stochastic isometries, but that a state change effected by \textsl{any}
stochastic isometry is reversible: there exists a single dynamical map that
sends all final states back to the initial states. This interpretation is
further elaborated in \cite{BuQ}.

Pursuing further the dynamical interpretation of stochastic isometries, it
may be observed that such maps induce a reduction of the symmetries of the
physical system in question. In this sense it can be said that dynamical
maps represented as stochastic isometries describe the formation of \textsl{%
structure} (in a reversible way). According to a theorem due to Wigner, any
symmetry operation \cite{Bar}, defined as an angle preserving map of the set
of rays of ${\mathcal{H}}$ onto itself, is induced by a unitary or antiunitary
map according to Eq.\ (\ref{eq3-pure}). For simplicity, we consider a
completely positive stochastic isometry of the form (\ref{eq3-is}), with all 
$U_k$ unitary. Let $U$ be a unitary symmetry operation. Then we compute: 
\begin{equation*}
T\left( U\rho \,U^{*}\right) \ =\ \sum_{k=1}^Nw_k\left( U_kUU_k^{*}\right)
\,U_k\rho \,U_k^{*}\,\left( U_kU^{*}U_k^{*}\right) \ =\ {\tilde{U}}\,T(\rho
)\,{\tilde{U}}^{*},
\end{equation*}
where 
\begin{equation*}
{\tilde{U}}\ =\ \sum_kU_kU\,U_k^{*}\ =\ {^o}T(U)
\end{equation*}
is unitary. The map $U\longmapsto {\tilde{U}}$ is injective. But to every $U$
there do exist unitary maps $V\ne \tilde{U}$ such that 
\begin{equation}
{\tilde{U}}\,T(\rho )\,{\tilde{U}}^{*}=V\,T(\rho )\,V^{*}  \label{eq7-uni}
\end{equation}
for all $\rho \in {\mathcal{H}}_1({\mathcal{H}})$. In fact, define a unitary map 
$W=\sum_{k=0}^N\lambda _kP_k$, with $P_k=U_kU_k^{*}$, $\lambda _k\in \Bbb{C}$%
, $|\lambda _k|=1$, such that $W\ne I$. Let $V={\tilde{U}}W$. Then (\ref
{eq7-uni}) holds due to the fact that $W$ commutes with all $T(\rho )$. This
shows that on the state space $T\left( {\mathcal{H}}_1({\mathcal{H}})_s\right) $
not all symmetries of ${\mathcal{H}}$ can be distinguished.

There is another interpretation of a stochastic isometry $T$ and its
associated map $^{\mathrm{o}}T$ defined in Remark 6.1. These maps lead to a
physically equivalent description of all states and observables of the given
quantum system in the following sense: for all states $\rho $ and effects $a$%
, the corresponding states $T(\rho )$ and effects ${^o}T(a)$ give the same
probabilities: 
\begin{equation*}
\langle T{(\rho )}\,,\,{^o}T{(a)}\rangle \ =\ \langle {\rho }\,,\,{a}\rangle
.
\end{equation*}
This observation can be elaborated into a general theory of extensions of
the statistical description of a quantum physical system on the basis of
Theorem \ref{th3-is}\cite{Bus}.

\appendix 

\section{Algebraic Proof of Proposition \ref{pro3-mix}}

The decomposition of a mixing stochastic isometry into pure stochastic
isometries can be obtained by application of Kadison's work on isometries of
operator algebras \cite{Kad2}, thereby bypassing the explicit geometric
constructions of Section 5. Here we sketch the required steps.

Let $T:V\rightarrow \tilde{V}$ be an $m$-mixing stochastic isometry. One
first shows that the map ${^o}T:V\rightarrow \tilde{V}$, ${^o}T=m\,T$
extends to a linear, $*$-preserving, norm-bounded map ${^o}\widehat{T}{:}%
{\mathcal{B}}({\mathcal{H}})\rightarrow 
{\mathcal{B}}(\tilde{\mathcal{H}})$. This
map sends projections to projections and is $\sigma $-ortho-additive on the
set of projections.

Following an argument of Wright \cite{Wri}, ${^o}\widehat{T}$ is finally
shown to be a Jordan $*$-homomorpism.

The decomposition of ${^o}\widehat{T}$, and hence of $T$, is then obtained
by application of the arguments of Wright \cite{Wri}, which make use of
Kadison's theorem on isometries \cite{Kad2} in the form presented by Emch (%
\cite{Em}, Theorem 1, p.~ 153).

\end{document}